\documentclass[11pt,oneside]{article}
\usepackage{import}
\usepackage{etoolbox}

\usepackage[font=small,skip=0pt]{caption}

\providebool{DRAFT}
\booltrue{DRAFT}

\usepackage{etoolbox}
\usepackage{xparse}

\makeatletter%
\@ifclassloaded{book}
{%
  \newcommand{\whenbook}[1]{%
    #1%
  }%
}
\makeatother%

\providecommand{\whenbook}[1]{}%

\providebool{bookdraft}%

\NewDocumentEnvironment{When}{m +b}{%
  \providebool{#1}%
  \ifbool{#1}{#2}{}%
}{}%

\NewDocumentEnvironment{Unless}{m +b}{%
  \providebool{#1}%
  \ifbool{#1}{}{#2}%
}{}%

\providebool{DRAFT}
\ifbool{DRAFT}{
  \newcommand{\whendraft}[1]{#1}%
    \newcommand{\unlessdraft}[1]{}%
  }

\NewDocumentEnvironment{Draft}{+b}{%
  \whendraft{#1}%
}{%
}%

\NewDocumentEnvironment{DRAFT}{O{BlueViolet}}{%
  \begin{Draft}%
    \color{#1}%
  }{%
  \end{Draft}
}%

\usepackage[hyphens]{url}

\usepackage[style=alphabetic,natbib=true,maxalphanames=4,minalphanames=3,maxnames=10,doi=false,isbn=false,url=false]{biblatex} %

\AtEveryBibitem{%
\ifentrytype{proceedings}{
    \clearfield{editor}%
    \clearname{editor}%
  }{}
  \ifentrytype{inproceedings}{
      \clearfield{editor}%
      \clearname{editor}%
}{}
}

\usepackage{amsmath}                               %
\usepackage{amsthm}                                %
\usepackage[anchorcolor=blue,colorlinks]{hyperref}%
\usepackage{varioref}
\usepackage[capitalize]{cleveref}

\crefname{invariantsi}{invariant}{invariants}

\usepackage[margin=1in]{geometry}

\usepackage{fancyhdr}

\usepackage[spacing,kerning=true]{microtype}          %
\usepackage[T1]{fontenc}

\usepackage{article} %
\usepackage{math} %
\usepackage{algo} %
\usepackage{wrapfig} %
\usepackage{placeins} %
\usepackage{advdate}%

\usepackage{appendix}

\usepackage{graphicx} %
\usepackage{layout} %
\usepackage{setspace} %
\usepackage{mathtools} %
\usepackage{grffile} %
\usepackage{pdfpages} %
\usepackage{mdframed} %
\usepackage{bm} %
\usepackage{needspace} %
\usepackage{pbox} %
\usepackage{cancel}
\usepackage[normalem]{ulem}

% \usepackage{lmodern}

% \usepackage[full]{textcomp}     %
% \usepackage[osf,sups]{Baskervaldx} % osf for text, not math
% \usepackage{cabin} % sans serif
% \usepackage[varqu,varl]{inconsolata} % sans serif typewriter
% \usepackage[baskervaldx,bigdelims,vvarbb]{newtxmath} % bb from STIX
% \usepackage[cal=boondoxo]{mathalfa} % mathcal

% \usepackage[osf,theoremfont]{newpxtext} % T1, lining figures in math, osf in text
% \usepackage[full]{textcomp} % required for special glyphs
% \usepackage[varg]{newpxmath} % single story g
% \usepackage[scr=rsfso]{mathalpha}% \mathscr is fancier than \mathcal
% \usepackage{bm} % load after all math to give access to bold math
% %\useosf %no longer needed unless using babel
% \linespread{1.05}% Give Palatino more leading (space between lines)

% \usepackage[full]{textcomp}
% \usepackage{newtxtext}
% \usepackage[vvarbb]{newtxmath} %
% \usepackage[scaled=0.95]{inconsolata} %

\usepackage[full]{textcomp}     %
\usepackage{lmodern}            %
\usepackage[T1]{fontenc}        %
\usepackage{inconsolata}        %

\usepackage{titlesec}

\usepackage{truncate}

\sloppy

\NewDocumentCommand{\undernote}{s O{blue} m m}{
  \IfBooleanT{#1}{\smash}%
  {\color{#2} %
    \underbrace{\normalcolor%
      #4}_{\mathclap{\text{#3}}} %
  }%
  \IfBooleanT{#1}{\vphantom{#4}}
}

\newcommand{\defterm}[1]{{\boldmath\normalfont \bfseries #1}}%
\renewcommand{\defterm}{\emph}%

\makeatletter
\g@addto@macro\bfseries{\boldmath}
\makeatother

\usepackage{figs}

%%% This turns all \paragraph's into \mypara's.
\titlespacing*{\paragraph}{%
  0pt}{%              left margin
  {\medskipamount}}{% space before (vertical)
  1em}%               space after (horizontal)

\titleformat{\subparagraph}[runin]{\itshape}{0pt}{}{}%

\titlespacing*{\subparagraph}{%
  0pt}{%              left margin
  {\medskipamount}}{% space before (vertical)
  1em}%               space after (horizontal)

\newcommand{\NameColorComment}[3]{%
  \whendraft%
  {\renewcommand\thefootnote{\textcolor{#2}{\arabic{footnote}}}%
    \footnote{\color{#2}#1: #3}%
  }%
}%

\newlength{\myalphabet}                %
\newlength{\mywidth}                   %
\newlength{\mymargin}                  %

% Feel free to change your colors! https://en.wikibooks.org/wiki/LaTeX/Colors
% \newcommand{\kent}{\NameColorComment{Kent}{Cerulean}}%\bibliography{references}

\providecommand{\wrt}{with respect to\xspace}%
%

%%% Local Variables:
%%% mode: LaTeX
%%% TeX-master: "article"
%%% End:

% \titleformat{\subsection}[runin]% runin puts it in the same paragraph
% {\normalfont\bfseries}% formatting commands to apply to the whole heading
% {\thesubsection.}% the label and number
% {0.5em}% space between label/number and subsection title
% {}% formatting commands applied just to subsection title
% [.]% punctuation or other commands following subsection title

\usepackage{skak}%

% \SetCharactersPerLine{80}

\geometry{margin=1in}

\bibliography{references}

% \AtEveryBibitem{%
%   \ifentrytype{proceedings}{
%       \clearfield{editor}%
%       \clearname{editor}%
%     }{}
%     \ifentrytype{inproceedings}{
%         \clearfield{editor}%
%         \clearname{editor}%
%       }{}
%     }

\providecommand{\poly}{\fparnew{\operatorname{poly}}}

\definecolor{calypso}{RGB}{50, 104, 145} %

\hypersetup{allcolors=calypso}  %

\definecolor{almostblack}{RGB}{18, 18, 18} %

\color{almostblack}%
\hypersetup{allcolors=almostblack}  %

\providebool{article}
\booltrue{article}

\begin{document}

\newcommand{\W}{\mathcal{W}}
\newcommand{\potential}{\fparnew{\varphi}} %
\providecommand{\pot}{\potential}       %
\providecommand{\pote}{\potential}
\providecommand{\potentialB}{\fparnew{\psi}}      %
\providecommand{\poteA}{\pote}                    %
\providecommand{\poteB}{\potentialB}              %
\providecommand{\potentialC}{\fparnew{\chi}}      %
\providecommand{\poteC}{\potentialC}              %
\providecommand{\len}{\fparnew{\ell}}       %
\newcommand{\potlen}{\fparnew{\ell_{\potential}}} %
\newcommand{\plen}{\fparnew{\ell_{\potential}}}   %
\newcommand{\p}{\fparnew{\varphi}}                %
\newcommand{\lenp}{\fparnew{\ell_{\p}}}

\newcommand{\through}{\fparnew{\operatorname{T}}} %
\newcommand{\sandwich}{\fparnew{\operatorname{S}}} %
\newcommand{\disp}{\fparnew{d_{\p}}}                 %
\newcommand{\dis}{\fparnew{d}}                     %
\newcommand{\hopd}[1]{\fparnew{d^{#1}}}           %
\newcommand{\hopdp}[2][\p]{\fparnew{d^{#2}_{#1}}}            %
\newcommand{\shopd}[1]{\fparnew{\hat{d}^{#1}}}               %
\newcommand{\shopdp}[2][\p]{\fparnew{\hat{d}_{#1}^{#2}}}
\newcommand{\phopd}[1]{\fparnew{\hat{d}^{#1}}}               %
\newcommand{\phopdp}[2][\p]{\fparnew{\hat{d}_{#1}^{#2}}}
\providecommand{\deg}{\fparnew{\operatorname{deg}}} %
\providecommand{\indeg}{\fparnew{\operatorname{deg}^-}} %
\providecommand{\outdeg}{\fparnew{\operatorname{deg}^+}} %
\providecommand{\hopindeg}[1]{\fparnew{\operatorname{deg}_{-}^{#1}}} %
\providecommand{\hopoutdeg}[1]{\fparnew{\operatorname{deg}_{-}^{#1}}} %
\providecommand{\heavyin}[1]{H^{-}_{#1}} %
\providecommand{\hin}{\heavyin} %
\providecommand{\heavyout}[1]{H^{+}_{#1}} %
\providecommand{\hout}{\heavyout}
\providecommand{\reachin}[1]{R^{-}_{#1}} %
\providecommand{\reachout}[1]{R^+_{#1}}  %
\providecommand{\apxheavy}{\smash{\tilde{H}}} %
\providecommand{\negV}{N}                     %
\providecommand{\negE}{E^-}
\providecommand{\numN}{k}
\renewcommand{\mod}{\operatorname{mod}} %
\providecommand{\distance}{\fparnew{d}} %
\providecommand{\Time}{\fparnew{T}}

\providecommand{\varh}{\eta}             %
\providecommand{\varvarh}{\zeta}
\providecommand{\varvarvarh}{\gamma}
\providecommand{\supersource}{s^{\star}} %
\NewDocumentCommand{\envelope}{m m m e{_}}{
  B^{#1}%
  \IfNoValueF{#4}{_{#4}}%
  \parof{#2,#3}
}
\NewDocumentCommand{\negativeenvelope}{m m m e{_}}{
  \bar{B}^{#1}%
  \IfNoValueF{#4}{_{#4}}%
  \parof{#2,#3}
}
\newcommand{\env}{\envelope}%
\newcommand{\nenv}{\negativeenvelope}%
\NewDocumentCommand{\subhopd}{m}{\fparnew{D^{#1}}}
\NewDocumentCommand{\subpopd}{m}{\fparnew{\hat{D}^{#1}}}
\renewcommand{\bar}{\overline}%
\providecommand{\Heads}{\bar{U}}
\providecommand{\outcut}{\fparnew{\partial^+}}
\providecommand{\ideald}{\fparnew{\delta^{\star}}} %

\providecommand{\mapstart}{\fparnew{\pi_0}} %
\providecommand{\mapend}{\fparnew{\pi_1}}   %

% \renewcommand{\subsection}[1]{\paragraph{#1.}}

% Feel free to change your colors! https://en.wikibooks.org/wiki/LaTeX/Colors
\newcommand{\kent}{\NameColorComment{Kent}{Cerulean}}
\newcommand{\navid}{\NameColorComment{Navid}{Mulberry}}

\newcommand{\email}[1]{\href{mailto:#1}{\texttt{#1}}}

\newcommand{\KentNote}{%
  \email{krq@purdue.edu}. Purdue University, West Lafayette,
  Indiana. Supported in part by NSF grant CCF-2129816. %
}
\newcommand{\NavidNote}{%
  \email{navidt@illinois.edu}. University of Illinois at
  Urbana-Champaign, Urbana, Illinois. %
}

% \author{}

% comment out authors when submitting.
\title{From Hop Reduction to Sparsification for Negative~Length~Shortest~Paths}
\author{Kent Quanrud\footnote{\KentNote} \and Navid Tajkhorshid\footnote{\NavidNote}}

\maketitle

\begin{abstract}
  The textbook algorithm for real-weighted single-source shortest
  paths takes $\bigO{m n}$ time on a graph with $m$ edges and $n$
  vertices. A recent breakthrough algorithm by \citet{Fineman24} takes
  $\apxO{m n^{8/9}}$ randomized time. The running time was
  subsequently improved to $\apxO{mn^{4/5}}$ \cite{HJQ25a} and then
  $\apxO{m n^{3/4} + m^{4/5} n}$ \cite{HJQ26}.

  We build on the algorithms of \cite{Fineman24,HJQ25a,HJQ26} to
  obtain faster strongly-polynomial randomized-time algorithms for
  negative-length shortest paths. An important new technique in this
  algorithm repurposes previous ``hop-reducers'' from
  \cite{Fineman24,HJQ26} into ``negative edge sparsifiers'', reducing
  the number of negative edges by essentially the same factor by which
  the ``hops'' were previously reduced.  A simple recursive algorithm
  based on sparsifying the layered hop reducers of \cite{Fineman24}
  already gives an
  $\apxO{m n^{\smash{\sqrt{3}}-1}} < \bigO{mn^{.7321}}$ randomized
  running time, improving \cite{HJQ26} uniformly.

  We also improve the construction of the bootstrapped hop reducers in
  \cite{HJQ26} by proposing new sparse shortcut graphs replacing the
  dense shortcut graphs in \cite{HJQ26}. Integrating all three of layered
  sparsification, recursion, and sparse bootstrapping into the
  algorithm of \cite{HJQ26} gives new upper bounds of
  \begin{math}
    \bigO{mn^{.7193}}
  \end{math}
  randomized time
  for $m \geq n^{1.03456}$ and
  \begin{math}
    \bigO{\parof{mn}^{.8620}}
  \end{math}
  randomized time for $m \leq n^{1.03456}$.

  Lastly, concurrent work by \cite{LLRZ25} obtained an
  $\apxO{n^{2.5}}$ randomized time algorithm for the same problem, and
  along the way improved the running time of the ``betweenness
  reduction'' step in Fineman's framework. Dropping in this subroutine
  as a black box improves the running time of the simple recursive
  sparsification algorithm to
  $\apxO{m n^{1/\sqrt{2}}} \leq \bigO{mn^{.70711}}$, and a slightly
  modified recursive sparsification algorithm runs in
  $\bigO{m n^{.69562}}$ randomized time for $m \geq n^{1.0274}$ and
  $\bigO{(mn)^{0.850}}$ for $m \leq n^{1.0274}$.
\end{abstract}

\section{Introduction}

Single-source shortest paths problem is a classical problem in graph
algorithms and a standard topic of basic algorithms courses. The input
consists of a directed graph $G = (V,E)$ with edge lengths
$\len: E \to \reals$ and a source vertex $s \in V$. The goal is to output
either the distance from $s$ to every other vertex, or a negative
cycle.

The textbook dynamic programming algorithm runs in $\bigO{mn}$ time,
where $m$ is the number of edges, and $n$ is the number of
vertices. It was discovered in the 1950s independently by
\citet{Shimbel55,Ford56,Bellman58,Moore59} and remained the fastest
algorithm until a recent breakthrough $\apxO{mn^{8/9}}$ randomized
time algorithm by \citet{Fineman24}. Fineman developed a new framework
based on identifying ``remote'' sets of negative edges and
neutralizing them efficiently by ``hop reducers'', which we discuss in
greater detail below.

There are faster algorithms for important special cases including
several notable recent developments. For nonnegative edge lengths, the
widely-taught Dijkstra's algorithm takes $\bigO{m + n \log n}$ time
\cite{Dijkstra59,FT87}, and an exciting new algorithm from
\cite{DMMSY25} takes $\bigO{m \log^{2/3} n}$ time.  Weakly polynomial
algorithms for integral edges weights have also recently attained
significant milestones.  \cite{BNW22} gave the first nearly linear
time algorithm (\wrt the bit complexity) for shortest paths.
Concurrently, \cite{CKLPPS25} gave the first almost-linear time
algorithm (again, \wrt the bit complexity) for minimum cost flow,
which generalizes shortest paths.

We are interested in the fully general setting of
\cite{Shimbel55,Ford56,Bellman58,Moore59,Fineman24} where the edge
weights are real-valued. Since Fineman's surprising result, there have
been two followup works building on Fineman's framework to further
improve the running time. \cite{HJQ25} first improved the running time
to $\apxO{m n^{4/5}}$ randomized time by introducing ``proper hop
distances'' to more efficiently extract remote sets. \cite{HJQ26}
improved the running time further to $\apxO{mn^{3/4} + m^{4/5} n}$
randomized time by developing a more efficient ``bootstrapped'' hop
reducer replacing Fineman's layered hop reducer. This paper takes the
next step in this line of research, with the following improved
bounds.

\begin{restatable}{theorem}{MainTheorem}
  \labeltheorem{best}\labeltheorem{sssp} Single-source shortest paths
  with real-valued edge lengths can be computed with high probability
  in $\apxO{mn^{\frac{7 - \sqrt{17}}{4}}}$ randomized time for
  $m \geq m_0 $ and $\apxO{\parof{mn}^{\frac{66-2\sqrt{17}}{67}}}$
  randomized time for $m \leq m_0$, where
  $m_0 = \bigO{n^{\frac{33-7\sqrt{17}}{4}}}$.  (Here we have
  $\frac{33-7 \sqrt{17}}{4} \approx 1.03456$,
  $\frac{7-\sqrt{17}}{4} \leq 0.719224$, and
  $\frac{66-2\sqrt{17}}{67} \leq 0.861997$.)
\end{restatable}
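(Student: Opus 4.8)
The plan is to combine three ingredients announced in the abstract—layered sparsification, recursion, and sparse bootstrapping—on top of the bootstrapped hop-reducer framework of \cite{HJQ26}, and then optimize the resulting running-time recurrence. I would first set up the parametrized recursion: fix a target hop bound $h$ and recursively compute shortest paths in graphs with at most $\approx n/h$ negative edges, using the hop-reducer-turned-negative-edge-sparsifier to pass from the full graph down to one subproblem where the number of negative edges has been cut by a factor governed by $h$. The key conceptual step, which the abstract flags as the new technique, is to argue that a hop reducer that neutralizes a remote set of negative edges can be \emph{reused} as a sparsifier: after neutralization, those negative edges can be removed (their effect folded into a potential), so the recursion genuinely shrinks $|E^-|$ rather than merely the hop count. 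I would prove this reduction as a standalone lemma, being careful that the potential adjustments compose correctly across recursion levels and that a negative cycle detected at any level lifts back to a negative cycle in $G$.

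Next I would bring in the two efficiency improvements over \cite{HJQ26}. For layered sparsification, I would take Fineman's layered hop reducer, observe that each layer only needs to be sparsified rather than fully resolved, and recurse through the layers; this is the part that already yields the clean $\apxO{mn^{\sqrt 3 - 1}}$ bound, so I would first verify that intermediate milestone as a sanity check on the recurrence. Then I would replace the dense shortcut graphs inside the bootstrapped construction with the new sparse shortcut graphs: here the content is showing that a sparse shortcut graph still supports the distance/hop-distance queries the bootstrapped hop reducer needs, with edge count small enough not to dominate the per-level cost. With all three plugged in, the running time becomes a recurrence of the form $T(m,n,k) \le \apxO{mn^{?} } + T(m', n, k')$ with $k' \approx k \cdot (\text{reduction factor})$ and the base case being the textbook $\bigO{m k}$ bound once $k$ is small; I would solve this by balancing the exponent of $n$ in the hop-reduction cost against the number of recursion levels (roughly $\log_{1/\rho} k$), which is where the algebraic optimum $\frac{7-\sqrt{17}}{4}$ comes from, with the crossover $m_0 = \bigO{n^{(33-7\sqrt{17})/4}}$ marking where the $mn^{\alpha}$ term and the $(mn)^{\beta}$ term trade places.

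The main obstacle I expect is making the three improvements \emph{compose} rather than interfere. Each of layered sparsification, recursion, and sparse bootstrapping has been (or will be) analyzed in isolation with its own parameter; integrating them means the recursion is now nested—sparsified layers inside bootstrapped reducers inside the outer negative-edge recursion—and one must check that the randomized guarantees (the "remote set" sampling succeeds w.h.p.) still hold under the union bound over all recursive calls, and that the accumulated $\apxO{\cdot}$ polylog factors don't blow up. Concretely, the delicate point is the interaction between the hop bound used for sparsification and the hop bound internal to the bootstrapped reducer: choosing them independently is wasteful, so the optimization is genuinely multivariate, and I would expect the bulk of the technical work to be in setting up a clean enough abstraction of "hop reducer $\Rightarrow$ sparsifier" (with explicit cost, reduction factor, and failure probability) that the outer recurrence can be stated and solved without re-deriving the internals of \cite{Fineman24,HJQ26} each time. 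The two regimes in the theorem statement then fall out by optimizing the same recurrence under the two orderings of $m$ versus the relevant power of $n$; the $m \le m_0$ bound, being symmetric in $m$ and $n$ up to the exponent, suggests that in that regime the dominant cost is a term depending on $mn$ rather than $m$ alone, which I would confirm by tracking which recursion level is the bottleneck.
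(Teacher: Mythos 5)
You have the high-level roadmap right---layered sparsification, recursion, sparse shortcuts inside bootstrapping, an optimized recurrence producing the two exponents---and you correctly anticipate the $\apxO{mn^{\sqrt 3 - 1}}$ milestone as a sanity check. But the mechanism you propose for the central ``hop reducer $\Rightarrow$ sparsifier'' step is wrong, and it is the step the whole theorem hinges on. You describe it as: neutralize a remote set $U$ of negative edges via the hop reducer, fold the result into a potential, drop those edges, and recurse on what remains. That only removes the $|U| = \Theta(\sqrt{k h_0})$ neutralized edges from the $k$ present---an \emph{additive} dent, which is exactly the outer iteration that \cite{Fineman24,HJQ25a,HJQ26} already performed. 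It does not produce a subproblem whose negative-edge count has been cut by the \emph{multiplicative} factor of roughly $h$ that your own proposed recurrence requires.

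The actual sparsifier is a new auxiliary graph, not a pruned copy of $G$. One builds the $(h+1)$-layer graph of \cite{Fineman24} for the remote set $U$ and reweights by the hop-distance potentials $\pote(v_i) = d^i_U(V,v)$, so that after reweighting the only remaining negative arcs are the ``reset'' arcs $(u_h, u_0)$ from the last layer back to the first, one per $u \in U$. The key new move is to \emph{randomly subsample} those reset arcs, keeping only $\bigO{|U| \log n / h}$ of them. With high probability any many-hop shortest walk in $G_U$ visits a sampled reset vertex within every $h$ consecutive hops, so it can still be routed through the layered graph: at most $h$ hops through the layers, then a sampled reset arc, then repeat. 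The result is a graph $H$ with $\bigO{m}$ edges that preserves all distances of $G_U$ yet has only $\bigO{|U|\log n / h}$ negative edges. It is on \emph{this} $H$ that the algorithm recurses inside each iteration of the outer loop, and it is the random subsampling---not neutralization-and-removal---that furnishes the multiplicative $h$-factor drop. Without that lemma the recurrence $T(m,n,k) = \apxO{\sqrt{k/h_0}\parof{hm + \sqrt{kh_0}\,m/h + k^{1.5}/h_0^{1.5} + T(mh_0^2/h,\, nh_0^2/h,\, \sqrt{k/h_0})}}$ never forms, so there is nothing to optimize into $\frac{7-\sqrt{17}}{4}$. Similarly for $m \le m_0$: the paper's move is concrete---apply the same layered sparsification to all of $G$ as a \emph{preprocessing} step with $r=1$, trading an $h$-factor increase in $m$ for an $h$-factor decrease in $k$ to land exactly at the dense-regime threshold---which is more specific than observing that ``a term depending on $mn$ dominates.''
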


Perhaps more importantly than the exact running time, this work
introduces a few conceptual ideas on top of
\cite{Fineman24,HJQ25a,HJQ26} and to our understanding of
real-weighted shortest paths more generally.  There are really four
main new techniques. The first idea is a randomized sparsification
technique that transmutes hop reduction directly into a proportional
decrease in the number of negative edges. The second idea is that of
recursion, particularly to the sparsified graphs that now have
substantially fewer negative edges.  These first two techniques give a
relatively simple, recursive algorithm that runs in
$\apxO{mn^{\sqrt{3}-1}}$ randomized time, improving on \cite{HJQ26}
without any of the bootstrapping machinery.

The remaining two techniques enhance the boostrapping hop reducer
construction from \cite{HJQ26}. First, we develop a sparse auxiliary
``shortcut'' graph that substitutes for a dense one in \cite{HJQ26},
and helps address the larger running time in sparse graphs in
\cite{HJQ26} (i.e., the $\apxO{m^{4/5} n}$ term).  The second
technique ``boosts'' the bootstrap contruction by initiating the
bootstrapping from a larger ``base'' subgraph, where the larger base
case is supplied by the aforementioned method of sparsified
recursion. Altogether we obtain the final bound of \reftheorem{best}.

\paragraph{Independent work and improved running times.} The results
above were prepared and submitted for review in early November
\cite{Version1}. Concurrently, independent work by \citet{LLRZ25}
obtained an $\apxO{n^{2.5}}$ randomized time algorithm for
real-weighted shortest paths. Among other ideas, \cite{LLRZ25}
developed a faster ``betweenness reduction'' subroutine leveraging
recursion. Substituting this subroutine as a black box improves the
simple recursive algorithm from $\apxO{m n^{\sqrt{3}-1}}$ to
$\apxO{m n^{1/\sqrt{2}}} \leq \bigO{mn^{.70711}}$ randomized time, and
the bootstrap algorithm to $\bigO{mn^{.7044}}$ above some density
threshold. A third algorithm integrating a subset of the techniques
from the bootstrapping algorithm into the recursive sparsification
algorithm obtained the following running times.
\begin{theorem}
  \labeltheorem{sssp-updated} Single-source shortest paths with
  real-valued edge lengths can be computed with high probability in
  $\bigO{mn^{.69562}}$ randomized time for $m \geq n^{1.0274}$, and
  $\bigO{\parof{mn}^{0.85}}$ for $m \leq n^{1.0274}$.
\end{theorem}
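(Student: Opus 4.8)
The plan is to obtain \reftheorem{sssp-updated} as a particular instantiation of a parametrized recursive algorithm, tuned to the stated exponents. The starting point is the simple recursive sparsification scheme — randomized sparsification (turning a hop-reduction factor into a proportional drop in the number of negative edges) together with recursion on the sparsified subinstances — which already yields $\apxO{mn^{\sqrt3-1}}$ time. Into this I would (i) substitute the faster recursive betweenness-reduction subroutine of \cite{LLRZ25}, used as a black box, which alone improves the bound to $\apxO{mn^{1/\sqrt2}}$, and (ii) splice in a subset of the bootstrapping ingredients of \cite{HJQ26} — most importantly the new sparse shortcut graphs — to cut the cost on sparse inputs and thereby extend the improvement below the density at which plain recursive sparsification stops being optimal. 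The two claimed running times and the crossover density $n^{1.0274}$ will then fall out of optimizing the resulting recurrence.

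Concretely, the steps are: (1) isolate the per-call primitive: on an instance with $n$ vertices, $m$ edges, and $\eta \le m$ negative edges, after a betweenness-reduction step costing $\apxO{B(m,n)}$ (with $B$ the improved bound of \cite{LLRZ25}) one builds a shortcut graph and produces an equivalent instance with $\eta/\rho$ negative edges, for a tunable factor $\rho \ge 1$, where the shortcut-graph construction is now the \emph{sparse} one rather than the dense one of \cite{HJQ26}; (2) iterate this to depth $k$ with per-level factor $\rho = \eta^{1/k}$, so that $O(1)$ negative edges remain and Johnson/Dijkstra finishes in $\apxO{m + n\log n}$, recursing on a sparsified subinstance whenever that is cheaper than continuing to reduce; (3) write down the recurrence $T(m,n,\eta)$, set $\eta = m$ for the worst case, sum the geometric series over the $k$ levels, and observe that the dominant term is the max of a ``recursion/sparsification'' term and a ``shortcut-graph'' term, with the sparse construction making the latter piecewise in the density $m/n$; (4) choose $k$ and the internal sparsification and recursion exponents to balance these terms in each density regime, which gives $\bigO{mn^{.69562}}$ when $m \ge n^{1.0274}$ and $\bigO{\parof{mn}^{0.85}}$ when $m \le n^{1.0274}$, the crossover being exactly where the two candidate bounds meet; (5) amplify: there are only polylogarithmically many recursive calls, each correct with high probability, so a union bound preserves the guarantee. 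This mirrors the optimization behind \reftheorem{best}, but with the cheaper \cite{LLRZ25} betweenness-reduction cost and only a subset of the bootstrapping machinery.

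The main obstacle is the coupled parameter optimization together with the compatibility checks, rather than any one new idea. Several exponents interact — the recursion depth $k$, the per-level sparsification factor, the internal recursion exponent inherited from the \cite{LLRZ25} bound, and the size of any ``base'' subgraph from which bootstrapping is initiated — and since the objective is piecewise in $m/n$, one must verify that a single parameter choice is simultaneously feasible and (near-)optimal on both sides of $n^{1.0274}$ and that the regime switch does not introduce a jump that degrades the bound. A more delicate point is checking that the retained subset of bootstrapping techniques genuinely composes with recursion on sparsified graphs: the sparse shortcut graph must still certify the same $s$-distances, and its parameters must be set so that recursing through it does not resurrect an $m^{4/5}n$-type term — avoiding precisely that term on sparse inputs being the reason the sparse construction is used at all. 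Once the recurrence is pinned down, the numbers $.69562$, $0.85$, and $1.0274$ are just the rounded solutions of the balance equations it produces.
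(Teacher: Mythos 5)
Your high-level direction is right — the paper's appendix algorithm (the ``twice-recursive algorithm'') does combine layered sparsification, recursion, the sparse shortcut construction, and the \cite{LLRZ25} betweenness-reduction subroutine, and then handles the sparse regime with a preprocessing sparsification step as in \cref{sparse-sssp}. But the concrete mechanism you propose does not match, and a key step is missing.

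First, the step (2) picture of iterating to a fixed depth $k$ with per-level geometric reduction $\rho = \eta^{1/k}$ is not the paper's recursion. What actually happens: each iteration extracts an $(h,b)$-remote set $U$ of size $\Theta(\sqrt{kh})$ (with $b \ll h$, a crucial mismatch you do not exploit), builds a single auxiliary graph $H$ combining $b$ layers of the $h$-hop negative reach $G_h$ with shortcut arcs through a random sample $X$ of $\apxO{|U|\log n/b}$ negative vertices, and sparsified reset arcs giving $\apxO{\sqrt{k/h}}$ negative edges. The role of the sparse shortcuts is to let the number of layers $b$ be far smaller than the hop parameter $h$ while still capturing all $\Theta(h)$-hop walks — this improves the \emph{dense}-regime exponent from $1/\sqrt2$ to $\alpha \approx 0.6956$, not just the sparse cost as you suggest. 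Second, and this is the genuine gap: your plan says nothing about how the shortcut arcs get their lengths. In the paper, those lengths are distances to and from $X$ inside $G_h$, and computing them requires first \emph{recursively neutralizing} $G_h$ — a subgraph with only $m/b$ edges but $|U| = \Theta(\sqrt{kh})$ negative edges. This second recursive call is the heart of the ``twice-recursive'' algorithm; without it the shortcut labels cost $\apxOmega{m |U|/ h}$ time to compute and the bound degrades. Together with the recursive call hidden inside \cref{extract-sandwich-2} (which now calls $\Time{mh,b}$ rather than a fixed cost $B(m,n)$), the iteration produces the three-branch recurrence
\begin{align*}
  \Time{m,k} = \apxO{\sqrt{k/h}\parof{\Time{mh,b} + \Time{m/b,\sqrt{kh}} + \Time{m,\sqrt{k/h}} + m\sqrt{kh}/b^2}},
\end{align*}
and it is balancing this — with $h = k^{\alpha^2/(2+\alpha)}$ and $b = k^{1-\alpha}$ — that produces $\alpha$ as the real root of $x^3+2x^2+x-2 \approx 0.69562$ and the crossover $\gamma = 1 + \alpha^2(1-\alpha)/(2(2+\alpha)) \approx 1.0273$. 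A depth-$k$ geometric chain won't generate that cubic; you would need to articulate all three recursive branches to recover the claimed exponents.
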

We describe the new subroutine from \cite{LLRZ25}, update the bounds
for the two existing algorithms, and describe the third algorithm
attaining \reftheorem{sssp-updated}, in
\cref{recursive-betweenness-reduction}.

\section{High-level overview and techniques}

\labelsection{overview}

As the state-of-the-art algorithms have grown more complex and
technical over time, it may help to give a high-level overview
explaining the motivation and intuition behind the new techniques.  We
start by informally introducing a few notions and definitions key to
the framework established by \cite{Fineman24}; these and additional
definitions are described more formally in
\refsection{preliminaries}. To start, a ``hop'' refers to a negative
edge, and an ``$h$-hop walk'' is a walk with at most $h$ hops. The
``$h$-hop distance'' is the minimum length among all $h$-hop walks
from $u$ to $v$, and denoted $\hopd{h}{u,v}$. A ``proper walk'' walk
is one where all the negative edges are distinct.  A set of vertex
potentials $\varphi: V \to \reals$ induces edge lengths
$\plen{u,v} = \pote{u} + \len{u,v} - \pote{v}$ that preserve the
shortest path structure. We let $\distance{u,v}_{\pote}$ denote
distances \wrt $\plen$. A potential is \emph{valid} if it introduces
no new negative arcs, i.e. $\len{e}_\pote \geq 0$ if $\len{e} \geq
0$. We say $\pote$ neutralizes a negative edge $(u, v)$ if
$\len{u, v}_\pote \geq 0$. \citet{Johnson77} observed that the
potentials $\pote{v} = \distance{V,v} = \min_{s \in V} \distance{s,v}$
are valid and neutralize the entire graph. In general, any potential
of the form $\pote{v} = \hopd{h}{V,v}$ is valid.

The goal is to iteratively compute valid potentials $\varphi$ that
neutralize negative arcs at a rate faster than $\apxO{m}$ time per
neutralized arc. Let $k$ denote the number of negative vertices; by
standard preprocessing, $k \leq n$. \cite{Fineman24} introduced the
notion of \emph{remote} sets of negative vertices that can be
neutralized very efficiently. For a parameter $h \in \naturalnumbers$,
a set of negative vertices $U$ is ``$h$-remote'' if they can reach at
most $n/h$ vertices and $m/h$ edges via negative-length $h$-hop
walks.
\cite{Fineman24} observed that given an $h$-remote set of vertices,
one can construct an ``$h$-hop reducer'' $H$ for the subgraph $G_U$
with $\bigO{m}$ edges. This means that $H$ contains the vertices of
$G$ as a subset of $V_H$, and for any two vertices $u,v \in G$, and
hop parameter $\varh \in \naturalnumbers$,
\begin{align*}
  \distance{u,v}_G \leq \hopd{\roundup{\varh/h}}{u,v}_H \leq \hopd{\varh}{u,v}_G
\end{align*}
In particular, one can compute Johnson's potentials for $G$, as a
single-source, $(\sizeof{U} / h)$-hop distance computation in $H$ -- a
factor $h$ speedup. Meanwhile, \cite{HJQ25} showed that an $h$-remote
set of size $\bigOmega{\sqrt{kh}}$ can be obtained in $\apxO{m h^2}$
time. The $\apxO{m n^{4/5}}$ running time in \cite{HJQ25} is obtained
by repeatedly extracting $h$-remote sets and neutralizing them via
this $h$-hop reducer from \cite{Fineman24}, for
$h = \apxTheta{k^{1/5}}$.

The hop reducer from \cite{Fineman24} is a simple and intuitive
layered graph with $h + 1$ layers. Let $G_h = (V_h,E_h)$ be the
subgraph induced by the $h$-hop negative reach of $G^+$. The first
layer $L_0$ is a copy of the nonnegative part of the graph, $G^+$. The
remaining $h$ layers $L_1,\dots,L_{h}$ are each a copy of
$G_h^+$. For $v\in V$ and $i \in \setof{0,\dots,h}$, we let $v_i$
denote the copy of $v$ in $i$th layer $L_i$. Each negative arc $(u,v)$
in $U$ is added as a forward arc $(u_i,v_{i+1 \mod h})$ between
consecutive layers $L_i$ and $L_{i+1}$ for each layer index $i$. We
also add length-0 ``self'' arcs $(u_i,u_{i+1 \mod h})$ for each
$u \in V_h$ and each index $i$. Finally, for every arc
$(x,y) \in \outcut{V_h}$ leaving the $h$-hop negative reach $V_h$, and
each layer $L_i$ for $i > 1$, we add an ``exit'' arc $(x_i,y_0)$ with the same
edge length.

\begin{center}
  \includegraphics{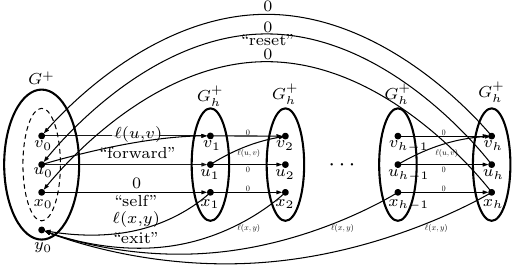}
\end{center}

Call the layered graph above $H'$. It is easy to see that $H'$
preserves distances in the sense that for $u,v \in V$,
$\distance{u,v}_{G_U} = \distance{u_0,v_0}_{H'}$. Now consider the
potentials $\varphi: V_{H'} \to \reals$ defined by
$\varphi(v_i) = \hopd{i}{V, u}$. In the reweighted graph
$H'' = H'_{\pote}$, the only negative edges are the arcs
$(u_{h+1},v_{0})$ wrapping around from the final layer back to the
first. Crucially, $\len{u_i, v_0}_\varphi \geq 0$ for every exit arc
$(u, v) \in \outcut{V_h}$ since $v$ is not in the $h$-hop negative reach
of $U$.  Now, for any $h$-hop walk $W: s \leadsto t$ in $G$, we have a
$1$-hop walk $W': s_0 \leadsto t_0$ in $H''$ where each hop in $w$ is
mapped to a forward arc from one layer to the next, before wrapping
back to get to $t_0$. It follows (with care) that $H''$ is an $h$-hop
reducer, preserving distances from $G$, while reducing the number of
hops in any walk by a factor of $h$.

\cite{HJQ26} proposes a more involved hop reducer construction based
on \emph{bootstrapping}, which we will describe in a moment.
Ultimately, the algorithms in \cite{Fineman24,HJQ25,HJQ26} all use
$h$-hop reducers to reduce the number of hops needed to compute
Johnson potentials for $G_U$.  The first technique we introduce,
recursive sparsification, is the first time an algorithm does
\emph{not} (directly) use hop reduction to speed up shortest path
computations.

Consider again the layered $h$-hop reducer $H''$ described above. All
the negative arcs ``reset'' from the last layer to the first.  Long,
many-hop walks through $V_h$ are mapped through the layers of $H''$, so
that $h$-hop subpaths become $1$-hop ``round trips'' using all the
layers of $H''$. Keeping the picture of a long walk $W$ through $G_h$ with
at least $h$ hops in mind, suppose we randomly sampled a subset
$U' \subseteq U$ where each negative vertex/arc is sampled
independently with probability $\bigOmega{\log{n} / h}$.  The random
sample $U'$ hits the walk once every $\bigOmega{h / \log{n}}$ hops on
average, and once every $h$ hops with high probability.

\begin{center}
  \includegraphics{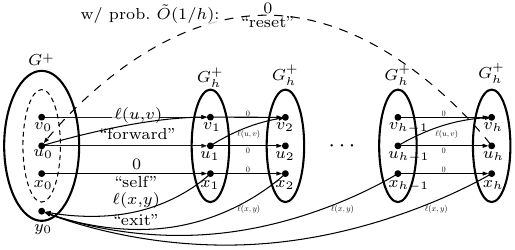}
\end{center}

Assume this high probability event.  Let $H$ be the subgraph of $H''$
obtained by restricting the negative reset arcs to those in $U'$. We
can route $W$ through the layers of the $H$, taking at most $h$
non-sampled hops through the layers before taking one of the sampled
``reset'' arcs. This observation, that walks through the negative
reach can still be mapped through $H$ with high probability, leads to
the fact that $H$ (essentially) preserves all distances in $G_U$, with
the following salient property: in contrast to $G_U$, $H$ has only
$\bigO{\sizeof{U} \log{n} /h}$ negative edges!

Now, if we can neutralize $H$, then we can compute the Johnson's
potentials for $G_U$ in a single Dijkstra computation over the
neutralized $H$. How do we neutralize $H$?

Recurse! This simple algorithm leads to a running time recursion of the form
\begin{align*}
  \Time{m,k} = \apxO{\sqrt{k/h}\parof{mh^2 + \Time{m, \sqrt{k/h}}}},
\end{align*}
This recursion is solved by $\Time{m,k} = \apxO{m k^{\sqrt{3} -1}}$,
where $\sqrt{3}-1 \approx 0.732051$. It is surprising that such a
relatively simple algorithm already improves \cite{HJQ26} for all
graph densities. It is also curious to note that this algorithm never
explicitly applies hop reduction to accelerate the computation of
Johnson's potentials. Evidently, it is more beneficial to exchange the
hop reduction for a proportional decrease in the number of negative
arcs.

\subsection{A leaner bootstrapped hop reducer}
\labelsection{bootstrap-overview}

The improvement from \cite{HJQ25} to \cite{HJQ26} comes primarily from a stronger hop
reducer. Let us say that a set of negative edges is $(h_1,h_2)$-remote
if it can $h_1$-hop negatively reach at most $n / h_2$ vertices. Thus
$U$ is $h$-remote if it is $(h,h)$-remote.  \cite{HJQ26} observed that
in essentially the same $\apxO{m h^2}$ running time to previously
compute an $h$-remote set $U$ of size $\apxOmega{\sqrt{kh}}$, one can
compute a set of negative edges $U$ of size $\apxOmega{\sqrt{k}}$ that
is $(h_1,h_1/h^2)$-remote for all $h_1 \geq \bigOmega{\log n}$. In
particular, the $\apxO{1}$-hop negative reach of $U$ has $n/ h^2$
vertices, and grows smoothly until the $h^2$-hop negative reach
becomes the entire graph.

The goal is to construct an $h^2$-hop reducer for $G_U$ in
$\apxO{mh^2 + m \sizeof{U}/ h^2}$ time. \cite{HJQ26} achieves this by
the following \emph{bootstrapping construction}.  Assume for
simplicity that $G$ is sufficiently dense. Consider the subgraphs
$G_i$ of $G_U$ induced by the $2^{i}$-hop negative reach for
$i = 1,...,2 \log h$. The incremental goal is to build a
$\bigOmega{2^{i}}$-hop reducer $H_i$, with $\bigO{m 2^i / h^2}$ edges,
for each $G_i$; for $i = 2 \log h$ this gives the desired
$\bigO{m}$-edge $\bigOmega{h^2}$-hop reducer for $G_{2 \log h} =
G_U$. Now, it is easy to efficiently construct a 2-hop reducer from
$G_1$ because $G_1$, with $\bigO{m / h^2}$ edges, is so small.  The
challenge is trickier for larger $i$. For example, for $i > \log h$,
the $2^i$-layered hop reducer described above has
$\bigOmega{m 2^{2i}/ h^2} > m$ edges.

The key idea is to use the previous hop reducers $H_{1},\dots,H_{i-1}$
to construct the next hop reducer $H_i$. An initial inspiration might
be to try to use $H_{i-1}$ to compute distances in $G_i$ between vertices in $U$
much faster; then those distances can be used to ``shortcut'' the
layered graph construction. The ``shortcut hop reducer'' would only be
one additional layer, with auxiliary forward arcs between the tail of
each negative edge to the head of each negative arc (from $U$), with
arc lengths based on the distances computed via $H_{i-1}$. The problem
here is that it takes too long to compute hop distances from every
$u \in U$, even when it is only a constant number of hops in $H_i$.

The actual construction in \cite{HJQ26} is more subtle. For each
$j < i$, \cite{HJQ26} uses the hop reducer $H_j$ to compute ``distance
estimates'' $\delta_j(u,v)$ between every pair $u,v \in U$ that are at
least as good as any proper $\bigTheta{2^j}$ hop walk from $u$ to
$v$. In particular, it does not try to compete with walks with
significantly less than $2^j$ hops. This restriction allows for the
following speedup: rather than compute single source hop distances
from every $u \in U$ in $\apxO{m \sizeof{U} 2^{j} / h^2}$ time, we
can randomly sample $\apxO{\sizeof{U} \log{n}/2^j}$ vertices
$X_j \subseteq U$ uniformly at random, and compute the
$\bigO{2^{j}}$-hop distances in $G_j$ (via $H_j$) to and from every
$x \in X_j$, in $\apxO{m \sizeof{U} / h^2}$ time. For every
$s, t \in U$, and sampled $u \in X_j$, the computed distances from $s$
to $u$ and then to $t$ gives a valid distance from $s$ to $t$.  With
high probability, $X_j$ ``hits'' all the shortest walks with at least
$\bigOmega{2^{j}}$ hops. As such, the distance estimates via $X_j$
``capture'' all the proper $\bigTheta{2^j}$-hop distances in $G_j$.

\FloatBarrier

At this point we have a distance estimate $\delta_j(u,v)$ that is
bounded above by the length of any $\bigOmega{2^{j}}$-hop $(u,v)$-walk
in $G_j$, for all $j < i$. To construct the $\bigOmega{2^i}$-hop
reducer $H_i$ for $G_i$, \cite{HJQ26} starts with disjoint copies of
$G_j^+$ for each index $j \leq i$. We let $v_j$ denote the copy of $v$
in $G_j^+$, for $v \in V$ and $j \leq i$. For every negative arc
$(u,v)$, and each index $j$, we add the ``shortcut'' arc $(u_i,v_j)$
with an edge length based on $\delta_j(u,v)$. (Omitting details, some
adjustment is made to $\delta_j(u,v)$ when the distance estimate is
observed to be too good to be true.) There are also ``exit'' arcs
$(y_j,z_i)$ for each arc $(y,z)$ leaving the $2^j$-hop negative
reach. Lastly, there are also length-0 ``reset'' arcs $(u_{i-1},u_i)$ for
each $u \in U$.

\begin{figure}[htb]
  \centering
  \includegraphics{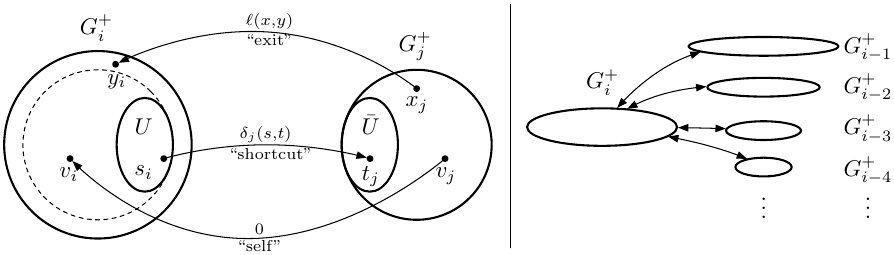}
\end{figure}

The overall hop reducer is arranged as a star, with a ``base'' copy of
$G_i^+$ in the center, connected to each $G_j^+$ with $j < i$ via the
gadget described above. Each $G_j^+$-gadget can be interpreted as a
compressed version of the layered graph construction for $G_j$, with
each ``roundtrip'' through $G_j^+$ reducing $2^j$ hops, with the
significant difference this gadget only ``captures'' walks with
$\bigTheta{2^j}$-hop walks, instead of any walk with $\bigO{2^j}$
hops. With an appropriate potential function, the only negative arcs
are the reset arcs from $G_{i-1}^+$ back to $G_i^+$. With a more
delicate argument than for the layered graph construction described
previously, one can argue that any walk in $G_i$ can be mapped through
our construction with $2^{i-1}$-factor less hops, giving us a
$2^{i-1}$-hop reducer $H_i$ for $G_i$. Bootstrapping in this fashion
until $i = 2 \log h$ leads to a desired $h^2$-hop reducer for $G$.

But consider the number of edges in the $i$th hop reducer $H_i$. The
disjoint copies of $G_j$ contribute at most $\bigO{m 2^i / h^2}$
edges, as desired. However, there are $\sizeof{U}^2 = \apxOmega{n}$
shortcut edges, which for small $i$ can be a bottleneck in
sufficiently sparse graphs. The net effect is that \cite{HJQ26}
obtains a $\apxO{m n^{3/4} + m^{4/5} n}$ running time; in particular,
$\apxO{m^{4/5} n}$ rather than $\apxO{m n^{3/4}}$ for
$m \leq \apxO{n^{5/4}}$.

\cite{HJQ26} is hamstrung by a mismatch between the sparsified
shortest path computation obtaining the distance estimates, and the
dense sets of shortcut edges based on the distance estimates. Recall
that the distance estimates $\delta_j$ for $G_j^+$ are based on
computing distances to and from a randomly sampled subset
$X_j \subseteq U$, while we add shortcut edges for every pair
$s,t \in U$. It is natural and more efficient to have a construction
that mirrors the computation, introducing an auxiliary copy of $X_j$
between $G_i^+$ and $G_j^+$, with edges from the copy of $U$ in
$G_i^+$ to the auxiliary copy of $X_j$ based on the distances to
$X_j$, and edges from the copy of $X_j$ to the copy of $U$ in $G_j^+$
based on the distances from $X_j$. Altogether we would have a
``bowtie'' connecting $G_i^+$ to $G_j^+$ via $X_j$, with
$2\sizeof{U} \sizeof{X_j}$ vertices.

\begin{center}
  \includegraphics{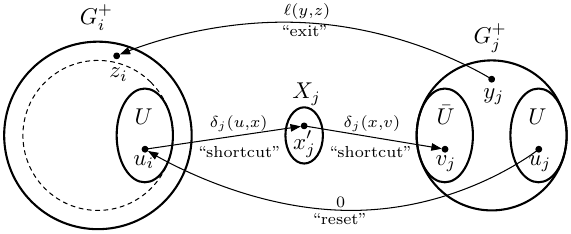}
\end{center}

Our next contribution is precisely such a construction.  Consider the
diagram above. For a vertex $u$ and index $j$, we let $u_j$ denote the
copy of $u$ in $G_j^+$ and $u_j'$ the copy of $u$ in $X_j$ (when these
auxiliary copies exist.)  For negative vertex $u \in U$ and
$x \in X_j$, we have a shortcut arc $(u_i, x_j')$ with a length
$\delta_j(u,x)$.  For each head of a negative edge, $v \in \Heads$,
and sampled vertex $x \in X_j$, we have another shortcut arc with a
have length $\delta_j(x,v)$. Observe that we only have
$\bigO{\sizeof{U} \sizeof{X_j}} = \apxO{\sizeof{U}^2 / 2^j}$ shortcut
edges. The subtle challenge comes in setting the right edge lengths
for $\delta_j(u,x)$ and $\delta_j(x,v)$.

The values $\delta_j(u,x)$ and $\delta_j(x,v)$ need to be ``sound'',
in the sense that they do not underestimate actual distances in
$G_j$. They should also be complete over all proper $\bigTheta(2^{h})$-hop walks
from $U$ to $\Heads$ in the following sense: for every pair $u \in U$ and
$v \in \Heads$, there is some $x \in X$ such that
$\delta_j(u,x) + \delta_j(x,v)$ is at most the length of any $(u,v)$
walk through $G_j$ with $\bigTheta{2^j}$ hops. The natural candidate
is to use the $2^j$-hop reducer from $H_j$ to compute $\bigO{2^j}$
hops starting and ending at each $x \in X_j$, and set $\delta_j(u,x)$
and $\delta_j(x,v)$ to these computed distances.

The second step of the hop reducer construction is to define valid
vertex potentials over $X_j$ and $G_j^+$ that neutralize all edges in
this gadget except for the reset arcs.  Our initial values for
$\delta_j(u,x)$ and $\delta_j(x,v)$ are not quite appropriate for this
task and require the following adjustments. First, for $u \in U$ and
$x \in X_j$, if the initially computed value $\delta_j(u,x)$ is less
than $\hopd{2^{j-1}}{V_j,x}_j$, we increase it to
$\hopd{2^{j-1}}{V_j,x}_j$.  Second, for $x \in X_j$ and
$v \in \Heads$, if the initially computed value of $\delta_j(x,v)$ is
less than $\hopd{2^j}{V_j,v} - \hopd{2^{j-1}}{V_j,x}$, we increase it
$\hopd{2^j}{V_j,v} - \hopd{2^{j-1}}{V_j,x}$. Increasing edge lengths
does not risk soundness, and one can argue that these increased edge
length are still complete, in the sense described above.

Now, with these adjusted edge lengths, we define a potential over
$X_j$ and $G_j^+$ by setting
\begin{math}
  \pote{x_j'} = \hopd{2^{j-1}}{V_j,x_j}_{j}
\end{math}
for $x \in X_j$
and
\begin{math}
  \pote{v_j} = \hopd{2^j}{V_j,v}_j
\end{math}
for $v \in V_j$, where $d_j(\cdot,\cdot)$ denotes distances in $G_j$. One
can show that these vertex potentials are valid over $G_j^+$ and the
exit arcs from $G_j^+$ to $G_i^+$, and also neutralize the shortcut arcs to
and from $X_j$. The only negative arcs remaining are the reset arcs.

Starting with $G_i^+$, and adding this ``sparse shortcut graph''
gadget for every $G_j^+$, $j < i$, gives a $\bigTheta{2^i}$-hop $H_i$
reducer for $G_i^+$. The $j$th shortcut graph now has
$\bigO{\sizeof{U}^2 / 2^j}$ shortcut edges, instead of
$\bigO{\sizeof{U}^2}$.

\paragraph{Boosting the bootstrapped hop reducer.}
Alas, the sparser shortcut-gadget construction does not reduce the
number of edges in the first shortcut gadget for $G_1^+$, and overall
the number of edges in the hop reducer has only decreased by a
logarithmic factor. The bottleneck is the density of the gadgets for
the smallest subgraphs $G_1^+, G_2^+, ...$.

Here we return to the ideas of layered sparsification and recursion to
boost the bootstrapping construction. The bootstrapping is
bottlenecked at the smallest negative-reach subgraphs. Instead, we
directly neutralize the subgraph induced by the $h_1$-hop negative
reach with layered sparsification and recursion, for a moderate value
of $h_1$, and start the bootstrapping process from there. Starting at
$h_1$-hop negative reach, instead of constant negative reach, also
allows us to increase the size of our remote set by a
$\poly{h_1}$-factor.

More precisely, let $i_0 \in \naturalnumbers$ be a parameter to be
determined, and let $i_1 = 2 i_0$. Let $h_0 = 2^{i_0}$ and
$h_1 = 2^{i_1} = h_0^2$. In $\apxO{m h}$ time, we sample a set of
$\apxOmega{\sqrt{k h_0}}$ negative vertices $U$ that is
$(\varh,h/\varh)$-remote for all $\varh \geq h_0$.  Let
$V_j$ be the $2^j$-hop negative reach, and
$G_j$ the subgraph of $G_U$ induced by $V_j$, as before.

We first apply the layered sparsification and recursive technique to
$G_{i_1}$, using $G_{i_0}$ as the layers. $G_{i_1}$ has $m h_1 / h$
edges, $n h_1/ h$ vertices, and $\sizeof{U}$ negative vertices.
$G_{i_0}$ has $m h_0 / h$ edges and $n h_0 / h$ vertices. By making
$h_0$ layers of $G_{i_0}$ over a base graph of $G_{i_1}$ and
sparsifying the negative edges, we obtain a graph $H_{i_1}'$ with
$\bigO{m h_1/h}$ edges, $\apxO{\sizeof{U} / h_0}$ negative edges,
and preserving distances from $G_{i_1}$. We recursively neutralize
$H_{i_1}'$, and use the neutralized $H_{i_1}'$ to compute Johnson
potentials and neutralize $G_{i_1}$.

In neutralizing $G_{i_1}$, we've also neutralized $G_j$ for all
$j < i_1$. These neutralized graphs can be used to compute the lengths
for the shortcut edges, and construct the sparse gadget described
above, for all $G_j$ with $i_0 < j \leq i_1$. For each such $j$, the
sparse gadget for $G_j$ captures all proper walks with
$\bigTheta{2^j}$ hops through $G_j$. Since we do not create any
gadgets for $j < i_0$, $G_{i_0}$ needs a different gadget to capture
all walks with at most $h_0$ hops in $G_{i_0}$. To this end we add the
familiar layered auxiliary graph, with $h_0$ copies of $G_{i_0}^+$.

\FloatBarrier

\begin{figure}[tbh]
  \centering
  \includegraphics{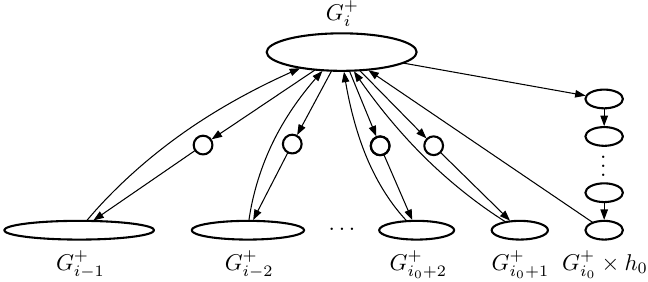}
\end{figure}

We now have a $\bigTheta{2^{i_1}}$-hop reducer $H_{i_1}$ for $G_{i_1}$, for a
remote set $U$ of $\apxOmega{\sqrt{k h_0}}$ negative vertices, with
the sparse gadgets starting from $G_{i_0 + 1}$, instead of
$G_1$. Continuing the bootstrap process from $H_{i_1}$ leads to an
$\bigTheta{h}$-hop reducer $H$ for all of $G_U$. Since the shortcut gadgets only
go down to $G_{i_0+1}$, each reducer has $\apxO{\sizeof{U}^2/h_0}$
added shortcut edges, instead of $\apxO{\sizeof{U}^2}$. Finally, we
compute Johnson's potentials neutralizing $G_U$ via $H$, in
$\apxO{m \sizeof{U}/h}$ time, as a $\sizeof{U}/h$-hop distance
computation. Choosing $i_0$ appropriately to balance parameters leads
to the $\bigO{mn^{.7193}}$ randomized running time for
$m \geq n^{1.03456}$. An additional preprocessing step (using layered
sparsification again!) leads to the $\bigO{(mn)^{.8620}}$ randomized
running time in sparser graphs.

Why didn't we sparsify and recurse on $H$, and on each intermediate
hop reducer $H_i$ along the way? Recursion wouldn't hurt, but it still
takes $\apxO{m \sizeof{U} / h}$ time to compute the distance estimates
$\delta_j(u,v)$ needed to label the shortcut edges in each sparse
gadget. Consequently the $\apxO{m \sizeof{U} / h}$ time spent directly
computing Johnson's potentials in $H$, instead of recursing, is not a
bottleneck.  Of course, one could have recursed on these hop reducers
anyway, resulting in an algorithm with the same running time that
never explicitly uses hop reduction to speed up a distance
computation.

In conclusion, this work introduces a new perspective recasting hop
reducers as negative-edge sparsifiers.  This step makes recursion more
viable and leads quickly to a relatively simple algorithm improving
the state of the art. These techniques were then incorporated into
bootstrapping (which this work also improves) to improve the running
time further.  Naturally, we don't believe these running times are the
best possible. At this stage, we are still developing basic techniques
and making elementary observations on a relatively new problem. We are
hopeful that the new ideas and techniques, especially the idea of
sparsification via hop reduction, will be helpful for future
algorithms.

\section{Preliminaries}

\labelsection{preliminaries}

Let $\mu = m + n \log n$ and let $\numN$ denote the number of negative
edges.

\paragraph{Distances.}
The distance from $s$ to $t$, denoted by $d(s,t)$, is defined as the
infimum length over all walks from $s$ to $t$.  For vertex
sets $S$ and $T$, we let $d(S,T)$ denote the minimum distance $d(s,t)$
over all $s \in S$ and $t \in T$.

\paragraph{Preprocessing and negative vertices.}

By standard preprocessing, we assume that the maximum in-degree and
out-degree are both $\bigO{m / n}$, that there are $\numN \leq n/2$
negative edges, and that each negative edge $(u,v)$ is the unique
incoming edge of its head $v$ and the unique outgoing edge of its tail
$u$ \cite{Fineman24,HJQ25}.  We identify each negative edge $(u,v)$
with its tail $u$, and call $u$ a \defterm{negative vertex}.  When
there is no risk of confusion, we may reference a negative edge by its
negative vertex and vice-versa. We let $N$ denote the set of negative
vertices. For a set of negative vertices $U \subseteq N$, we let
$\Heads = \setof{v \where u \in U \andcomma (u,v) \in E}$ be the set
of heads of the negative edges associated with $U$.

For a set of negative vertices $U$, we let $G_U$ denote the subgraph
obtained by restricting the set of negative edges to those
corresponding to $U$. We let $d_U(\cdot,\cdot) = d_{G_U}(\cdot,\cdot)$
denote distances in $G_U$.  We let $G^+ = G_{\emptyset}$ denote the
subgraph of nonnegative edges.

\paragraph{Hop distances and proper hop distances.}
The number of \emph{hops} in a walk is the number of negative edges
along the walk, counted with repetition. An \defterm{$h$-hop walk} is
a walk with at most $h$ negative edges.  The \defterm{$h$-hop
  distance} from $s$ to $t$ is the minimum length over all $h$-hop
$(s,t)$-walks.  Hop distances satisfy
\begin{align*}
  \hopd{h+1}{s,t} = \min{\hopd{h}{s,t},\, \min_{u \in N} \hopd{h}{s,u} +
  \len{u,v}+
  \hopd{0}{v,t}}. \labelthisequation{hop-distance}
\end{align*}
Given $\hopd{h}{s,u}$ for all $u$, one can compute
$\min_{u \in N} \hopd{h}{s,u} + \len{u,v} + \hopd{0}{v,t}$ for all $t$
with a single call to Dijkstra's algorithm over an appropriate
auxiliary graph. Thus \refequation{hop-distance} also describes a
dynamic programming algorithm computing single-source hop distances in
$\bigO{\mu}$ time per hop \cite{DI17,BNW22}.  With parent pointers,
one can also recover the underlying $h$-hop walks in the same running
time.

A \emph{proper $h$-hop walk} is defined as a walk with exactly $h$
negative edges and where all negative vertices are distinct.  For
vertices $s,t$, let
\begin{math}
  \shopd{h}{s,t}
\end{math}
denote the infimum length over all proper $h$-hop walks from $s$ to
$t$.  We call
\begin{math}
  \shopd{h}{s,t}
\end{math}
the \defterm{proper $h$-hop distance} from $s$ to $t$.

Proper hop distances are generally intractable.  For large $h$, proper
hop distances capture NP-Hard problems such as Hamiltonian path. They
were introduced in \cite{HJQ25} to extract larger ``negative
sandwiches'', a precursor to remote sets, with an argument that
bypasses this intractability.  In \cite{HJQ26}, proper hop distances
played a second role in the construction of the bootstrapped hop
reducer. Here, a random sampling technique used to compute the
distance estimates only works for walks with many distinct negative
edges.

\paragraph{Vertex potentials.}

Given vertex potentials $\p: V \to \reals$, let
$\lenp{e} \defeq \len{e} + \p{u} - \p{v}$ for an edge $e = (u,v)$. We
say that $\p$ \emph{neutralizes} a negative edge $e$ if
$\lenp{e} \geq 0$, and neutralizes a negative vertex if it neutralizes
the corresponding negative edge. \citet{Johnson77} observed that the
potential $\p{v} = d(V,v)$ neutralizes all edges (assuming there are
no negative-length cycles). We let $G_{\varphi}$ denote the graph with
edges reweighted by $\p$ and $\disp$ denote distances \wrt $\lenp$.

We say that potentials $\p$ are \emph{valid} if $\lenp{e} \geq 0$ for
all $e$ with $\len{e} \geq 0$; i.e., $\p$ does not introduce any new
negative edges. For any hop parameter $h \in \naturalnumbers$ and any
set of vertices $S$, the potentials $\p{v}_1 = \hopd{h}{S,v}$ and
$\p{v}_2 = -\hopd{h}{v, S}$ are valid.  Given two valid potentials
$\p_1$ and $\p_2$, $\max{\p_1,\p_2}$ and $\min{\p_1,\p_2}$ are also
valid potentials. If $\p_1$ is valid for $G$, and $\p_2$ is valid for
$G_{\p_1}$, then $\p_1 + \p_2$ is valid for $G$.

All algorithms discussed here follow an incremental reweighting
approach, dating back to \cite{Goldberg95}, where the graph is
repeatedly reweighted along valid potentials that neutralize a subset
of negative edges, until (almost) all edges are nonnegative. (A small
number of remaining negative edges can always be neutralized by
Johnson's technique in nearly linear time per negative edge.)  Then
the distances can be computed by Dijkstra's algorithm in the
reweighted graph.  Henceforth we focus exclusively on the effort to
neutralize all or almost all of the negative edges.

When algorithms reweight the graph along multiple potentials in one
iteration, we treat any initially negative edge as negative throughout
the iteration, even if incidentally neutralized. This is needed to
preserve hop-counting invariants. $k$ remains the number of negative
edges at the beginning of the iteration.

\paragraph{Remote edges.}

We say that a vertex $s$ can \defterm{$h$-hop negatively reach}
another vertex $t$ if $s = t$ or there is an $h$-hop $(s,t)$-walk with negative
length.  For parameter $h,r \in \naturalnumbers$, and a set of
negative vertices $U \subseteq V$, $U$ is \defterm{$(h,r)$-remote} if
the collective $h$-hop negative reach of $U$ has at most $n/r$
vertices. We say $U$ is \defterm{$h$-remote} if it is $(h,h)$-remote.

The algorithms here and in \cite{Fineman24,HJQ25,HJQ26} alternate
between extracting polynomially sized remote sets, and neutralizing
them more efficiently. The first step of extracting a remote set is
quite involved, and since the original method was introduced in
\cite{Fineman24}, has been improved in both subsequent works
\cite{HJQ25,HJQ26}.

This work focuses exclusively on the second step of neutralizing
remote sets, and uses previous techniques to extract the remote set in
a black box fashion. We will use the following lemma.

\begin{lemma}
  \labellemma{extract-sandwich}
  \labellemma{new-sparse-betweenness-reduction} Let
  $h_0 = \bigOmega{\log n}$ and $h \geq h_0$. One can compute, with
  high probability in
  $\bigO{h\mu \log^2 n + h^2\sqrt{k/h_0^3}\log^2 n}$ randomized time,
  either:
  \begin{compactmathresults}
  \item A negative cycle.
  \item Valid potentials $\varphi$ neutralizing a set of
    $\bigOmega{\sqrt{\numN h_0}}$ negative vertices.
  \item Valid potentials $\varphi$ and a set $U$ of negative vertices such that:
    \begin{compactmathproperties}
    \item $\sizeof{U} \geq \bigOmega{\sqrt{\numN h_0}}$.
    \item $U$ has $\varh$-hop negative reach of size $n\varh/h$ for
      all $\varh \geq h_0$.
    \end{compactmathproperties}
  \end{compactmathresults}
\end{lemma}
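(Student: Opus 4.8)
The plan is to repackage the remote-set extraction and ``betweenness reduction'' machinery of \cite{Fineman24,HJQ25,HJQ26} into the two-parameter $(h_0,h)$ form required here. At the top level the procedure is a dichotomy: it will either accumulate valid potentials neutralizing a batch of $\bigOmega{\sqrt{\numN h_0}}$ negative vertices, or certify that a random subsample of the surviving negative vertices is remote in the graduated sense demanded by the third alternative; any negative-length cycle exposed along the way is reported.

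First I would run the betweenness-reduction loop. For each dyadic hop scale $\varh \in \setof{h_0, 2h_0, 4h_0, \dots, h}$, and $\bigO{\log n}$ repetitions per scale, sample $\bigO{\log n}$ ``landmark'' negative vertices; for each landmark $x$, compute $\hopd{\varh}{x,v}$ and $\hopd{\varh}{v,x}$ for all $v$ via the hop-by-hop dynamic program of \refequation{hop-distance}, at $\bigO{\mu}$ time per hop, keeping parent pointers so that a negative cycle is exposed if one is present. This amounts to $\bigO{\log^2 n}$ single-source $h$-hop computations, i.e.\ the $\bigO{h\mu\log^2 n}$ first term of the running time. Each landmark $x$ supplies the valid potentials $\p{v} = \hopd{\varh}{x,v}$ and $\p{v} = -\hopd{\varh}{v,x}$, which we combine across landmarks and scales using the facts recorded in the preliminaries that validity survives pointwise $\min$, $\max$, and composition $\p_1 + \p_2$. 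Iterating a controlled (polylogarithmic) number of such rounds, aided by an auxiliary sub-procedure on small derived instances that accounts for the $h^2\sqrt{\numN/h_0^3}\log^2 n$ term, drives down the ``betweenness'' --- quantitatively, $\sum_{u}\sizeof{R^+_{\varh}(u)}$, the total $\varh$-hop negative reach summed over surviving negative vertices $u$ --- at every dyadic scale $\varh \in [h_0,h]$. If the accumulated potentials neutralize at least $\bigOmega{\sqrt{\numN h_0}}$ negative vertices, output them as the second alternative.

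Otherwise a constant fraction of the $\numN$ negative vertices survive and the betweenness is small at every dyadic scale. Now I would sample each surviving negative vertex independently into $U$ with probability $p = \bigTheta{\sqrt{h_0/\numN}}$, so that the expected size of $U$ is $\bigTheta{\sqrt{\numN h_0}}$ and $\sizeof{U} = \bigOmega{\sqrt{\numN h_0}}$ with high probability by a Chernoff bound. For each fixed dyadic $\varh$, the expected $\varh$-hop negative reach of $U$ in the reweighted graph $G_{\p}$ is at most $p\sum_{u}\sizeof{R^+_{\varh}(u)}$ over surviving $u$, which the small-betweenness bound makes $\bigO{n\varh/h}$; Markov's inequality, amplified by the $\bigO{\log n}$ repetitions, then gives $\varh$-hop negative reach $\bigO{n\varh/h}$ with probability at least $1 - n^{-10}$, and a union bound over the $\bigO{\log(h/h_0)}$ dyadic scales --- using monotonicity of negative reach in $\varh$ to interpolate between consecutive scales --- yields the bound for all $\varh \geq h_0$ at once. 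Output $(\p, U)$ as the third alternative. (The max-degree preprocessing converts this vertex-reach bound into the matching edge-reach bound, should it be needed downstream.)

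The crux, and the main obstacle, is the betweenness-reduction step: one must arrange that a single random subsample $U$ is simultaneously remote at every scale from $h_0$ up to $h$ \emph{and} retains $\sizeof{U} = \bigOmega{\sqrt{\numN h_0}}$, which forces the loop to establish a quantitatively strong multi-scale ``flatness'' --- in essence HJQ26's multi-scale remote-set guarantee, reparameterized --- and here the sampling probability, the landmark count, and the cost of the auxiliary sub-procedure must all be balanced against the claimed running time. The remaining bookkeeping --- that the composed potentials stay valid, and that treating incidentally-neutralized edges as still negative preserves the hop-counting invariants across the reweightings --- is comparatively routine.
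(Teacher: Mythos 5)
The paper does not actually prove this lemma: the sentence immediately following the statement reads ``\reflemma{extract-sandwich} combines Lemmas 3.4 and 6.1 from \cite{HJQ26}. The version in \cite{HJQ26} is stated for $h_0 = \bigO{\log n}$; the same proof extends to a general parameter $h_0$.'' So the intended ``proof'' is a black-box citation plus a remark about reparameterization, and your attempt to reconstruct the argument from first principles is doing work the paper deliberately outsources. Against that standard, your sketch captures the correct high-level template --- betweenness reduction, then a dichotomy between neutralizing many vertices and certifying a random subsample remote --- but it has two substantive gaps that would need to be filled.

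First, you equate ``betweenness'' with $\sum_{u}\sizeof{R^+_{\varh}(u)}$, the total $\varh$-hop negative reach over surviving negative vertices, and then feed this directly into a Markov bound to certify remoteness of the sampled $U$. But the betweenness-reduction subroutine (described in the appendix, \reflemma{recursive-betweenness-reduction}) controls a different quantity: for every fixed pair $s,t$, the number of vertices $v$ with $\hopd{h}{s,v}+\hopd{h}{v,t}<0$. Converting a uniform pairwise bound into a bound on $\sum_u\sizeof{R^+_{\varh}(u)}$ is precisely where the ``negative sandwich'' argument of \cite{HJQ25,HJQ26} lives, and the jump from one to the other is not a formality --- it is the content of HJQ25's Lemma 4.1 / HJQ26's Lemma 3.4, which use proper hop distances to extract $\bigOmega{\sqrt{\numN h_0}}$-sized remote sets from the betweenness bound. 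Your sketch never invokes proper hop distances at all, which are flagged in \refsection{preliminaries} as the crucial tool for obtaining remote sets of this size rather than the much smaller sets available from naive sampling. Second, your running-time accounting attributes the $h^2\sqrt{\numN/h_0^3}\log^2 n$ term to an unnamed ``auxiliary sub-procedure on small derived instances''; this term in fact arises from the recursive shortest-path call inside the betweenness-reduction construction (cf.\ the discussion around \reflemma{recursive-betweenness-reduction}), and your outer loop of landmark hop-distance computations does not by itself produce it. In short, the skeleton is right but the two load-bearing steps --- sandwich extraction via proper hop distances, and the cost structure of betweenness reduction --- are exactly the ones you left as placeholders.
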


\reflemma{extract-sandwich} combines Lemmas 3.4 and 6.1 from
\cite{HJQ26}. The version in \cite{HJQ26} is stated for
$h_0 = \bigO{\log n}$; the same proof extends to a general parameter
$h_0$ as stated above.

\paragraph{Hop reducers.}

$h$-remote sets were introduced by \cite{Fineman24} to construct
linear-size hop reducers.  Formally, we say that a graph
$H = (V_H,E_H)$ is an \defterm{$h$-hop reducer} for a graph
$G = (V, E)$ if $V \subseteq V_H$ and
\begin{math}
  \distance{s,t}_{G} \leq \hopd{\roundup{\varh/h}}{s,t}_H \leq \hopd{\varh}{s,t}_{G}
\end{math}
for all $s,t \in V$ and $\varh \in \naturalnumbers$.

The layered graph from \cite{Fineman24}, described in
\refsection{overview}, was an $h$-hop reducer of size $\bigO{m}$. The
bootstrapped construction from \cite{HJQ26} gave an $h^2$-hop reducer
of size $\apxO{m}$.

\section{Layered sparsification and recursion}

\labelsection{layered-sparsification}

We first describe and analyze a relatively simple algorithm running in
$\apxO{m n^{\sqrt{3}-1}}$ randomized time.  This running time improves
the previous $\apxO{m n^{3/4} + m^{4/5} n}$ running time while cleanly
demonstrating two of the four main new techniques, layered
sparsification and recursion. Subsequent sections build on these
ideas to obtain even faster running times.

Layered sparsification recasts the layered auxiliary graph from
\cite{Fineman24} from a hop-reducer to a ``negative edge sparsifier'',
reducing the number of negative edges by essentially the same hop
reduction factor.
\begin{lemma}
  \labellemma{layered-sparsification} Let $U$ be a set of negative
  vertices with $h$-hop negative reach $n/r$. Let $m' = (2 + h/r)m + (h/r)n$ and
  $n' = (2 + h/r)n$. In $\bigO{m'}$ randomized time and with high
  probability, one can compute a graph $H$, potentials
  $\varphi: V_H \to \reals$, and mappings $\pi_0,\pi_1: V_G \to V_H$,
  such that:
  \begin{mathproperties}
  \item $H$ has $m'$ edges and $n'$ vertices.
  \item $H_{\varphi}$ has $\bigO{\sizeof{U} \log{n} / h}$ negative
    edges.
  \item \label{layered-sparsification-distances} For any two vertices $u,v \in V_G$,
    $d_{G_U}(u,v) = d_H(\pi_0(u),\pi_1(v))$.
  \end{mathproperties}
\end{lemma}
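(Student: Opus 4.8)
The plan is to take the layered hop-reducer $H'$ of \cite{Fineman24} described in \refsection{overview}, randomly sparsify its set of negative ``reset'' arcs, and then verify the three claimed properties. First I would set up $H'$ precisely: layer $L_0$ is a copy of $G^+$, layers $L_1,\dots,L_h$ are copies of $G_h^+$ where $G_h$ is the subgraph induced by the $h$-hop negative reach of $U$; for each negative vertex $u \in U$ (say with negative edge $(u,v)$) and each index $i$ we add a forward arc $(u_i, v_{i+1 \bmod h})$; we add length-$0$ self arcs $(u_i, u_{i+1\bmod h})$ for $u \in V_h$; and we add exit arcs $(x_i, y_0)$ for each $(x,y) \in \outcut{V_h}$ and each $i \geq 1$. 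Counting edges gives $m'$ and $n'$ as stated once one observes $|V_h| \leq n/r$ and $|E(G_h^+)| \leq (h/r)m$ — this accounts for the $h/r$ factors — plus the $2m,2n$ for $L_0$ and the exit/self arcs, giving property (i). The maps are $\pi_0(v) = v_0$ and $\pi_1(v) = v_0$ as well (both land in the base layer $L_0$), and the potential is $\varphi(v_i) = \hopd{i}{V,v}$ (more precisely $\hopd{\min(i,h)}{V,v}$, using that $v \in V_h$ for $i \geq 1$).

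Next I would sample: include each negative arc $(u_i, v_{i+1})$ of $H'$ in $H$ independently with probability $p = \min(1, c\log n / h)$ for a suitable constant $c$, keeping all other arcs (the self arcs, exit arcs, and the copies of $G^+$, $G_h^+$). This immediately gives property (ii): the expected number of surviving negative reset arcs is at most $|U| h \cdot p = O(|U|\log n)$ across all layers, but the only arcs of $H'_\varphi$ that are negative are the ``wrap-around'' arcs from layer $L_h$ back to $L_0$ (by the validity argument of \cite{Fineman24}: $\varphi$ neutralizes every forward arc since $\hopd{i}{V,v} \le \hopd{i-1}{V,u} + \len{u,v}$, and neutralizes every exit arc since the head leaves the $h$-hop negative reach), of which there are $|U|$, each surviving with probability $p$, so $O(|U|\log n / h)$ in expectation; a Chernoff bound gives the high-probability statement. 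Running time is $O(m')$: we just build $H'$ and flip coins.

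The main obstacle — and the heart of the proof — is property (iii), distance preservation, and specifically the direction $d_H(v_0, w_0) \le d_{G_U}(v,w)$, since sparsification only removes arcs and hence can only increase distances; the reverse inequality $d_{G_U}(v,w) \le d_H(v_0,w_0)$ follows by projecting any $H$-walk back to $G_U$ (every arc of $H$ maps to an arc of $G_U$ of the same length, as in \cite{Fineman24}). For the forward direction, take a shortest walk $W$ from $v$ to $w$ in $G_U$; after the last negative edge, $W$ stays in $G^+$, and before entering $G_h$ it is also in $G^+$ outside the negative reach — the delicate point is that once $W$ enters $V_h$ and starts using negative edges, every vertex it visits until it next exits $V_h$ lies in $G_h$, so it can be routed through the layers $L_1,\dots,L_h$. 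I would break $W$ into maximal negative-edge-carrying segments inside $V_h$ separated by excursions through $G^+\setminus V_h$ (handled in $L_0$ via exit arcs), and within each such segment walk up through consecutive layers using self arcs for non-sampled hops and the surviving sampled arcs to advance; the key claim is that \emph{with high probability every maximal run of at most $h$ consecutive hops of $W$ contains a sampled negative arc, so the segment can be closed off back to $L_0$}. Here I must be careful that $W$, being shortest in $G_U$, is proper on any $h$-hop window (no repeated negative vertex within $h$ hops — else one could delete a nonnegative-length subwalk) so that the $\le h$ negative arcs in a window are distinct and each was sampled independently; a union bound over the $\le |U|$ windows and over all vertices then gives the high-probability event. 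On that event, $W$ is realized in $H$ with the same length (every extra self/exit arc has length $0$ or matches $G$), completing $d_H(v_0,w_0) \le \len{W} = d_{G_U}(v,w)$. The one subtlety to track is walks with more than $h$ hops through $V_h$ with \emph{no} window of $h$ distinct negatives: this cannot happen for a shortest walk precisely because repeating a negative vertex within $h$ hops creates a closed subwalk of nonnegative length (no negative cycles, by assumption) that can be excised, so shortest walks are ``window-proper,'' and the argument goes through.
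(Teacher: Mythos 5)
There is a genuine gap in the sampling step of your construction. You propose to include \emph{every} negative arc $(u_i,v_{i+1})$ of $H'$ — that is, every inter-layer forward arc — independently with probability $p = \Theta(\log n / h)$, keeping only the self arcs, exit arcs, and the nonnegative layer copies. But the forward arcs between consecutive layers are the only mechanism by which a hop $(u,v)$ of $W$ can be traversed: each layer $L_i$ for $i\geq 1$ is a copy of $G_h^+$, which contains no negative edges, and a self arc $(u_i,u_{i+1})$ moves you to the next layer while leaving you at the same vertex $u$, not at $v$. So if a hop $(u,v)$ along the segment you are embedding is \emph{not} sampled, you are stuck at $u_i$ with no way to reach any copy of $v$; "using self arcs for non-sampled hops" does not resolve this, because the self arc never crosses the missing edge. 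As written, the embedding direction $d_H(\pi_0(s),\pi_1(t)) \le d_{G_U}(s,t)$ fails whenever any of the (typically $\Theta((1-p)h)$) dropped forward arcs appear in a segment. Your accounting "$|U|h \cdot p$ surviving arcs, of which only $|U|p$ wrap-around arcs are negative under $\varphi$" correctly counts negative edges but silently drops $\Theta((1-p)|U|h)$ nonnegative-after-reweighting forward arcs that the embedding needs.

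The fix — which is what the paper does — is to keep \emph{all} forward arcs $(u_i,v_{i+1})$ for $i \in \{0,\dots,h-1\}$ deterministically, so that every hop of a $\le h$-hop segment can be traced through the layers, and to sample only the additional "reset" arcs $(u_h,u_0)$ of length $0$ for $u$ in a random $U_0 \subseteq U$ of size $O(|U|\log n/h)$. These reset arcs are the only arcs of $H_\varphi$ that remain negative (the potential $\varphi(v_i)=\hopd{i}{V,v}$ neutralizes all forward and exit arcs, exactly as you argue), so the negative-edge count you want falls out, while the embedding works: you split $W$ at the sampled vertices $U_0$ (hit once every $\le h$ hops w.h.p.), route each $\le h$-hop subwalk through the layers using the always-present forward arcs, and glue subwalks together with the length-$0$ reset arcs at the sampled vertices. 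A related but smaller issue: your choice $\pi_1(v)=v_0$ requires the embedding to return to layer $0$ at $t$, which it cannot do (exit arcs require leaving $V_h$, wrap-around arcs require $t$ to be a sampled negative vertex); the paper instead takes $\pi_1(v)=v_h$ and lets the embedding terminate in the final $G^+$ layer. Your high-level plan — take Fineman's layered graph, sparsify to get $O(|U|\log n/h)$ negatives, argue distance preservation via a hitting-set / properness argument — is the right one, but the implementation needs to sparsify only the recycling arcs, not the machinery for simulating hops.
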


\begin{proof}
  For ease of notation we let $G = G_U$. We assume
  $h \geq \bigOmega{\log n}$, since otherwise we can take $H = G$.

  The auxiliary graph $H$ is an $(h+1)$-layer graph based on the
  layered $h$-hop reducer introduced in \cite{Fineman24}.  Let $G_h$
  be the $h$-hop negative reach of $U$.  We start with two copies of
  $G^+$ (layers $0$ and $h$) and $h-1$ copies of $G_h^+$ (layers $1$
  through $h-1$).  For a vertex $v$ and index
  $i \in \setof{0,\dots,h}$, let $v_i$ denote the copy of $v$ in layer
  $i$.  We connect these $h+1$ nonnegative subgraphs with four types
  of arcs.  First, for each layer $i \in \setof{0,\dots,h-1}$ and
  negative arc $(u,v)$, we add a ``forward arc'' $(u_i,v_{i+1})$ from
  the $i$th layer to the next with length $\len{u,v}$. Second, for
  each such layer $i$ and vertex $x \in G_h$ we also add a
  ``self-arc'' $(x_i,x_{i+1})$ with length $0$.  Third, for each index
  $i \in \setof{1,\dots,h}$ and each arc $(x,y) \in \outcut(G_h)$
  leaving $G_h$, we add an ``exit'' arc $(x_i,y_{0})$ of length
  $\len{x,y}$.  Lastly, let $U_0 \subseteq U$ sample
  $\bigO{\sizeof{U} \log{n} / h}$ negative vertices uniformly at
  random from $U$. For each sampled negative vertex $u \in U_0$, we
  add a ``reset arc'' $(u_h,u_0)$ with length $0$.  Lastly, we define
  the maps $\pi_0,\pi_1: V_G \to V_H$ by $\pi_0(v) = v_0$ and
  $\pi_1(v) = v_h$.

  \begin{center}
    \includegraphics{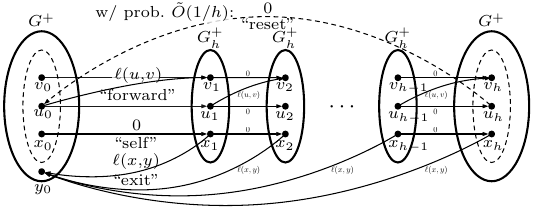}
  \end{center}

  We define potentials $\pote$ over $V_H$ by
  \begin{math}
    \pote{v_i} = d^i_{U}(V, v).
  \end{math}
  It is easy to see that $\pote$ is valid within each of the
  nonnegative subgraphs, and over the self-arcs. It is also
  straightforward to verify that $\pote$ neutralizes all forward
  arcs. As observed in \cite{Fineman24}, $\pote$ is valid over the
  exit arcs $(u_i,v_0)$, where $u \in G_h$ and $v \notin G_h$, precisely
  because $v$ is not in the $h$-hop negative reach of $U$. Thus the only
  negative arcs in $H_{\varphi}$ are the reset arcs $(u_h,u_0)$ for
  $U_0 \subseteq U$, of which there are
  $\bigO{\sizeof{U} \log{n} /h}$.

  It remains to prove \cref{layered-sparsification-distances}. The
  edges in $H$ are either length 0 self or reset arcs $(u_i,u_j)$
  between auxiliary copies of the same vertex $u$, or copies
  $(u_i,v_j)$ of an edge $(u,v)$ in $G$. By dropping the length-0 self
  and reset arcs, replacing auxiliary arcs $(u_i,v_j)$ with the
  underlying arcs $(u,v)$, every walk from $s_i$ to $t_j$ in $H$ maps
  to a walk from $s$ to $t$ in $G$ with the same distance. Thus
  $\distance{s_0,t_h}_H \geq \distance{s,t}_G$ for all
  $s,t \in V$.

  Towards proving the reverse inequality, we first claim
  that
  \begin{align}
    \distance{s_0,t_h}_H \leq \hopd{h}{s,t}_G \text{ for all } s,t \in
    V. \labelthisequation{embed-h-hops}
  \end{align}
  To this end, we prove that
  \begin{align*}
    \distance{s_0,t_h}_H \leq \phopd{\varh}{s,t}_G \text{ for all } s,t \in
    V \text{ and } \varh \leq h,
  \end{align*}
  by induction on $\eta$.

  In the base case, suppose we have a $0$-hop walk $W: s \leadsto
  t$. Let $W' : s_0 \leadsto t_h$ where we first map $W$ to the
  identical walk from $s_0$ to $t_0$ in the $0$th layer, and then take
  self arcs $(t_i,t_{i+1})$ from $t_0$ to $t_h$. $W'$ has the same
  length as $W$.

  In the general case, let $W: s \leadsto t$ be a proper $\varh$-hop
  walk of length $\phopd{\varh}{s,t}_G$ for $1 \leq \varh\leq h$.  We
  have two cases depending on whether $W$ stays in $G_h$ all throughout
  between its first and last hops.

  In the first case, suppose $W$ stays in $G_h$ between its first and
  last hops. Let $(x,y)$ be the last hop in $G_h$.  Let
  $W_1: s \leadsto x$ be the $(\varh-1)$-hop prefix of $W$ up to $x$
  and let $W_2 : y \leadsto t$ denote the suffix beginning at $y$.
  Consider the walk $W'_1: s_0 \leadsto x_{\varh - 1}$ in $H$, where we
  route $W_1$ in the natural way: starting from layer 0, we map the
  nonnegative arcs between the $i$th and $i+1$th hop to the
  corresponding nonnegative arcs in the $i$th layer, and we map the
  $i$th hop $(u,v)$ to the corresponding forward arc $(u_{i-1},v_{i})$
  from the $(i-1)$th layer to the $i$th. Let $W_2' : y_h \leadsto t_h$
  be the shortest $0$-hop walk from $y_h$ to $t_h$ in the $h$th layer,
  which we recall is a copy of $G^+$. To connect
  $W_1': s_0 \leadsto x_{\varh - 1}$ to $W_2' : y_h \leadsto t_h$, from
  $x_{\varvarh}$, we take the forward arc $(x_{\varh-1},y_{\varh})$, and
  then self-arcs $(y_i,y_{i+1})$ until reaching $y_h$. All together, we have an
  $(s_0, t_h)$-walk $W'$ whose length is identical to $W$.

  \begin{center}
    \includegraphics{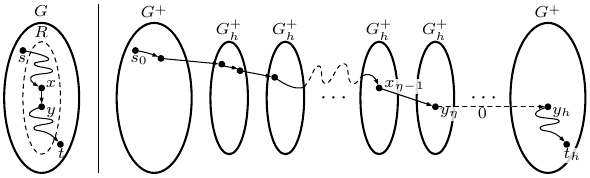}
  \end{center}

  In the second case, $W$ leaves $G_h$ between its first and last
  hop. Let $(x,y) \in \outcut{G_h}$ be the first arc where $W$ leaves
  $G_h$. Let $W_1: s \leadsto x$ be the prefix of $W$ up to $x$ and let
  $W_2 : y \leadsto t$ denote the suffix beginning at $y$.  Both $W_1$
  and $W_2$ have less than $\varh$ hops. By induction, there exist
  walks $W_1': s_0 \leadsto x_h$ and $W_2': y_0 \leadsto t_h$ no
  longer than $W_1$ and $W_2$, respectively. We connect $W_1'$ and
  $W_2'$ by inserting the exit arc $(x_h,y_0)$ between them. Together,
  $W_1'$, $(x_h,y_0)$, and $W_2'$ gives a $0$-hop $(s_0,t_h)$-walk no
  longer than $W$.

  \begin{center}
    \includegraphics{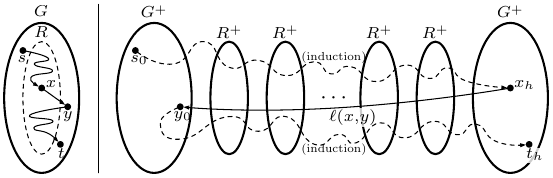}
  \end{center}

  This establishes inequality \refequation{embed-h-hops}.  Returning to
  \cref{layered-sparsification-distances}, fix any pair $s,t \in V$,
  and let $W: s \leadsto t$ be the shortest $(s,t)$-walk.  Of the many
  hops in $W$, some are from a vertex sampled in $U_0$ and some are
  not.  Let $u_1,u_2,\dots,u_{\ell} \in U_0$ be the sampled vertices
  visited by $W$, and divide $W$ into subwalks
  $W_1: s \to u_1, W_2: u_1 \to u_2,\dots,W_{\ell+1}: u_{\ell} \to t$
  at these points.

  Since $U_0$ samples each negative vertex with probability
  $\bigO{\log{n}/h}$, with high probability, there are no subsequences
  of $h$ consecutive hops in $W$ that do not include a vertex in
  $U_0$. In particular, each subwalk $W_i$ has at most $h$-hops.

  For each subwalk $W_i: u_i \to u_{i+1}$ (letting $u_0 = s$ and
  $u_{\ell+1} = t$), invoking the claim for $h$-hop distances above,
  let $W_i' : u_{i,0} \leadsto u_{i+1,h}$ be the walk in $H$ with
  length at most $\len{W_i}$. Let $W': s_0 \leadsto t_h$ be the walk
  obtained by connecting the subwalks $W_i' : u_i \to u_{i+1}$ with
  reset arcs $u_{i+1,h} \to u_{0,h}$ in between. We have
  \begin{math}
    \len_H(W') = \sum_i \len_H(W_i') \leq \sum_i \len{W_i} = \len{W},
  \end{math}
  as desired.
\end{proof}

The advantage of layered sparsification over hop reduction is that, by
reducing the number of negative edges, we have a materially smaller
subproblem amenable to recursion. Applying recursion in this
straightforward manner leads to the following.

\begin{theorem}
  The SSSP problem for real-weighted graphs can be solved with high
  probability in $\apxO{m n^{\sqrt{3} - 1}}$ randomized time.
\end{theorem}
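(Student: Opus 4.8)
The plan is to fold \reflemma{layered-sparsification} into the usual incremental-reweighting loop and recurse on the sparsifier it returns; write $\mu = m + n\log n = \apxO{m}$ (taking $m = \bigOmega{n}$ without loss of generality), so a Dijkstra run or one hop of the dynamic program costs $\apxO{m}$. Each iteration works on the current graph $G$ with $k \le n$ negative edges, using a hop parameter $h = h(k)$ that I will pin down when solving the recursion. First I would call \reflemma{extract-sandwich} with $h_0 \defeq h$ and with its hop parameter set to $h^2$; this is legal once $h = \bigOmega{\log n}$, i.e.\ once $k$ exceeds a polylogarithmic threshold, below which I instead finish by neutralizing the remaining negative edges one at a time by Johnson's technique in $\apxO{m}$ total. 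The call runs in $\apxO{mh^2}$ time, the first of its two terms dominating because $hk \le k^2 \le n^2 \le m^2$. If it reports a negative cycle we stop; if it returns valid potentials neutralizing $\bigOmega{\sqrt{kh}}$ negative edges we apply them and proceed to the next iteration; otherwise we obtain valid potentials (which we apply) and a set $U$ of $\bigOmega{\sqrt{kh}}$ negative vertices whose $h$-hop negative reach has size $nh/h^2 = n/h$, i.e.\ $U$ is $h$-remote. I then feed $U$ to \reflemma{layered-sparsification}: since the $h$-hop negative reach of $U$ is at most $n/h$, the lemma's parameter satisfies $r \ge h$, so in $\bigO{m}$ time it returns a graph $H$ with $\bigO{m}$ edges and $\bigO{n}$ vertices, a potential $\varphi$ under which $H_\varphi$ has only $\apxO{\sqrt{k/h}}$ negative edges, and maps $\pi_0,\pi_1$ with $d_U(u,v) = d_H(\pi_0(u),\pi_1(v))$ for all $u,v \in V$.

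Next I would recurse on $H_\varphi$ --- a graph with $\bigO{m}$ edges but only $\apxO{\sqrt{k/h}}$ negative edges --- at cost $\Time{m, \apxO{\sqrt{k/h}}}$, obtaining valid potentials $\psi$ under which $H_{\varphi+\psi}$ is nonnegative. (If the recursion instead reports a negative cycle of $H_\varphi$, deleting the length-$0$ self and reset arcs and replacing each copied arc by its original projects it to a negative closed walk in $G_U \subseteq G$.) A single Dijkstra run in $H_{\varphi+\psi}$ from a super-source joined by length-$0$ arcs to $\pi_0(V)$, followed by undoing the reweighting, computes $d_H(\pi_0(V),\pi_1(v)) = \min_{u \in V} d_H(\pi_0(u),\pi_1(v)) = \min_u d_U(u,v) = d_U(V,v)$ for every $v$. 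Setting $\chi(v) \defeq d_U(V,v)$ gives Johnson's potentials for $G_U$: they neutralize every edge of $G_U$ --- in particular all of $U$'s negative edges --- and, since $G$ and $G_U$ share the same nonnegative edges, $\chi$ is valid for $G$. Applying $\chi$ thus removes $\bigOmega{\sqrt{kh}}$ negative edges from $G$; composing the handful of valid potentials used in this iteration is routine. Since each iteration removes $\bigOmega{\sqrt{kh}}$ of the current negative edges, $\apxO{\sqrt{k/h}}$ iterations suffice to halve $k$ (over one halving both $k$ and $h$ move by $\bigO{1}$ factors), and summing the resulting geometric series gives $\Time{m,k} = \apxO{\sqrt{k/h}\parof{mh^2 + \Time{m,\sqrt{k/h}}}}$.

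Finally I would solve this recursion. With the ansatz $\Time{m,k} = \apxO{m k^\alpha}$ and the choice $h = \bigTheta{k^{\alpha/(4+\alpha)}}$, which balances $\sqrt{k/h}\cdot mh^2$ against $\sqrt{k/h}\cdot\Time{m,\sqrt{k/h}}$, the ansatz is self-consistent exactly when $\alpha^2 + 2\alpha - 2 = 0$, i.e.\ $\alpha = \sqrt3 - 1$. Here $\sqrt{k/h}$ is a fixed power of $k$ strictly below $1$, so the recursion bottoms out after $\bigO{\log\log n}$ levels; the $\bigO{1}$-factor growth of $m$ and $n$ per level therefore costs only a $\log^{\bigO{1}} n$ factor overall (absorbed into $\apxO{\cdot}$), and the $k = \log^{\bigO{1}} n$ base case costs $\apxO{m}$. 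With $k \le n$ this yields $\Time{m,k} = \apxO{m n^{\sqrt3-1}}$, as claimed. Essentially all of the real work already lives in \reflemma{extract-sandwich} and \reflemma{layered-sparsification}; I expect the only genuinely fiddly part --- the ``obstacle'' --- to be the bookkeeping that keeps the recursion honest: staying inside the regime $h = \bigOmega{\log n}$, bounding the cross-level blow-up of $m$ and $n$, threading negative-cycle certificates back up through the recursion, and confirming that the single-Dijkstra read-off of $d_U(V,\cdot)$ off the neutralized sparsifier really is a valid-for-$G$ potential neutralizing all of $U$ --- which is immediate from the distance-preservation property \cref{layered-sparsification-distances} of \reflemma{layered-sparsification} together with Johnson's observation.
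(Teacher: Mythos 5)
Your proposal is correct and follows essentially the same route as the paper: alternate \reflemma{extract-sandwich} (with $h_0 = h$ and the lemma's outer hop parameter $h^2$, giving an $h$-remote set of size $\bigOmega{\sqrt{kh}}$) with \reflemma{layered-sparsification}, recurse on the sparsified graph, read off Johnson's potentials for $G_U$ via a single Dijkstra, and solve the recurrence $\Time{m,k} = \apxO{\sqrt{k/h}\parof{mh^2 + \Time{m,\sqrt{k/h}}}}$ with $h = \bigTheta{k^{(2\sqrt 3 - 3)/3}}$ to obtain $\apxO{mk^{\sqrt 3 - 1}}$. Your write-up is in fact somewhat more explicit than the paper's on the small bookkeeping points (which lemma parameter is which, propagating negative-cycle certificates up the recursion, and the $\bigO{\log\log n}$ recursion depth), but the algorithm and analysis are identical.
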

\begin{proof}
  More precisely, we prove that a graph with $k$ negative edges and
  preprocessed as described in \refsection{preliminaries} can be
  neutralized in $\apxO{m k^{\sqrt{3}-1}}$ randomized time.

  The high-level idea combines \reflemma{layered-sparsification} with
  recursion. Let $h = \apxO{\smash{k^{\frac{2\sqrt{3} - 3}{3}}}}$. The
  algorithm iteratively neutralizes subsets of negative edges until
  there are no negative edges.  Each iteration, we invoke
  \reflemma{extract-sandwich} with parameters $h_0 = h$.
  With high probability, we either find and return a
  negative cycle, directly neutralize $\sqrt{k h}$ negative vertices
  and repeat, or obtain a valid potential $\varphi$ and a negative
  sandwich $U$ of $\bigOmega{\sqrt{k h}}$ negative vertices with
  $h$-hop negative reach $n / h$ in $G_{\varphi}$.

  In the latter scenario, we apply \reflemma{layered-sparsification}
  to $U$, obtaining the auxiliary graph $H$, potentials $\varphi$, and
  mappings $\pi_0,\pi_1$ as described in
  \reflemma{layered-sparsification} with high probability. $H$ has
  $\bigO{m}$ edges and $\bigO{n}$ vertices. We recurse on $H_{\pote}$
  to compute potentials $\poteB: V_H \to \reals$ neutralizing
  $H_{\pote}$. We then compute the Johnson potentials
  $\poteC{v} = d_U(V,v)$ for $G_U$ via the neutralized graph
  $H_{\pote,\psi}$, in $\bigO{m}$ time, neutralizing $U$.

  Each randomized step succeeds with high probability, and there are
  polynomially many steps, so the entire algorithm succeeds with high
  probability.

  Let $\Time{m,k}$ bound the running time on a graph with $m$ edges
  and $k$ negative edges. $\Time{m,k}$ is bounded recursively by
  \begin{align*}
    \Time{m,k} \leq \apxO{\sqrt{k / {h}} \parof{m h^2 + \Time{m,
    \sqrt{k / {h}}}}},
  \end{align*}
  which is satisfied by $\Time{m,k} = \apxO{m k^{\sqrt{3}-1}}$. For
  $k = n$ we obtain the desired running time.
\end{proof}

\section{Bootstrapping sparse hop reducers}
\labelsection{sparse-hop-reducers} \labelsection{sparse-bootstrap}

This section revisits and improves the bootstrapped hop reducer from
\cite{HJQ26}. The bootstrapping construction from \cite{HJQ26} was
described previously in \refsection{bootstrap-overview}. The
high-level idea, given a remote set of negative edges $U$, is to
gradually build out a hop reducer for $U$ working from the $\eta$-hop
negative reach for small $\eta$, and gradually extending $\eta$ until
we have a hop reducer for all of $G_U$. In principle, for the smallest
values $\eta$, the subgraph induced by the negative reach is small
enough for direct approaches, and then the hop reducer for one value
of $\eta$ can be used to help construct the hop reducer for the next
value of $\eta$. There are a number of details involved in this
high-level idea, and this section will make two important changes to
the bootstrapping construction described in \cite{HJQ26}.

Let $h, h_0 \in \naturalnumbers$ be parameters, to be determined, with
$h > h_0 \geq \bigOmega{\log n}$.  Let $U$ be a set of negative
vertices that is $(\varh, \varh/h)$-remote for all $\varh \geq h_0$.
The goal of this section is to construct an $h$-hop reducer for $G_U$.
Dropping any negative vertex outside of $U$, we assume that $G = G_U$.

The bootstrapping construction builds out hop reducers over the
$\varh$-hop negative reach of $U$ incrementally from small $\varh$ all
the way out to $h$. To index these steps, let $L = \logup[2]{h} + 1$,
$i_0 = \logup[2]{h_0}$, and $i_1 = 2i_0$. For each
$i \in \setof{i_0,\dots,L-1}$, let $V_i$ be the $2^i$-hop negative
reach of $U$, and let $G_i$ be the subgraph induced by $V_i$. $G_i$
contains $\bigO{2^in / h}$ vertices and $\bigO{2^i m / h}$ edges. Let
$V_L = V$ and $G_L = G$. For each $i$, let $d_i(\cdot, \cdot)$ denote
distances in $G_i$. Let $\outcut{V_i}$ be the directed cut of arcs
from $V_i$ to $V \setminus V_i$.

Following the bootstrapping framework of \cite{HJQ26}, for each index
$i$ successively, we create a $2^{i - 2}$-hop reducer $H_i$ over the
subgraph $G_i$. We construct the reducers incrementally, starting from
$H_{i_1 + 1}$, and use distances computed in previous hop reducers to
build the next hop reducer.  An important calculation motivating the
design is that each $H_i$ will have $\bigO{m 2^i/ h}$ edges, and will
be used for $\apxO{\sizeof{U}/2^i}$ shortest path computations (as
explained in \refsection{distance-estimates}). These two factors
cancel out nicely so that we end up spending $\apxO{m \sizeof{U} / h}$
time in each hop reducer $H_i$, independent of $i$.

Within this framework we make several changes. First, we do not start
the bootstrapping from the $\bigO{1}$-hop negative reach as
\cite{HJQ26}. The first hop reducer we construct will be for the
larger subgraph $G_{i_1}$. Ultimately, starting from a larger negative
reach will allow us to neutralize larger remote sets $U$ in each
iteration, as explained later in \refsection{altogether}.  To account
for the missing hop-reducers over the smaller negative-reach
subgraphs, we will assume we are given as an additional input
``distance estimates'' (defined momentarily) for all subgraphs from
$G_{i_0 + 1}$ to $G_{i_1}$, that suffice to initialize the bootstrap
process. Second, we create hop reducers for each subgraph $G_i$ that
are substantially sparser than those of \cite{HJQ26}, introducing
$\apxO{\sizeof{U}^2 / 2^i}$ additional edges for graph $G_i$ instead
of $\bigO{\sizeof{U}^2}$ additional edges as in \cite{HJQ26}. This
construction interacts synergistically with the decision to start the
bootstrapping at $G_{i_1}$ instead of $G_1$, as the $2^i$-factor is
much larger. The net effect is to improve the running time in sparse
graphs.

As alluded to above, the bootstrap construction uses shortest path
distances computed in the hop reducers for smaller subgraphs to
produce the next hop reducer for a larger subgraph. The distance
estimates computed via the smaller reducers will satisfy certain conditions
that are sufficient to construct the next hop reducer, which we formalize
via the notion of ``sparse distance estimates'' defined below. Sparse
distance estimates replace the (dense) distance estimates from
\cite{HJQ26}, and for a particular level $i$, will collectively be a
$\apxO{1/2^i}$-factor smaller than the set of dense distance
estimates at level $i$ in \cite{HJQ26}.

\begin{definition}
  We define \defterm{sparse distance estimates} at level $j$ as a set
  of negative vertices $X_j \subseteq U$, values $\delta_j(u,x)$ and
  $\delta_j(x,v)$ for $u \in U$, $x \in X_j$, and $v \in \Heads$, such
  that:
  \begin{mathproperties}
  \item \label{sparse-distance-estimates-1} $\delta_j(u,x) \geq \max{d(u,x), \hopd{2^{j-1}}{V_j,x}_j}$ for
    $u \in U$ and $x \in X_j$.
  \item \label{sparse-distance-estimates-2}
    $\delta_j(x,v) \geq \max{d(x,v), \hopd{2^{j}}{V_j,v}_j -
      \hopd{2^{j-1}}{V_j,x}_j}$ for all $x \in X_j$ and $v \in \Heads$.
  \item \label{sparse-distance-estimates-3}
    \begin{math}
      \min_{x \in X} \delta_j(u,x) + \delta_j(x,v) \leq \phopd{\eta}{u,v}_j
    \end{math}
    for all $u \in U$, $v \in \Heads$, and
    $\eta \in [2^{j-2},2^{j-1}]$.
  \end{mathproperties}
\end{definition}

The key difference from \cite{HJQ26} is we only have estimates to and
from a subset of vertices $X_j$, rather than estimates for every pair
(we will have $\sizeof{X_j} = \apxO{|U|/2^j}$ in the final
algorithm). Essentially, sparse distance estimates break every shortest proper
$\bigTheta{2^{j-1}}$-hop $(u, v)$ walk into two $2^{j - 1}$-hop walks,
from $u$ to some $x$ and from $x$ to $v$, for some $x \in X_j$. This
is reflected in
\cref{sparse-distance-estimates-3}. \Cref{sparse-distance-estimates-1,sparse-distance-estimates-2}
are carefully designed to work with a certain (natural) potential
function in the construction of the hop reducer, as we explain
momentarily.

\subsection{Constructing hop reducers from sparse distance estimates}

\labelsection{hop-reducer}

Suppose we are at a level $i \in \setof{i_1 + 1,\dots,L}$, and we have
sparse distance estimates at all preceding levels
$j \in \setof{i_0,\dots,i-1}$ (either given or previously
computed). The goal of this section is to use these distance estimates
to create a $\bigOmega{2^i}$-hop reducer $H_i$ for $G_i$. (The next
section will then use $H_i$ to compute the next set of sparse distance
estimates for level $i$.)

\begin{lemma}
  \labellemma{hop-reducer-construction} Let
  $i \in \setof{i_1 + 1,\dots,L}$, and suppose we have sparse distance
  estimates $(X_j,\delta_j)$ for all $j \in \setof{i_0,\dots,i-1}$. Then one can construct a
  $2^{i-2}$-hop reducer for $G_i$ with $\bigO{2^in/h}$
  vertices and $\bigO{2^im/h + \sum_{j = i_0 + 1}^{i - 1}\sizeof{U}\sizeof{X_j}}$
  edges.
\end{lemma}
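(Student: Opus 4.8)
The plan is to construct $H_i$ as a ``star'' graph following the blueprint sketched in \refsection{bootstrap-overview}, with a central base copy $G_i^+$ and one sparse shortcut gadget (a ``bowtie'' through the sample $X_j$) attached for each level $j \in \setof{i_0+1,\dots,i-1}$, plus a single layered gadget with $2^{i_0}$ copies of $G_{i_0}^+$ to handle walks with at most $2^{i_0}$ hops. First I would lay out the vertex set: a base copy of $G_i^+$ (with $v_i$ the copy of $v$), disjoint copies of $G_j^+$ for each $j \in \setof{i_0,\dots,i-1}$ (with $v_j$ the copy of $v$), an auxiliary copy of $X_j$ for each $j \in \setof{i_0+1,\dots,i-1}$ (with $x_j'$ the copy of $x$), and the $2^{i_0}$ layers $L_1,\dots,L_{2^{i_0}}$ of $G_{i_0}^+$. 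The vertex count is $\bigO{2^in/h}$: the base is $\bigO{2^in/h}$, each $G_j^+$ is smaller, the layered gadget contributes $2^{i_0}\cdot\bigO{2^{i_0}n/h} = \bigO{h_0^2 n/h} = \bigO{2^{i_1}n/h} \le \bigO{2^i n /h}$, and the auxiliary copies of $X_j$ are negligible. Then I would add the edges: within each nonnegative copy its own arcs; ``shortcut'' arcs $(u_i, x_j')$ of length $\delta_j(u,x)$ for $u\in U, x\in X_j$ and $(x_j', v_j)$ of length $\delta_j(x,v)$ for $x\in X_j, v\in\Heads$; ``exit'' arcs $(y_j, z_i)$ of length $\len{y,z}$ for each $(y,z)\in\outcut{V_j}$ and each $j$ (and analogous exit arcs out of each layer of the $G_{i_0}^+$-gadget); ``reset'' arcs of length $0$ from each $G_j^+$ back to the base $G_i^+$ (and from the last layer $L_{2^{i_0}}$ to the base); and forward/self arcs within the layered gadget exactly as in \reflemma{layered-sparsification}. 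Only the negative arcs of $G$ itself are negative; I claim the reset arcs become the only negative arcs after reweighting. The edge count is $\bigO{2^i m/h}$ from the copies of $G_j^+$ (dominant term from the base and the layered gadget, since $\sum_j 2^j m/h = \bigO{2^i m/h}$ and the layers give $2^{i_0}\cdot\bigO{2^{i_0}m/h}\le\bigO{2^i m/h}$), exit arcs are charged to $\outcut{V_j}$ and bounded by $\bigO{2^i m /h}$ across all $j$, reset arcs number $\bigO{\sizeof{U}}$, and the shortcut arcs number $\sum_{j=i_0+1}^{i-1}\bigO{\sizeof{U}\sizeof{X_j}}$, matching the claimed bound.

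Next I would define the vertex potentials and verify validity and neutralization. Set $\pote{v_i} = d_i(V_i, v)$ on the base, $\pote{v_j} = \hopd{2^j}{V_j,v}_j$ on $G_j^+$, $\pote{x_j'} = \hopd{2^{j-1}}{V_j,x}_j$ on the auxiliary copies, and $\pote{v_\ell}$ on layer $\ell$ of the $G_{i_0}^+$-gadget equal to $\hopd{\ell}{V_{i_0},v}_{i_0}$ as in \reflemma{layered-sparsification}. Validity within each nonnegative copy is standard (hop distances give valid potentials). For the shortcut arc $(u_i, x_j')$ I need $\delta_j(u,x) + \pote{u_i} - \pote{x_j'} \ge 0$; since $\pote{u_i} = d_i(V_i,u)\ge 0$ (or can be taken $\ge 0$ after the outer preprocessing) this follows from $\delta_j(u,x)\ge\hopd{2^{j-1}}{V_j,x}_j = \pote{x_j'}$, which is exactly \cref{sparse-distance-estimates-1}. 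For $(x_j', v_j)$ I need $\delta_j(x,v) + \pote{x_j'} - \pote{v_j}\ge 0$, i.e. $\delta_j(x,v)\ge\hopd{2^j}{V_j,v}_j - \hopd{2^{j-1}}{V_j,x}_j$, which is exactly \cref{sparse-distance-estimates-2}. For the exit arc $(y_j,z_i)$ with $(y,z)\in\outcut{V_j}$ I need $\len{y,z} + \hopd{2^j}{V_j,y}_j - d_i(V_i,z)\ge 0$; since $z\notin V_j$ the $2^j$-hop negative reach of $y$ via this arc does not include a negative edge in the relevant sense, so $\hopd{2^j}{V_j,y}_j + \len{y,z}\ge d_i(V_i,z)$ by the definition of the negative reach of $U$ together with the triangle-type inequality for hop distances — this is the same argument as in \cite{Fineman24} and \reflemma{layered-sparsification}. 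The forward and self arcs in the layered gadget are neutralized by the telescoping $\hopd{\ell}{V_{i_0},\cdot}_{i_0}$ potentials exactly as before, and the exit arcs leaving that gadget are valid for the same reason. The reset arcs have length $0$ and potential drop from a large value back to $d_i(V_i,\cdot)$, so they may be negative — and these are the only ones.

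Finally, the embedding argument: for all $s,t\in V$ and $\varh\in\naturalnumbers$ I must show $d_i(s,t)\le\hopd{\roundup{\varh/2^{i-2}}}{s_i,t_i}_{H_i}\le\hopd{\varh}{s,t}_{G_i}$. The left inequality is the easy direction, by the same ``collapse auxiliary copies to the underlying vertex, drop length-$0$ arcs, replace shortcut arcs by the walks they stand for (which by \cref{sparse-distance-estimates-1,sparse-distance-estimates-2} and soundness of $\delta_j$ certify genuine walks in $G_i$ of no greater length)'' map used in \reflemma{layered-sparsification}. The right inequality is the main obstacle, and it goes by induction on $\varh$ exactly as in \reflemma{layered-sparsification} but with an extra case analysis. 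Given a shortest proper $\varh$-hop walk $W: s\leadsto t$ in $G_i$ with $\varh\le 2^{i-2}$ (the inductive claim is $\hopd{1}{s_i,t_i}_{H_i}\le\phopd{\varh}{s,t}_{G_i}$ for $\varh\le 2^{i-2}$, then a second induction over reset arcs as in \reflemma{layered-sparsification} handles general $\varh$): if $W$ ever leaves $V_{i_0}$, route the prefix up to the first exit arc through a gadget and the suffix recursively through the base, connected by an exit arc; if $W$ stays inside $V_{i_0}$, route it through the $2^{i_0}$-layered $G_{i_0}^+$-gadget as in \reflemma{layered-sparsification}; and if $W$ stays inside $V_j$ for the smallest such $j\in\setof{i_0+1,\dots,i-1}$ with $\varh\in[2^{j-2},2^{j-1}]$ roughly, pick the witness $x\in X_j$ from \cref{sparse-distance-estimates-3}, route the $\le 2^{j-1}$-hop prefix $u\leadsto x$ in $G_j^+$ into $x_j'$ via the shortcut arc $(u_i,x_j')$ of length $\delta_j(u,x)$, then route the $\le 2^{j-1}$-hop suffix $x\leadsto v$ and land back in the base via $(x_j',v_j)$ and exit/reset arcs; the total length is at most $\delta_j(u,x)+\delta_j(x,v)\le\phopd{\varh}{u,v}_j$ plus the lengths of the (hop-free, hence faithfully embeddable) pieces before the first hop and after the last hop. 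The delicate points — matching the exact hop-reduction factor $2^{i-2}$, making sure the ``smallest $j$ such that $W\subseteq V_j$'' case analysis partitions all walks (using that $V_{i_0}\subseteq V_{i_0+1}\subseteq\cdots$ and that a proper $\varh$-hop walk with $\varh$ hops confined to $V_i$ must have some hop-count regime covered by \cref{sparse-distance-estimates-3} or by the layered gadget), and carefully threading the reset arcs — are exactly the ``more delicate argument than for the layered graph construction'' warned about in \refsection{bootstrap-overview}, and I expect this case analysis to be where essentially all the real work lies.
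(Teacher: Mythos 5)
Your construction closely follows the paper's: the same base copy $G_i^+$, the same ``bowtie'' shortcut gadgets through $X_j$, the same layered $G_{i_0}^+$-gadget, the same potentials on $X_j'$, $G_j^+$, and the layers, and the same high-level embedding argument. But there is a genuine error in the potential you place on the base. You set $\pote{v_i} = d_i(V_i,v)$, and to neutralize the shortcut arc $(u_i, x_j')$ you invoke ``$\pote{u_i} = d_i(V_i,u) \ge 0$ (or can be taken $\ge 0$ after the outer preprocessing).'' This is false: $d_i(V_i,u) = \min_{s \in V_i} d_i(s,u) \le d_i(u,u) = 0$, and it is strictly negative whenever $u$ is reached by some negative-length walk inside $G_i$ --- which is exactly the situation for every $u \in U$ and every head $v \in \Heads$, since $G_i$ still contains the negative edges of $G_U$. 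No preprocessing can repair this; if $d_i(V_i,\cdot) \ge 0$ held, $G_i$ would already be neutralized and there would be nothing to do. With your potential, the shortcut-arc inequality
\begin{align*}
  \delta_j(u,x) + \pote{u_i} - \pote{x_j'} \;=\; \delta_j(u,x) + d_i(V_i,u) - \hopd{2^{j-1}}{V_j,x}_j
\end{align*}
can be negative, since \cref{sparse-distance-estimates-1} only guarantees $\delta_j(u,x) \ge \hopd{2^{j-1}}{V_j,x}_j$. The paper avoids this by simply setting $\pote{v_i} = 0$ on the base, which makes the shortcut-arc inequality exactly \cref{sparse-distance-estimates-1} and also simplifies the exit-arc check to $\hopd{2^j}{V_j,y}_j + \len{y,z} \ge 0$ (no $d_i(V_i,z)$ term).

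Two smaller notes. First, your layered $G_{i_0}^+$-gadget keeps self arcs and resets only from the last layer, while the paper omits self arcs and adds length-$0$ reset arcs $(v_{i_0,\eta},v_i)$ for \emph{every} layer $\eta \in [h_0]$ and every $v \in V_{i_0}$; both variants work, but your embedding Case ``$W$ stays inside $V_{i_0}$'' should route to the right layer and use self arcs to reach the last layer before resetting, which you should state. Second, you flag (correctly) that the case analysis in the hop-reduction direction is the real work; the paper's split is on the maximal $2^{j-1}$-hop prefix $W_j$ and whether it leaves $G_j$, together with a careful choice of $j \in \setof{i_0, \logup[2]{\varh}, i-1}$ to land the prefix's hop count in $[2^{j-2}, 2^{j-1}]$ so that \cref{sparse-distance-estimates-3} applies --- your ``smallest $j$ such that $W \subseteq V_j$'' phrasing is not quite the same criterion and would need to be replaced by this one to make the bookkeeping close.
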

\begin{proof}
  Initially, let $H_i' = G_i^+$. We call this the ``base
  subgraph''. For a vertex $v$, we let $v_i$ denote its copy in the
  base subgraph.

  For each index $j \in \setof{i_0+1,\dots,i-1}$, we augment $H_i'$
  with the following vertices and arcs that we collectively called the
  ``$j$th shortcut graph''. It was described previously in
  \refsection{bootstrap-overview} and we describe it again here. We
  first create a disjoint copy of $G_{j}^+$. For each vertex
  $v \in V_j$, let $v_j$ denote its copy in $G_j^+$. We also add a
  disjoint copy of $X_j$, which we denote $X_j'$.  For each
  $x \in X_j$, we let $x_j'$ denote its copy in $X_j'$.  For $u \in U$
  and $x \in X_j$, we add the ``shortcut'' arc $(u,x'_j)$ with length
  \begin{math}
    \len{u,x_j'}_{H_i} = \delta_{j}(u,x).
  \end{math}
  For $x \in X_j$ and $v \in \Heads$, we add the ``shortcut'' arc
  $(x_j',v_j)$ with length $\delta_{j}(x,v)$. For all
  $(y,z) \in \outcut{V_{j}}$ with $z \in V_i$, we add the ``exit'' arc
  $(y_j,z_i)$ with length $\len{y,z}$. Lastly, for each $u \in U$, we
  add the ``reset'' arc $(u_j,u_i)$ with length $0$.

  \begin{center}
    \includegraphics{figures/sparse-gadget}
  \end{center}

  For $i_0$ we create a different subgraph, because this subgraph must
  also account for all the missing distance estimates from below level
  $i_0$. We essentially create the ``layered (auxiliary) graph'' from
  \cite{Fineman24}, with some adjustments. We create $h_0 = 2^{i_0}$
  copies of $G_{i_0}^+$. For $v \in V_{i_0}$, and $\varh \in [h_0]$,
  let $v_{i_0,\varh}$ denote the copy of $v$ in the $\varh$th copy of
  $G_{i_0}^+$. For each negative arc $(u,v)$, and $\varh \in [h_0]$,
  we add the ``forward'' arc $(u_{i_0, \varh-1},v_{i_0, \varh})$ with
  length $\len{u,v}_{G}$ (letting $v_{i_0, 0} = v_i$).  For each arc
  $(x,y)$ from $V_{i_0}$ to $V_i \setminus V_{i_0}$, we add ``exit''
  arcs $(x_{i_0,\eta},y_{i})$. Lastly, for each $v \in V_{i_0}$, we
  add ``reset'' arcs $(v_{i_0,\eta},v_{i})$ of length $0$ for all
  $\eta$.

  \begin{center}
    \includegraphics{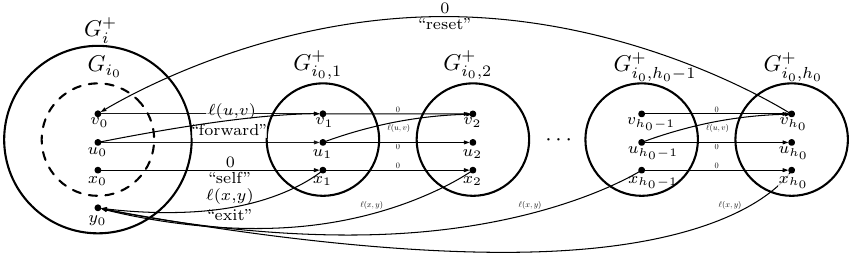}
  \end{center}

  Altogether, $H_i'$ consists of a copy of $G_i^+$ for the base graph,
  a shortcut graph for $G_j$ for each $j \in \setof{i_0+1,\dots,i-1}$,
  and a layered graph for $G_{i_0}$.  $H_i'$ has
  \begin{align*}
    \bigO{m(h_0^2 + 2^i)/h + \sum_{j = i_0 + 1}^{i - 1} |U||X_j|} %
    = \bigO{2^im/h + \sum_{j = i_0 + 1}^{i - 1} |U||X_j|}         %
  \end{align*}
  edges, since $2^i \geq 2^{2i_0} \geq h_0^2$.  The only negative arcs are the shortcut arcs in the shortcut
  graphs, the forward arcs in the layered graph, and the reset arcs in
  the shortcut and layered graphs.

  \begin{center}
    \includegraphics{figures/boosted-hop-reducer}
  \end{center}

  Every edge length in $H_i'$ either matches the length of the
  underlying edge in $G_i$, is a length-$0$ arc between two copies of
  the same vertex, or (in the case of shortcut arcs) overestimates the
  distance between the underlying endpoints in $G_i$. More explicitly,
  for any two vertices $u,v \in V_i$, and any arc in $H_i'$ from an
  auxiliary copy of $u$ to an auxiliary copy of $v$, the length of
  that arc in $H_i'$ dominates the distances from $u$ to $v$ in
  $G_i$. It follows that for all $u,v \in V_i$, the distance from
  $u_i$ to $v_i$ in $H_i'$ dominates the distance from $u$ to $v$ in
  $G_i$.

  We define potentials $\varphi$ over the vertices of $H_i'$ to
  neutralize the forward and shortcut arcs as follows.  First, in the
  base subgraph, we set $\varphi(v_{i}) = 0$ for all $v \in
  V_i$. Next, consider the $j$th shortcut graph for
  $j \in \setof{i_0 + 1,...,i-1}$. For $x \in X_j$, we set
  $\pote{x_j'} = \hopd{2^{j-1}}{V_j,x}_j$. For $v \in V_j$, we set
  \begin{math}
    \pote{v_j} = \hopd{2^{j}}{V_j,v}_j.
  \end{math}
  Lastly, in the layered graph for $G_{i_0}$, we reweight the
  vertices per the the construction in \cite{Fineman24}; namely,
  \begin{math}
    \pote{v_{i_0,\varh}} = \hopd{\varh}{V_{i_0},v}_{i_0}
  \end{math}
  for $\varh \in [h_0]$.  We claim that $\pote$ is valid and
  neutralizes all the negative edges except for the reset arcs. It
  is easy to see that $\pote$ is valid over each disjoint copy of
  $G_j^+$ (for any $j$).

  Consider an exit arc $(y_j,z_i)$ of a $j$th shortcut graph. Since
  $z$ is not in the $2^j$-hop negative reach of $U$, we have
  \begin{align*}
    \len{y_j,z_i}_{H_i',\pote} = \pote{y_j} + \len{y,z} - \pote{z_i}
    =
    \hopd{2^j}{V_j,y}_j + \len{y,z} \geq 0.
  \end{align*}
  Similarly, $\pote$ maintains the nonnegativity of each exit arc
  $(x_{i_0,\eta},y_i)$ in the layered graph because the endpoint $y$
  is not in the $2^{i_0}$-hop negative reach.

  Consider the forward arcs in the $j$th shortcut graph, for
  $j \in \setof{i_0+1,\dots,i-1}$. For the first type of shortcut arc,
  $(u_i,x_j')$ where $u \in U$ and $x \in X_j$, we have
  \begin{align*}
    \len{u_i,x'_j}_{H_i', \pote} %
    = 0 + \delta_j(u,x) - \hopd{2^{j-1}}{V_j,x}_j %
    \geq
    0
  \end{align*}
  by \cref{sparse-distance-estimates-1} of sparse distance
  estimates. For the second type of shortcut arc, $(x_j',v_j)$ where $x \in
  X_j$ and $v \in \Heads$, we have
  \begin{align*}
    \len{x_j',v_j}_{H_i',\pote} %
    =                           %
    \hopd{2^{j-1}}{V_j,x}_j +
    \delta_j(x,v) - \hopd{2^j}{V_j,x}_j %
    \geq
    0
  \end{align*}
  by \cref{sparse-distance-estimates-2} of sparse distance estimates.

  Lastly, consider the forward arcs $(u_{i_0,\varh-1},v_{i_0,\varh})$
  in the layered graph, where $\eta \in [h_0]$. As observed in
  \cite{Fineman24}, we have
  \begin{align*}
    \len{u_{i_0,\varh-1},v_{i_0,\varh}}_{H_i',\pote}
    =                           %
    \hopd{\varh-1}{V_{i_0},u}_{i_0} + \len{u,v} - \hopd{\varh}{V_{i_0},v}_{i_0}
    \geq
    0
  \end{align*}
  since
  \begin{math}
    \hopd{\varh-1}{V_{i_0},u}_{i_0} + \len{u,v}
  \end{math}
  is the length of some $\varh$-hop walk in $G_{i_0}$ from $V_{i_0}$ to $v$.

  This addresses all of the arcs of $H_i'$ except for the reset arcs in
  the shortcut and layered graphs. Letting the $H_i = H'_{i,\pote}$,
  the reset arcs are the only negative arcs in $H_i$.

  It remains to prove that $H_i$ is a $2^{i-2}$-hop reducer for $G_i$.
  Let $W : s \leadsto t$ be a proper $\varh$-hop walk in $G_i$. We
  claim there is a $\roundup{\varh/2^{i-2}}$-hop walk
  $W' : s_i \leadsto t_i$ in $H_i$ with length at most the length of
  $W$. It suffices to prove the claim for $s \in U$.  We prove the
  claim by induction on the number of hops in $W$, where $\varh=0$
  hops is immediate.

  For each index $j$, let $W_j$ be the maximal $2^{j - 1}$-hop prefix
  of $W$, and let $\bar{W}_j$ be the remaining suffix. We have two
  cases.

  \subparagraph{Case 1.} Suppose $W_j$ is contained in $G_j$ for all
  $j \in \setof{i_0, \dots, i - 1}$.  Let $j = i_0$ if
  $\varh \leq 2^{i_0 - 1}$, $j = i - 1$ if $\varh \geq 2^{i - 1}$, and
  $j = \logup[2]{\varh}$ otherwise. Let $y \in V_j$ be the last vertex
  on $W_j$. We will map $W_j$ to a $1$-hop walk
  $W_j': s_i \leadsto y_i$ in $H_i$ via the $G_j^+$-gadget, using the
  reset edge to return to the base graph. We have two subcases
  depending on whether $j = i_0$.

  \subparagraph{Case 1.a.} Suppose $j = i_0$ and all of $W_j$ is
  contained in $G_{i_0}$. We will use the layered graph to embed
  $W_j$.  Let $\varvarh$ be the number of hops in $W_j$. Let
  $W_j': s_i \leadsto y_i$ be the walk in $H_i$ that first follows the
  corresponding auxiliary edges from $s_i$ to $y_{\varvarh}$ in the
  layered graph, then takes self-arcs from $y_\varvarh$ to $y_{h_0}$
  before taking the reset arc to $y_0$.  More explicitly, the
  $\varvarvarh$th hop $(u,v)$ in $W_j$ is
  mapped to the forward arc
  $(u_{i_0,\varvarvarh-1},v_{i_0,\varvarvarh})$ between the
  $(\varvarvarh-1)$th and $\varvarvarh$th layers (or
  $(u_i,v_{i_0,\varvarvarh})$ for $\varvarvarh = 1$), and the subwalk of
  nonnegative arcs between the $(\varvarvarh-1)$th and $\varvarvarh$th
  hop are mapped to the corresponding walk in the $\varvarvarh$th
  layer $G_{i_0,\varvarvarh}^+$ (or $G_i$ for $\varvarvarh = 1$).
  $W_j'$ has 1 hop, from the last reset arc $(y_{h_0},y_0)$, and the
  same length as $W_j$.

  \begin{center}
    \includegraphics{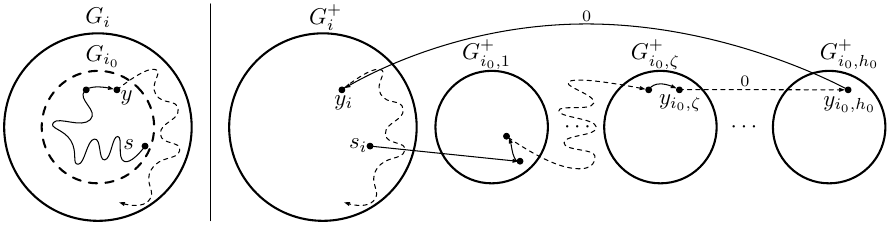}
  \end{center}

  \subparagraph{Case 1.b.} Now suppose $j > i_0$. We embed $W_j$ using the $j$th shortcut graph
  instead. Let $v \in \Heads$ be the head of the last hop in $W_{j}$
  and split $W_j$ at $v$ into $W_{j, 1}: s \leadsto v$ and
  $W_{j, 2}: v \leadsto y$. Since $W_{j,1}$ has between $2^{j - 2}$ and $2^{j - 1}$
  hops, by \cref{sparse-distance-estimates-3}, there is a vertex
  $x \in X_j$ such that
  $\delta_j(s,x) + \delta_j(x,v) \leq \len{W_{j,1}}$. The walk $W_j'$
  from $s_i$, to $x_j'$, to $v_j$, followed by the shortest walk in
  $G_j^+$ from $v_j$ to $y_j$, and ending with the reset arc to $y_i$,
  has total length
  \begin{align*}
    \len{W_j'}_{H_i} = \len{W_j'}_{H_i'} = \delta_j(s,x) +
    \delta_j(x,v) + \hopd{0}{v,y}
    \leq \len{W_{j,1}} + \len{W_{j,2}}
    \leq \len{W_j},
  \end{align*}
  and one hop.

  \begin{center}
    \includegraphics{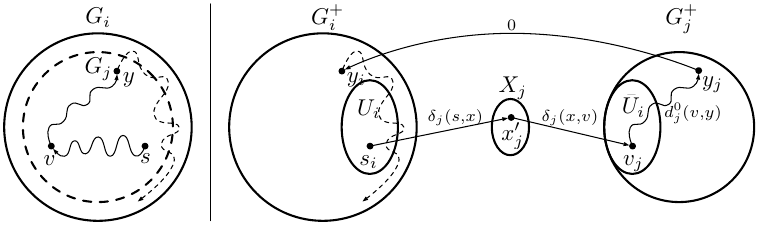}
  \end{center}

  In either case, we have a 1-hop $(s_i, y_i)$-walk $W_j'$ with
  $\len{W_j'}_{H_i} \leq \len{W_j}_{G_i}$. Meanwhile, either
  $W_j = W$; or $j = i-1$, $y \in U$, and $\bar{W}_j: y \leadsto t$ is
  a $(\varh-2^{i-2})$-hop walk in $G_i$. In the former, $W_j'$ is a
  1-hop $(s_i,t_i)$-walk in $H_i$ with length at most the length of
  $W_j = W$, as desired. In the latter, by induction on the number of
  hops, there is a $(\roundup{\varh/2^{i-2}} - 1)$-hop walk
  $\bar{W}_j'$ from $y_i$ to $t_i$ with length
  $\len{\bar{W}_j'}_{H_i} \leq \len{\bar{W}_j}_{G_i}$. Concatenated
  together, $W_j'$ and $\bar{W}_{j}'$ gives the desired
  $\roundup{\varh/2^{i-2}}$-hop $(s_i,t_i)$-walk.

  \subparagraph{Case 2.} Suppose $W_j$ is not contained in $G_j$ for
  some $j \in \setof{i_0, \dots, i - 1}$. Let $j$ be the first such
  index. Let $(y,z) \in \outcut{V_j}$ be the arc where $W_j$ first
  steps out of $V_j$. Let $q$ be the first negative vertex after
  $(y,z)$ if one exists; otherwise let $q = t$. Let
  $W_q: s \leadsto q$ be the prefix of $W$ up to $q$ and let
  $\bar{W}_q: q \leadsto t$ be the remaining suffix. Let $\varvarh$ be
  the number of hops in $W_q$. Again we have two subcases depending on
  whether $j = i_0$.

  \subparagraph{Case 2.a.} Suppose $j = i_0$. We first embed the prefix of $W_q$ up to $y$
  through the layered graph similarly to case 1: beginning from
  $s_i$, the hops in $W_j$ map to forward arcs from one layer to the
  next, and the subwalks of nonnegative arcs between arcs are retraced
  within the appropriate layer, until reaching $y_{i_0,\varvarh}$ in
  the $\varvarh$th layer. We then take the exit arc
  $(y_{i_0,\varvarh},z_i)$ and retrace the remaining suffix of
  $W_q$ through the base subgraph $G_i^+$ until ending at
  $q_i$. Letting $W_q' : s_i \leadsto q_i$ denote this walk, $W_q'$
  has $0$ hops, and the same length as $W_q$.

  \begin{center}
    \includegraphics{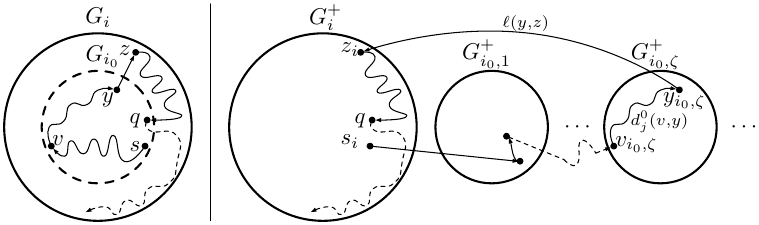}
  \end{center}

  \subparagraph{Case 2.b.} Suppose instead that $j > i_0$. Let $v$ be the head of the final hop
  in $W_q$. Let $W_{q, 1}$ be the prefix of $W_q$ from $s$ to $v$,
  $W_{q, 2}$ be the subpath from $v$ to $y$, and $W_{q, 3}$ be the
  suffix of $W_q$ from $z$ to $q$. Since $W_{j - 1}$ is contained in $G_{j - 1}$,
  $\varvarh \in [2^{j - 2}, 2^{j - 1}]$. By
  \cref{sparse-distance-estimates-3},
  $\delta_j(s,x) + \delta_j(x,v) \leq \len{W_{q_1}}$ for some
  $x \in X_j$. Let $W_{q,1}'$ be the two-edge walk from $s_i$ to $v_j$
  via $x_j'$. Let $W_{q_2}'$ be the shortest walk from $v_j$ to $y_j$
  through $G_j^+$, and let $W_{q,3}$ be the shortest walk from $z_i$
  to $q_i$ in the base subgraph $G_i^+$.
  Concatenating $W_{q, 1}'$,
  $W_{q, 2}'$, the exit arc $(y_j, z_i)$, and $W_{q, 3}'$ gives a
  0-hop $(s_i, q_i)$-walk $W_q'$ with length
  \begin{align*}
    \len{W_q'}                  %
    &=\delta_j(s,x) + \delta_j(x,v) + \hopd{0}{v,y}_j + \len{y,z} +
      \hopd{0}{z,q}
    \\
    &\leq
      \len{W_{q_1}} + \len{W_{q,2}} + \len{y,z} + \len{W_{q,3}}
      =
      \len{W_q}.
  \end{align*}
  \begin{center}
    \includegraphics{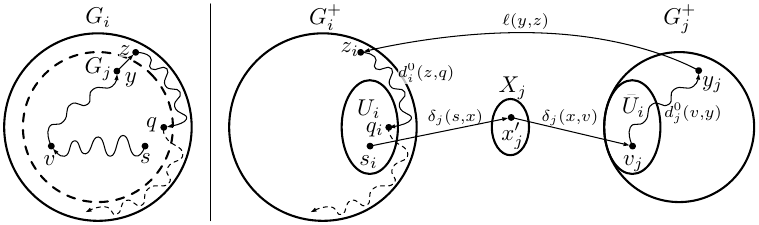}
  \end{center}

  Thus, whether or not $j = i_0$, we have a 0-hop $(s_i,q_i)$-walk $W_q'$
  with length equal to $W_q$. Meanwhile, either $\bar{W}_q$ is empty,
  or $q \in U$ and $W_q$ is a $(\varh - \varvarh)$-hop walk. Either
  trivially in the former case or by induction in the latter, there is
  a $\roundup{(\varh-\varvarh)/2^{i-2}}$-hop walk
  $\bar{W}_q': q_i \leadsto t_i$ in $H_i$ with
  $\len{\bar{W}_q'}_{H_i} \leq \len{\bar{W}_q}_{G_i}$. Together,
  $W_q'$ and $\bar{W}_q'$ give the desired
  $\roundup{\varh/2^{i-2}}$-hop walk in $H_i$.

  Between the two cases, we conclude that $H_i$ is a $2^{i-2}$-hop
  reducer for $G_i$.
\end{proof}

\subsection{Computing sparse distance estimates from hop reducers}

\labelsection{distance-estimates}

We now turn to the other half of the construction, where the hop
reducer $H_i$ is used to construct sparse distance estimates at level
$i$. We first describe the subroutine at a high-level.

The initial idea comes from \cite{HJQ26}. Recall that the sparse
distance estimates at level $i$ only need to compete with proper walks
with $\bigTheta{2^{i}}$-hops
(\cref{sparse-distance-estimates-3}). Consider the shortest such
walks. Since each such walk has at least $\bigOmega{2^j}$ negative
edges, a uniformly random sample of negative edges of size
$\bigO{\sizeof{U} \log{n} / 2^j}$ will hit all of these walks with
high probability. If we compute $\bigOmega{2^i}$-hop distances in
$H_i$ to and from all the sampled vertices, then the distances induced
by going through the sampled vertices will be at least as good as any
of these shortest walks.

The only catch is that the distances we compute in $H_i$ may be much
better than the actual proper $\bigTheta{2^i}$-hop distances in $G_i$ because $\varh$
hops in $H_i$ can capture arbitrarily more than $2^{i - 2} \varh$ hops in
$G_i$. (Too small of a distance estimate ultimately prevents us from
neutralizing the shortcut graphs constructed in
\refsection{hop-reducer} while preserving the nonnegativity of the
exit arcs.) Thus in a second step we manually detect and increase
overly optimistic distances to satisfy
\cref{sparse-distance-estimates-1,sparse-distance-estimates-2}.

\begin{lemma}
  \labellemma{reducer->distance-estimates} Given a $2^{i-2}$-hop
  reducer $H_i$ for $G_i$ with $m_i$ edges and $n_i$ vertices, one can
  compute level $i$ sparse distance estimates $(X_i,\delta_i)$ with
  $\sizeof{X_i} \leq \bigO{\sizeof{U}\log{n} / 2^i}$ with high
  probability in $\bigO{|U|\mu_i \log{n} / 2^i + 2^{2i}\mu / h}$
  time, where $\mu_i = m_i + n_i \log n_i$.
\end{lemma}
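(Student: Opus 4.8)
The plan is to follow the two-step blueprint sketched just before the lemma statement: first, use the $2^{i-2}$-hop reducer $H_i$ to compute $O(2^i)$-hop distances in $H_i$ to and from a uniformly random subsample $X_i \subseteq U$ of the appropriate size; second, clamp those raw estimates upward so that \cref{sparse-distance-estimates-1,sparse-distance-estimates-2} hold, while arguing that the clamping does not break \cref{sparse-distance-estimates-3}. Concretely, I would let $X_i$ sample each negative vertex of $U$ independently with probability $\Theta(\log n / 2^i)$, so that $\sizeof{X_i} = O(\sizeof{U}\log n / 2^i)$ with high probability by a Chernoff bound. For the raw estimates, since $H_i$ is a $2^{i-2}$-hop reducer, a $\lceil 2^{i-1}/2^{i-2}\rceil = 2$-hop (really $O(1)$-hop) distance computation in $H_i$ from a source $x$ gives a value sandwiched between $d_{G_i}(x,\cdot)$ from below and $\hopd{2^{i-1}}{x,\cdot}_{G_i}$ from above; symmetrically for distances \emph{to} $x$. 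I would run single-source $O(1)$-hop Dijkstra-based computations (via the dynamic program of \refequation{hop-distance}, $O(\mu_i)$ per hop) once for each $x \in X_i$ as a source and once for each $x \in X_i$ as a sink, for a total of $O(\sizeof{X_i}\,\mu_i) = O(\sizeof{U}\mu_i \log n / 2^i)$ time. These yield candidate values $\tilde\delta_i(u,x)$ for $u\in U$ (distance from $u$ to $x$) and $\tilde\delta_i(x,v)$ for $v \in \Heads$ (distance from $x$ to $v$).

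Next I would do the clamping. Set $\delta_i(u,x) = \max\{\tilde\delta_i(u,x),\ \hopd{2^{i-1}}{V_i,x}_i\}$ and $\delta_i(x,v) = \max\{\tilde\delta_i(x,v),\ \hopd{2^{i}}{V_i,v}_i - \hopd{2^{i-1}}{V_i,x}_i\}$. This makes \cref{sparse-distance-estimates-1,sparse-distance-estimates-2} immediate, provided I also check soundness, i.e., that $\tilde\delta_i(u,x) \geq d(u,x)$ and $\tilde\delta_i(x,v)\geq d(x,v)$ — this is exactly the left-hand inequality of the hop-reducer definition, so it transfers from $H_i$ to $G_i$. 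Computing the single-source quantities $\hopd{2^{i-1}}{V_i,\cdot}_i$ and $\hopd{2^{i}}{V_i,\cdot}_i$ requires $O(2^i)$ hops of the dynamic program in $G_i$, i.e.\ $O(2^i \mu_i)$ time — but wait, that would be too slow for small $i$ since $\mu_i$ can be as large as $\mu$. Here I need the second term $O(2^{2i}\mu/h)$ in the running time: recall $G_i$ has $O(2^i m/h)$ edges and $O(2^i n/h)$ vertices, so $\mu_i = O(2^i \mu / h)$, and $O(2^i)$ hops over $G_i$ costs $O(2^i \cdot 2^i \mu/h) = O(2^{2i}\mu/h)$. (One also uses the bound $n_i\log n_i = O(2^i n\log n/h)$ absorbed into $\mu_i$.) Adding the two contributions gives the claimed $O(\sizeof{U}\mu_i\log n/2^i + 2^{2i}\mu/h)$.

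The main obstacle, and the part requiring real care, is \cref{sparse-distance-estimates-3}: for every $u\in U$, $v\in\Heads$, and $\eta\in[2^{i-2},2^{i-1}]$, we need some $x\in X_i$ with $\delta_i(u,x)+\delta_i(x,v)\leq \phopd{\eta}{u,v}_i$. I would fix a shortest proper $\eta$-hop $(u,v)$-walk $W$ in $G_i$. Since $W$ is \emph{proper}, its $\eta\geq 2^{i-2}$ negative vertices are distinct, so with probability $1 - (1-\Theta(\log n/2^i))^{2^{i-2}} = 1 - n^{-\Omega(1)}$ the sample $X_i$ hits $W$; a union bound over the polynomially many triples $(u,v,\eta)$ (there are $O(\sizeof{U}\cdot\sizeof{\Heads}\cdot 2^i)$ of them) keeps this high-probability. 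Let $x \in X_i$ be a hit vertex, splitting $W$ into a prefix $W_1: u\leadsto x$ and a suffix $W_2: x\leadsto v$, each with at most $\eta \le 2^{i-1}$ hops. The raw estimates satisfy $\tilde\delta_i(u,x)\le\len(W_1)$ and $\tilde\delta_i(x,v)\le\len(W_2)$, since $H_i$ captures $\le 2^{i-1}$-hop distances of $G_i$ and a $2$-hop computation in $H_i$ suffices. The delicate point is that after clamping we might have $\delta_i(u,x)+\delta_i(x,v) > \len(W_1)+\len(W_2)$. This is where \cref{sparse-distance-estimates-1,sparse-distance-estimates-2} are designed to cooperate: the clamped sum is at most $\max\{\tilde\delta_i(u,x),\hopd{2^{i-1}}{V_i,x}_i\} + \max\{\tilde\delta_i(x,v),\hopd{2^i}{V_i,v}_i - \hopd{2^{i-1}}{V_i,x}_i\}$; I would argue case by case that each max is dominated by the corresponding piece of $W$ — e.g. $\hopd{2^{i-1}}{V_i,x}_i \le \len(W_1)$ because $W_1$ is itself a $\le 2^{i-1}$-hop walk from $V_i$ (indeed from $u \in U$) to $x$, and $\hopd{2^i}{V_i,v}_i - \hopd{2^{i-1}}{V_i,x}_i \le \len(W_2)$ because prepending a shortest $2^{i-1}$-hop walk from $V_i$ to $x$ in front of $W_2$ yields a $\le 2^i$-hop walk from $V_i$ to $v$. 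Putting these together with the raw bounds gives $\delta_i(u,x)+\delta_i(x,v) \le \len(W_1)+\len(W_2) = \len(W) = \phopd{\eta}{u,v}_i$, establishing \cref{sparse-distance-estimates-3} and completing the proof. The one subtlety to double-check is that the clamping terms $\hopd{2^{i-1}}{V_i,x}_i$ etc.\ involve hop counts that might slightly exceed what a single prefix of $W$ certifies at the boundary $\eta=2^{i-2}$; handling this cleanly is why the interval $[2^{i-2},2^{i-1}]$ (rather than a point) appears in the definition, and I would lean on that slack.
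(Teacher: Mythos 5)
Your proposal is correct and follows essentially the same approach as the paper's proof: sample $X_i$ at rate $\Theta(\log n/2^i)$, set $\delta_i$ directly as the $\max$ of the $O(1)$-hop distance in $H_i$ and the appropriate $V_i$-anchored hop distance in $G_i$, verify soundness from the lower bound of the hop-reducer inequality and \cref{sparse-distance-estimates-3} by splitting a shortest proper $\eta$-hop walk at a sampled vertex, and split the running time into the $O(|X_i|\mu_i)$ term for computations in $H_i$ plus the $O(2^{2i}\mu/h)$ term for the $O(2^i)$-hop computation in $G_i$. One cosmetic slip worth noting: in your running-time paragraph you briefly write ``$\mu_i = O(2^i\mu/h)$,'' but $\mu_i$ is reserved for the size of $H_i$ (which also contains shortcut edges), whereas the $O(2^{2i}\mu/h)$ term comes from the size of $G_i$; the arithmetic you carry out is nonetheless the intended one.
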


\begin{proof}
  Let $X_i \subseteq U$ sample $\bigO{\sizeof{U} \log{n} / 2^i}$
  negative vertices uniformly at random.
  Let
  \begin{align*}
    \delta_i(u,x) = \max{\hopd{2}{u, x}_{H_i}, \hopd{2^{i-1}}{V_i,x}_i}
  \end{align*}
  for $u \in U$ and $x \in X_i$ and let
  \begin{align*}
    \delta_i (x,v) = \max{\hopd{2}{x, v}_{H_i}, \hopd{2^{i}}{V_i,v}_i - \hopd{2^{i-1}}{V_i,x}_i}
  \end{align*}
  for $x \in X_i$ and $v \in \Heads$.
  Because $H_i$ is a $2^{i - 2}$-hop reducer, we have
  \begin{align*}
    d_i(u, x) \leq \hopd{2}{u,x}_{H_i} \leq \hopd{2^{i - 1}}{u,x}_i
    \text{ and }
    d_i(x, v) \leq \hopd{2}{x, v}_{H_i} \leq \hopd{2^{i - 1}}{x, v}_i.
  \end{align*}

  To verify \cref{sparse-distance-estimates-1}, let $u \in U$ and
  $x \in X_j$. We have $\delta_i(u,x) \geq d(u,x)$ because distances
  in $H_i$ always overestimate true distances in $G_i$, as desired.
  Similarly, \cref{sparse-distance-estimates-2} follows from $H_i$
  always overestimating true distances in $G_i$.

  Now consider \cref{sparse-distance-estimates-3}. Let
  $u \in U, v \in \Heads$, and let $W$ be a shortest proper $\eta$-hop
  walk from $u$ to $v$ with $2^{i - 2} \leq \eta \leq 2^{i - 1}$ hops.
  $X_i$ samples at least vertex $x$ visited by $W$ with high
  probability. For such a vertex $x$, we have
  \begin{align*}
    \delta_i(u,x) = \max{\hopd{2}{u,x}_{H_i}, \hopd{2^{i-1}}{V_i,x}_i}
    \leq
    \hopd{2^i-1}{u,x}
  \end{align*}
  because $H_i$ is a $2^{i-2}$-hop reducer.
  Next we bound
  \begin{math}
    \delta_i(x,v).
  \end{math}
  We first have
  \begin{align*}
    \hopd{2}{x,v}_{H_i} \leq \hopd{2^{i-1}}{x,v}_i
  \end{align*}
  because $H_i$ is a $2^{i-2}$-hop reducer.
  We also have
  \begin{align*}
    \hopd{2^i}{V_i,v}_i \leq \hopd{2^{i-1}}{V_i,x}_i + \hopd{2^{i-1}}{x,v}_i
  \end{align*}
  by the triangle inequality. Thus
  \begin{align*}
    \delta_i(x,v) = \max{\hopd{2}{x,v}_{H_i}, \hopd{2^{i}}{V_i,v}_i - \hopd{2^{i-1}}{V_i,x}}
    \leq
    \hopd{2^{i-1}}{u,x}.
  \end{align*}
  Altogether we have
  \begin{align*}
    \delta_i(u, x) + \delta_i(x, v) %
    \leq                            %
    \hopd{2^{i - 1}}{u, x}_i + \hopd{2^{i - 1}}{x, v}_i %
    \leq                                                %
    \len{W} = \hopd{2^{i - 1}}{u, v}_i,            %
  \end{align*}
  as desired. Taking the union bound over all $u$, $v$, and $\eta$, we
  conclude that $(X_i, \delta_i)$ are level $i$ sparse distance
  estimates satisfying
  \cref{sparse-distance-estimates-1,sparse-distance-estimates-2,sparse-distance-estimates-3}
  with high probability.

  As for the running time, computing
  $\hopd{2^{i - 1}}{{V_i, \cdot}}_i$ and
  $\hopd{2^{i}}{{V_i, \cdot}}_i$ in $G_i$ takes
  $\bigO{2^{2i} \mu / h}$ time.  Computing shortest $2$-hop paths to
  and from each $x \in X_i$ in $H_i$ takes
  $\bigO{|X_i|\mu_i} = \bigO{|U|\mu_i \log{n} / 2^i}$ time. Taking
  these distance computations and setting the estimates
  $\delta_i(\cdot, \cdot)$ takes
  $\bigO{|U||X_i|} = \bigO{|U|^2 \log{n} / 2^i}$ time. The bottleneck
  is the $\bigO{\sizeof{U} \mu_i \log{n} / 2^i}$ time computing
  $\bigO{1}$-hop distances in $H_i$ for each vertex in $X_i$.
\end{proof}

\subsection{Bootstrapping it all together}

\labelsection{bootstrap-altogether}

We conclude this section by constructing an $O(h)$-hop reducer for
$G$.  As in \cite{HJQ26}, we alternate between constructing hop
reducers from distance estimates by
\reflemma{hop-reducer-construction} and computing distance estimates
from \reflemma{reducer->distance-estimates} to build hop reducers,
with the negative reach expanding iteration by iteration. The main
difference here is we that we take as input distance estimates for
levels $i_0 + 1$ to $i_1$ to initialize the bootstrapping process to
begin from $G_{i_1}$, rather than from $G_1$.
\begin{lemma}
  \labellemma{full-reducer-construction} Let $U$ be a set of negative
  vertices that can negatively reach $\bigO{n\varh / h}$ vertices for
  all $\varh \geq h_0$. Suppose we are given as input sparse distance
  estimates $(\delta_j,X_j)$ at level $j$ for all
  $j \in \setof{i_0 + 1,\dots,i_1}$. Let
  $\Delta = \sum_{j = i_0 + 1}^{i_1} \sizeof{X_j}$.  Then one can
  compute an $(h/2)$-hop reducer for $G_U$ with
  $\bigO{m + |U|^2 \log{n} / h_0}$ edges and $\bigO{n}$ vertices with
  high probability in
  \begin{math}
    \bigO{                      %
      h\mu + |U|\mu\log{n}^2/h +
      \sizeof{U}^2 \Delta \log{n} / h_0^2 +
      \sizeof{U}^3\log{n}^2/h_0^4      %
    }                           %
  \end{math}
  randomized
  time.
\end{lemma}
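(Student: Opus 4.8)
The plan is to iterate the pair \reflemma{hop-reducer-construction} and \reflemma{reducer->distance-estimates} over levels $i = i_1 + 1, \dots, L$, maintaining the invariant that at the start of level $i$ we possess sparse distance estimates $(X_j, \delta_j)$ for every $j \in \setof{i_0+1,\dots,i-1}$. The base of the induction is exactly the hypothesis: we are handed $(X_j,\delta_j)$ for $j \in \setof{i_0+1,\dots,i_1}$. At level $i$, \reflemma{hop-reducer-construction} builds a $2^{i-2}$-hop reducer $H_i$ for $G_i$ with $\bigO{2^i n / h}$ vertices and $\bigO{2^i m / h + \sum_{j=i_0+1}^{i-1}\sizeof{U}\sizeof{X_j}}$ edges; then \reflemma{reducer->distance-estimates} consumes $H_i$ and produces level-$i$ sparse distance estimates with $\sizeof{X_i} \leq \bigO{\sizeof{U}\log n / 2^i}$. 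After the final step $i = L$ we have a $2^{L-2}$-hop reducer for $G_L = G_U$; since $L = \logup[2]{h} + 1$, we have $2^{L-2} \geq h/2$, giving the claimed $(h/2)$-hop reducer. (One should double-check the rounding in $L$ so that $2^{L-2} \geq h/2$ holds with the stated constant; if it is off by a factor of $2$ one absorbs it into the $\bigOmega{\cdot}$ or adjusts $L$.)

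For the edge count of the final reducer $H_L$: the base subgraph and the $G_j^+$-copies contribute $\bigO{2^L m/h} = \bigO{m}$, the layered gadget for $G_{i_0}$ contributes $\bigO{h_0^2 m / h} = \bigO{m}$ (using $2^L \geq h \geq h_0^2$), and the shortcut edges contribute $\sum_{j=i_0+1}^{L-1}\sizeof{U}\sizeof{X_j} = \bigO{\sizeof{U}^2 \log n \sum_{j > i_0} 2^{-j}} = \bigO{\sizeof{U}^2 \log n / h_0}$, using $\sizeof{X_j} = \bigO{\sizeof{U}\log n/2^j}$ and $2^{i_0} = h_0$. Summing gives $\bigO{m + \sizeof{U}^2\log n / h_0}$ edges and $\bigO{n}$ vertices, as claimed. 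Note that for the intermediate levels $j \leq i_1$ the $X_j$ are the given input and we only know $\Delta = \sum_{j=i_0+1}^{i_1}\sizeof{X_j}$; these also contribute at most $\sizeof{U}\Delta$ shortcut edges, which we should check is dominated by $\sizeof{U}^2\log n/h_0$ — in the intended application $\Delta = \bigO{\sizeof{U}\log n / h_0}$ as well, so this is consistent, though as stated the lemma's bound implicitly needs $\sizeof{U}\Delta = \bigO{m + \sizeof{U}^2\log n/h_0}$, which I would note explicitly or fold $\Delta$ into the edge bound.

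For the running time, sum the cost of \reflemma{reducer->distance-estimates} over $i = i_1+1,\dots,L$: each term is $\bigO{\sizeof{U}\mu_i\log n / 2^i + 2^{2i}\mu/h}$ where $\mu_i = m_i + n_i\log n_i$ with $m_i = \bigO{2^i m/h + \sum_{j<i}\sizeof{U}\sizeof{X_j}}$ and $n_i = \bigO{2^i n/h}$. Splitting $\mu_i$ into the ``graph part'' $\bigO{2^i \mu / h}$ and the ``shortcut part'' $\bigO{\sum_{j<i}\sizeof{U}\sizeof{X_j}} = \bigO{\sizeof{U}^2\log n/h_0 + \sizeof{U}\Delta}$: the graph part contributes $\sum_i \sizeof{U}(2^i\mu/h)\log n / 2^i = \bigO{L\cdot\sizeof{U}\mu\log n/h} = \bigO{\sizeof{U}\mu\log^2 n/h}$; the shortcut part contributes $\sum_i \sizeof{U}\log n/2^i \cdot (\sizeof{U}^2\log n/h_0 + \sizeof{U}\Delta) = \bigO{\sizeof{U}^3\log^2 n/(2^{i_0}h_0) + \sizeof{U}^2\Delta\log n/2^{i_0}} = \bigO{\sizeof{U}^3\log^2 n/h_0^2 + \sizeof{U}^2\Delta\log n/h_0}$ — here I'd double check whether the target exponent is $h_0^4$ versus $h_0^2$; the extra $1/h_0^2$ should come from the fact that the ``first'' level where shortcut edges accumulate is $i_1 = 2i_0$, i.e. $2^{i_1} = h_0^2$, so $\sum_{i > i_1}1/2^i = \bigO{1/h_0^2}$, which replaces the $1/2^{i_0}$ above by $1/h_0^2$ and yields $\sizeof{U}^3\log^2 n / h_0^4$ and $\sizeof{U}^2\Delta\log n/h_0^2$ as stated. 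The term $\sum_i 2^{2i}\mu/h$ is dominated by its last term $2^{2L}\mu/h = \bigO{h\mu}$. Adding the cost of \reflemma{hop-reducer-construction} (linear in the size of each $H_i$, hence also $\bigO{h\mu + \text{shortcut terms}}$) gives the stated bound $\bigO{h\mu + \sizeof{U}\mu\log^2 n/h + \sizeof{U}^2\Delta\log n/h_0^2 + \sizeof{U}^3\log^2 n/h_0^4}$.

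The main obstacle is bookkeeping the geometric sums so that the shortcut-edge contributions collapse to the advertised powers of $h_0$ rather than $h$: one must carefully use that (i) the shortcut gadgets only exist for levels $j > i_0$ so $\sum 2^{-j}$ is controlled by $2^{-i_0} = 1/h_0$ in the edge count, and (ii) when these shortcut-edge terms re-enter through $\mu_i$ inside the distance-estimate cost they get multiplied by another $\sum_{i > i_1} 2^{-i} = \bigO{1/h_0^2}$, producing the $h_0^{-2}$ and $h_0^{-4}$ factors. The other delicate point is verifying the inductive invariant is actually consumable by \reflemma{hop-reducer-construction} at level $i_1+1$: that lemma requires estimates for $j \in \setof{i_0,\dots,i-1}$, and for $j = i_0$ we do not have a shortcut gadget but instead the layered gadget, so I'd note that the $j=i_0$ ``slot'' is handled by the layered construction inside \reflemma{hop-reducer-construction} and does not need an input estimate — matching the hypothesis which only supplies estimates for $j \geq i_0+1$.
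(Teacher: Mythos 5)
Your approach matches the paper's: you iterate \reflemma{hop-reducer-construction} and \reflemma{reducer->distance-estimates} over levels $i = i_1+1,\dots,L$, seeded by the given estimates at levels $i_0+1,\dots,i_1$, exactly as the paper does. The hop-reducer conclusion and the edge/vertex counts are argued the same way, and your observation that the $\sizeof{U}\Delta$ shortcut edges contributed by the input estimates are not obviously dominated by $\bigO{m + \sizeof{U}^2\log n/h_0}$ is a fair point --- the paper's proof silently carries $\sizeof{U}\Delta$ inside $\mu_L$, and the stated edge bound only holds under the (application-supplied) assumption $\Delta = \bigO{\sizeof{U}\log n/h_0}$.

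Where your writeup goes astray is in the geometric-sum bookkeeping, and your self-correction only fixes half of it. There are \emph{two} places where you should get a $1/h_0^2 = 1/2^{i_1}$ factor rather than $1/h_0$: (i) the shortcut part of $\mu_i$ sums $\sizeof{U}\sizeof{X_j}$ over $j > i_1$, giving $\sum_{j>i_1}\sizeof{U}^2\log n/2^j = \bigO{\sizeof{U}^2\log n/h_0^2}$ --- you write $\sizeof{U}^2\log n/h_0$, which would only be right if the sum started at $j > i_0$; and (ii) the outer sum $\sum_{i>i_1}1/2^i = \bigO{1/h_0^2}$, which you initially write as $1/2^{i_0} = 1/h_0$. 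Your ``double-check'' corrects (ii) but not (i), and so it yields $\sizeof{U}^3\log^2 n/h_0^3$, one power of $h_0$ short of the claimed $\sizeof{U}^3\log^2 n/h_0^4$. The paper's cleaner route is to first record $\mu_i = \bigO{2^i\mu/h + \sizeof{U}\Delta + \sizeof{U}^2\log n/h_0^2}$ (already with the correct $h_0^2$), and only then multiply by $\sizeof{U}\log n/2^i$ and sum over $i > i_1$, so that the second $1/h_0^2$ appears cleanly; writing it that way would also have made the $\sizeof{U}^2\Delta\log n/h_0^2$ term fall out without any hand-adjustment.
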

\begin{proof}
  We first apply \reflemma{hop-reducer-construction} to construct a
  $(2^{i_1 - 1})$-hop reducer $H_{i_1 + 1}$ for $G_{i_1 + 1}$. We
  then follow the bootstrapping approach in \cite{HJQ26}: for all $i$
  from $i_1 + 1$ to $L - 1$, we apply
  \reflemma{reducer->distance-estimates} with $H_i$ to calculate
  sparse distance estimates $(X_i,\delta_i)$ at level $i$, and then
  apply \reflemma{hop-reducer-construction} to construct a
  $(2^{i - 1})$-hop reducer for $G_{i + 1}$. At the end of the
  process, we have the desired $(2^{L - 2})$-hop reducer for
  $G_L = G$.

  We now calculate the runtime. The time spent constructing each
  reducer, by \reflemma{hop-reducer-construction}, is
  \begin{align*}
    \bigO{\sum_{i = i_1 + 1}^L |H_i| + \frac{2^{2i}\mu}{h}} %
    =                                                        %
    \bigO{h\mu + \sum_{i = i_1 + 1}^L \sizeof{H_i}}.
  \end{align*}
  Note that the second term is negligible as it will be dominated by
  any single shortest path computation done in $H_i$. Let $m_i$ and
  $n_i$ denote the number of edges and vertices in $H_i$, and let
  $\mu_i = m_i + n_i \log n_i$. By
  \reflemma{reducer->distance-estimates}, for $j \geq i_1 + 1$, we have
  $\sizeof{X_j} \leq \bigO{|U|\log n / 2^j}$. Therefore,
  \begin{align*}
    \mu_i
    &= \bigO{\frac{2^i\mu}{h} + \sum_{j = i_0 + 1}^{i - 1}|U||X_j|}                 %
      = \bigO{\frac{2^i\mu}{h} + \sum_{j = i_0 + 1}^{i_1} \sizeof{U}\sizeof{X_j} + \sum_{j = i_1 + 1}^{i - 1} \sizeof{U}\sizeof{X_j}} \\ %
    &= \bigO{\frac{2^i\mu}{h} + \sizeof{U}\Delta + \sum_{j = i_1 + 1}^{i - 1} \frac{\sizeof{U}^2\log n}{2^i}} %
      = \bigO{\frac{2^i\mu}{h} + \sizeof{U}\Delta + \frac{\sizeof{U}^2 \log n}{h_0^2}}. %
  \end{align*}
  The time to compute the required distance estimates is
  \begin{align*}
    \bigO{\sum_{i = i_1 + 1}^{L - 1} \frac{\sizeof{U}\mu_i \log{n}}{2^i}}
    &= \bigO{\sum_{i = i_1 + 1}^{L = 1} \frac{2^i|U|\mu\log n}{2^ih} + \frac{\sizeof{U}^2\Delta\log n}{2^i} + \frac{\sizeof{U}^3 \log^2 n}{2^i h_0^2}} \\
    &= \bigO{\frac{|U|\mu\log^2 n}{h} + \frac{\sizeof{U}^2\Delta \log n}{h_0^2} + \frac{\sizeof{U}^3\log ^2 n}{h_0^4}},
  \end{align*}
  as desired.
\end{proof}

\section{Putting it all together}

\labelsection{altogether}

This section presents the final algorithm and proves
\reftheorem{best}.  Let us take stock of the components developed so
far. \reflemma{extract-sandwich} allows us to extract a remote set $U$
of negative vertices, and \reflemma{full-reducer-construction}
constructs an $O(h)$-hop reducer for $G_U$.  However,
\reflemma{full-reducer-construction} also takes as input distance
estimates at levels $i_0 +1$ through $i_1$ to initialize the
bootstrap process.  Here we bring in the ideas of
\refsection{layered-sparsification} to complete the construction. We
use layered sparsification and recursion to neutralize $G_{i_1}$, and
then use the neutralized $G_{i_1}$ to generate the missing distance
estimates.

Our algorithm is actually divided into two regimes based on the graph
density. Let $m_0 = \apxTheta{n^{\parof{33- 7 \sqrt{17}} / 4}}$. When
$m \geq m_0$, then the algorithm will basically follow the description
above. When $m \leq m_0$, we add a preprocessing step that improves
the running time slightly.

We start with the algorithm and analysis in the non-sparse regime
where $m \geq m_0$.
\begin{theorem}
  \label{dense-sssp}
  For $m \geq m_0$, single-source shortest paths with real-valued
  weights can be computed with high probability in
  $\apxO{m n^{(7-\sqrt{17})/4}}$ randomized time.
\end{theorem}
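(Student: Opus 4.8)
The plan is to establish the (formally stronger) statement that a graph preprocessed as in \refsection{preliminaries}, with $n$ vertices, $k \le n$ negative edges, and density $m \ge \apxTheta{n^{(33-7\sqrt{17})/4}}$, can be neutralized in $\apxO{m\,k^{(7-\sqrt{17})/4}}$ randomized time; since $k \le n$ this implies the theorem. I would argue by strong induction on $k$, the base case of small $k$ (say $k \le \bigO{\operatorname{polylog} n}$) being handled by Johnson's technique in $\apxO{m}$ time per negative edge. For the inductive step, write $x = (7-\sqrt{17})/4$ and fix hop parameters $h = \apxTheta{k^{(\sqrt{17}-3)/4}}$ and $h_0 = \apxTheta{k^{\sqrt{17}-4}}$, rounded to powers of two, together with $i_0 = \logup[2]{h_0}$, $i_1 = 2i_0$, and $h_1 = 2^{i_1} = h_0^2$; one checks $h > h_1 \ge h_0 \ge \bigOmega{\log n}$ (using $\sqrt{17} < 29/7$ for the first inequality). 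The algorithm then runs a loop, each iteration neutralizing $\bigOmega{\sqrt{k h_0}}$ negative vertices, until none remain.

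Each iteration I would invoke \reflemma{extract-sandwich} with parameters $h_0$ and $h$. It returns either a negative cycle; or valid potentials neutralizing $\bigOmega{\sqrt{k h_0}}$ negative vertices (then continue the loop); or valid potentials $\varphi$ and a set $U$ of $\bigOmega{\sqrt{k h_0}}$ negative vertices whose $\eta$-hop negative reach in $G_\varphi$ has $\bigO{n\eta/h}$ vertices for every $\eta \ge h_0$. In the last (main) case, working inside $G_\varphi$, let $G_i$ denote the subgraph of $G_U$ induced by the $2^i$-hop negative reach $V_i$ of $U$, and carry out four steps. First, \emph{neutralize $G_{i_1}$ by layered sparsification and recursion}: $U$ is $h_0$-remote within $G_{i_1}$ (its $h_0$-hop reach $nh_0/h$ is a $1/h_0$-fraction of the $nh_0^2/h$ vertices of $G_{i_1}$), so \reflemma{layered-sparsification} produces in $\bigO{m h_0^2/h}$ time a graph $H'$ with $\bigO{m h_0^2/h}$ edges and $\apxO{\sqrt{k/h_0}}$ negative edges that preserves $G_{i_1}$-distances; preprocess $H'$ and recursively neutralize it by the inductive hypothesis (which applies, since $\sqrt{k/h_0} < k$ and the density is preserved as $m'/n' = m/n$), then read off Johnson potentials for $G_{i_1}$. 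Second, \emph{build sparse distance estimates for levels $i_0+1,\dots,i_1$} from the now-neutralized $G_{i_1}$ (detailed below). Third, feed those estimates into \reflemma{full-reducer-construction} to obtain an $(h/2)$-hop reducer $H$ for $G_U$ with $\bigO{m + \sizeof{U}^2\log n/h_0}$ edges and $\bigO{n}$ vertices. Fourth, compute the Johnson potentials $d_U(V,\cdot)$ for $G_U$ via $H$ as an $\bigO{\sizeof{U}/h}$-hop distance computation, neutralizing $U$. Each randomized step succeeds with high probability and there are $\operatorname{poly}(n)$ of them, so the whole algorithm does too.

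For the second step: once $G_{i_1}$ is neutralized, so is every $G_j$ with $j \le i_1$. For each such $j > i_0$, run the bounded hop-distance DP in $G_j$ to get $\hopd{2^{j-1}}{V_j,\cdot}_j$ and $\hopd{2^j}{V_j,\cdot}_j$ (costing $\bigO{2^j \mu_j}$, hence $\bigO{m h_1^2/h}$ over all levels), sample $X_j \subseteq U$ of size $\bigO{\sizeof{U}\log n / 2^j}$, compute $d_j(x,\cdot)$ and $d_j(\cdot,x)$ for every $x \in X_j$ by Dijkstra in the neutralized $G_j$, and set $\delta_j(u,x) = \max\{d_j(u,x),\hopd{2^{j-1}}{V_j,x}_j\}$ and $\delta_j(x,v) = \max\{d_j(x,v),\hopd{2^j}{V_j,v}_j - \hopd{2^{j-1}}{V_j,x}_j\}$. \Cref{sparse-distance-estimates-1,sparse-distance-estimates-2} are immediate since $d_j \ge d$ and the hop terms are built in. For \cref{sparse-distance-estimates-3}, fix $u \in U$, $v \in \Heads$, $\eta \in [2^{j-2},2^{j-1}]$ and a shortest proper $\eta$-hop $(u,v)$-walk $W$; as $W$ has at least $2^{j-2}$ distinct negative vertices, with high probability $X_j$ hits $W$ at some $x$, and then the $(u,x)$-prefix of $W$ (with $\le 2^{j-1}$ hops, starting from $u\in V_j$) dominates both $d_j(u,x)$ and $\hopd{2^{j-1}}{V_j,x}_j$, while the $(x,v)$-suffix dominates $d_j(x,v)$ and, via the triangle inequality $\hopd{2^j}{V_j,v}_j \le \hopd{2^{j-1}}{V_j,x}_j + \hopd{2^{j-1}}{x,v}_j$, also dominates $\hopd{2^j}{V_j,v}_j - \hopd{2^{j-1}}{V_j,x}_j$; hence $\delta_j(u,x) + \delta_j(x,v) \le \len{W} = \phopd{\eta}{u,v}_j$. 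A union bound over all iterations, levels, and pairs completes the argument. This is the step I expect to need the most care, since the estimates must be simultaneously sound, complete over proper $\bigTheta{2^j}$-hop walks, and compatible with the potential function used inside \reflemma{full-reducer-construction}.

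For the running time, let $\Time{m,k}$ be the bound. Substituting $\sizeof{U} = \apxTheta{\sqrt{k h_0}}$ into the costs of \reflemma{extract-sandwich} and of steps one through four, and multiplying by the $\bigO{\sqrt{k/h_0}}$ iterations, yields a recursion of the form
\begin{align*}
  \Time{m,k} \;=\; \apxO{\sqrt{k/h_0}\,\Bigl(hm + \tfrac{h_0^4 m}{h} + \tfrac{\sqrt{k h_0}\,m}{h} + h^2\sqrt{k/h_0^{3}} + (k/h_0)^{3/2}\Bigr) + \sqrt{k/h_0}\,\Time{h_0^2 m/h,\, \sqrt{k/h_0}}}.
\end{align*}
With the chosen parameters and the ansatz $\Time{m,k} = \apxO{m k^{x}}$, the three $m$-bearing contributions $hm$, $\sqrt{k h_0}\,m/h$, and the recursive term each equal $\apxO{m k^{x}}$ exactly when $x$ solves $2x^2 - 7x + 4 = 0$, i.e. $x = (7-\sqrt{17})/4$; one checks the $h_0^4 m/h$ contribution is dominated ($\apxO{m k^{10-13x}}$ with $10-13x < x$). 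The $m$-free contributions are dominated by $\sqrt{k/h_0}\cdot(k/h_0)^{3/2} = (k/h_0)^2 = \apxO{k^{10-2\sqrt{17}}}$ (from the $\sizeof{U}^2\Delta/h_0^2$ term of \reflemma{full-reducer-construction} with $\Delta = \apxO{\sizeof{U}/h_0}$), which is $\apxO{m k^{x}}$ precisely when $m \ge \apxTheta{n^{(33-7\sqrt{17})/4}}$, since $k \le n$ and the exponent gap $10 - 2\sqrt{17} - x$ equals $(33-7\sqrt{17})/4$. The remaining work — verifying that the recursion is well-founded (density preserved under the recursive call, $k$ strictly decreasing, base case terminating) and that no omitted lower-order term breaks the calculation — is routine bookkeeping.
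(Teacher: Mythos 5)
Your proposal is correct and follows essentially the same route as the paper's proof: extract a remote set via \reflemma{extract-sandwich} with the same parameter choices $h = \apxTheta{k^{(\sqrt{17}-3)/4}}$ and $h_0 = \apxTheta{k^{\sqrt{17}-4}}$, neutralize $G_{i_1}$ by layered sparsification and recursion, derive sparse distance estimates at levels $i_0+1,\dots,i_1$ from the neutralized graph, plug them into \reflemma{full-reducer-construction}, and finish with a Johnson-potential computation, with the identical recursion solved by $\apxO{mk^{(7-\sqrt{17})/4}}$. The only cosmetic difference is that you restrict the induction to the dense regime and argue density is preserved under the recursive call, whereas the paper carries the extra $k^{10-2\sqrt{17}}$ term through the recursion unconditionally and discards it at the top level when $m \geq m_0$; both framings are sound.
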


\begin{proof}
  More precisely, we prove that a graph with $m$ edges and $k$
  negative vertices can be neutralized with high probability in
  $\apxO{\mu k^{\frac{7 - \sqrt{17}}{4}} + k^{10 - 2 \sqrt{17}}}$
  randomized time (including when $m \leq m_0$). The claimed running
  time follows from taking $k = n$, running Dijkstra's algorithm in
  the neutralized graph and observing that the first term dominates
  the second when $m \geq m_0$.

  Recall that the algorithm iteratively neutralizes negative vertices
  until no negative edges remain. Consider an iteration starting with
  $k$ negative edges. Let $h$ and $h_0$ be parameters to be determined
  with $\sqrt{k} \geq h \geq h_0 \geq \bigOmega{\log n}$, and $h_0$ a
  power of 2.  Our goal in this iteration is to either neutralize
  $\bigOmega{\sqrt{kh_0}}$ negative edges or return a negative cycle.

  We first apply \reflemma{extract-sandwich}, and either (a) directly
  neutralize $\bigOmega{\sqrt{kh_0}}$ (and end the iteration), (b) find
  a negative cycle (and return), or (c) obtain a set $U$ of
  $\bigOmega{\sqrt{kh_0}}$ negative vertices and a valid potential
  $\pote_1$, such that for all $\eta \geq h_0$, $U$ can $\eta$-hop
  negatively reach at most $n \eta / h$ vertices in the reweighted
  graph $G_{\pote_1}$. Suppose the iteration continues in case (c). The
  goal for this iteration shifts to neutralizing $U$. For ease of
  notation, let $G$ denote $G_{U,\pote_1}$ for the rest of the
  iteration.

  Let $i_0 = \log[2]{h_0}$ and $i_1 = 2 i_0$. We will neutralize $U$ by
  building a $\bigO{h}$-hop reducer via
  \reflemma{full-reducer-construction}, and then compute Johnson's
  potentials neutralizing $U$ in this hop reducer. To this end we must
  first compute distance estimates at level $i$ for all
  $i \in \setof{i_0 + 1,\dots,i_1}$.

  For $i \in \setof{i_0,\dots,i_1}$, let $G_i = (V_i,E_i)$ be the subgraph
  induced by the $2^{i}$-hop negative reach of $U$.  $G_{i_0}$ has
  $\bigO{m h_0/n}$ edges and $\bigO{n h_0/h}$ vertices, and $G_{i_1}$ has
  $\bigO{m h_0^2 / h}$ edges and $\bigO{n h_0^2 / h}$ vertices.

  We neutralize $G_{i_1}$ with layered sparsification and recursion.
  We apply \reflemma{layered-sparsification} to $G_{i_1}$, using the
  $h_0$-hop negative reach $G_{i_0}$ as layers, which returns a graph
  $H = (V_H,E_H)$, potentials $\poteA: V_H \to \reals$, and mappings
  $\pi_1,\pi_2: V_{i_1} \to V_H$ as described in
  \reflemma{layered-sparsification}. We then recurse on $H_{\poteA}$
  to compute potentials $\poteB$ neutralizing $H_{\poteA}$ (with high
  probability), and then use $H_{\poteA,\poteB}$ to compute Johnson
  potentials $\poteC : V_{i_1} \to \reals$ neutralizing $G_{i_1}$.

  The potentials $\poteC$ also neutralize $G_{i}$ for all
  $i \leq i_1$. For each $i \in \setof{i_0 + 1,...,i_1}$, we use the
  neutralized graph $G_{i,\poteC}$ to produce distance estimates at
  level $i$ by essentially the same techniques as in
  \refsection{distance-estimates}. There are multiple ways to
  go about this. One way is to repeat the steps of the proof of
  \reflemma{reducer->distance-estimates}, using $G_{i,\poteC}$ instead
  of a hop reducer to compute the distance estimates.

  Alternatively, to apply \reflemma{reducer->distance-estimates}
  directly, we can create a hop reducer $H_i$ for $G_i$ where we
  initially start with disjoint copies of $G_i^+$ and
  $G_{i,\poteC}$. For each vertex $v$, let $v'$ denotes its copy in
  $G_i^+$ and $v''$ its copy in $G_{i,\poteC}$. For each vertex $v$, we
  add an arc $(v',v'')$ of weight $-\poteC{v}$ and an arc $(v'',v')$
  of weight $\poteC{v}$. By translating $\poteC$ to be nonnegative, we
  can assume that the edges from $G_{i,\poteC}$ to $G_i^+$ are
  nonnegative, leaving only the edges from $G_i^+$ to $G_{i,\poteC}$
  as possibly negative.

  It is easy to see that $H_i$ is an $n$-hop reducer. A $0$-hop
  $(s,t)$-walk $W$ maps directly to the same $0$-hop $(s',t')$-walk in
  $G_i^+$.  An $(s,t)$-walk $W$ with at least one hop in $G_i$ can be
  embedded as a one-hop $(s',t')$-walk in $H_i$ of the same length by
  taking one hop from $s'$ to $s''$, retracing $W$ with $0$ hops in
  $G_{i,\poteC}$, and taking the nonnegative edge back from $t''$ to
  $t'$. Now, applying \reflemma{reducer->distance-estimates} to $H_i$
  gives us the desired distance estimates $(X_i,\delta_i)$ at level
  $i$.

  We submit the distance estimates
  $\setof{(X_i,\delta_i)\where i = i_0 + 1,\dots,i_1}$ to
  \reflemma{full-reducer-construction}, returning a $\bigO{h}$-hop
  reducer $H$. We use $H$ to compute Johnson potentials for $G$,
  neutralizing $U$.

  This describes a single iteration of the algorithm. Each randomized
  subroutine succeeds with high probability, hence each iteration also
  succeeds with high probability. There are at most
  $\apxO{\sqrt{k /h_0}}$ iterations, so all the iterations also
  succeed with high probability. Altogether the algorithm successfully
  either neutralizes the graph, or returns a negative cycle, with high
  probability.

  \subparagraph{Running time analysis.} It remains to bound the
  running time.  Let $\Time{m,n,k}$ denote the running time on a graph
  with $m$ edges, $n$ vertices, and $k$ negative edges.  Consider a
  single iteration. The first step, applying
  \reflemma{extract-sandwich} to extract the remote set $U$, takes
  \begin{align*}
    \bigO{h\mu \log^2 n + h^2\sqrt{k/h_0^3}\log^2 n}
  \end{align*}
  time. Next, sparsifying $G_{i_0}$ with
  \reflemma{layered-sparsification} and recursing on the sparsified
  graph $H_i$, takes
  \begin{align*}
    \Time{\bigO{m_i}, \bigO{n_i}, \bigO{\sizeof{U} \log{n}/h_0}}
    =
    \bigO{\Time{m h_0^2 / h, n h_0^2 / h, \sqrt{k/h_0} \log{n}}}
  \end{align*}
  time.  Let $\mu_i = m_i + n_i \log n_i$ for
  $i \in \setof{i_0 + 1,\dots,i_1}$. It takes $\bigO{\mu_{i_1}}$ time to use
  $H_{i_1}$ to compute the Johnson potentials $\poteC$ neutralizing
  $G_{i_1}$.  Next we compute distance estimates for each $i$ with
  \reflemma{reducer->distance-estimates}. For each $i$, this takes
  \begin{align*}
    \bigO{\frac{\sizeof{U} \mu_i \log{n}}{2^i} + \frac{2^{2i} \mu}{h}}
    =
    \bigO{\frac{\sizeof{U} \mu \log{n}}{h} + \frac{2^{2i} \mu}{h}}
  \end{align*}
  time for each $i$.  Thus we spend
  \begin{align*}
    \bigO{\sum_{i = i_0 + 1}^{i_1}\frac{\sizeof{U} \mu \log{n}}{h} + \frac{2^{2i} \mu}{h}}
    =
    \bigO{\frac{\sizeof{U} \mu \log{n}^2}{h} + h\mu}
  \end{align*}
  computing distance estimates for levels
  $i \in \setof{i_0 + 1,\dots,i_1}$.

  The next step uses the distance estimates to construct a
  $\bigOmega{h}$-hop reducer with
  \reflemma{full-reducer-construction}. By
  \reflemma{full-reducer-construction}, noting that
  \begin{math}
    \Delta = \sum_{i=i_0 + 1}^{i_1} \sizeof{X_i} = \bigO{\sizeof{U} \log{n} / h_0},
  \end{math}
  it takes
  \begin{align*}
    \bigO{h \mu + \frac{\sizeof{U} \mu \log^2 n}{h} +
    \frac{\sizeof{U}^3 \log{n}^2}{h_0^3}}
  \end{align*}
  time to construct the $\bigOmega{h}$ hop reducer $H$.

  Lastly we calculate the Johnson potentials with a $\parof{\sizeof{U}
    / h}$-hop distance computation in $H$. Since $H$ has $\bigO{m +
    \sizeof{U}^2 \log{n} / h_0}$ edges and $\bigO{n}$, this takes
  \begin{align*}
    \bigO{\frac{\sizeof{U}^3 \log{n}}{h h_0} + \frac{\sizeof{U}\mu}{h}}
  \end{align*}
  time. Putting everything together, a single iteration with
  $\sizeof{U} = \bigO{\sqrt{kh}}$ remote vertices takes
  \begin{align*}
    &\apxO{h m + \frac{\sizeof{U} m}{h} +
      \frac{\sizeof{U}^3}{h_0^3} +
      \Time{\frac{m h_0^2}{h},\frac{n h_0^2}{h},
      \frac{\sizeof{U}}{h_0}}}
    \\
    &=
      \apxO{h m + \frac{\sqrt{kh_0} m}{h} +
      \frac{k^{1.5}}{h_0^{1.5}} +
      \Time{\frac{m h_0^2}{h},\frac{n h_0^2}{h},
      \sqrt{\frac{k}{h_0}}}}
  \end{align*}
  time.  Altogether, $\Time{m,n,k}$ satisfies the following recursion.
  \begin{align*}
    \Time{m,n,k}
    =
    \apxO{
    \sqrt{\frac{k}{h_0}}
    \parof{h m + \frac{\sqrt{kh_0} m}{h} +
    \frac{k^{1.5}}{h_0^{1.5}} +
    \Time{\frac{m h_0^2}{h},\frac{n h_0^2}{h},
    \sqrt{\frac{k}{h_0}}}
    }
    }.
  \end{align*}
  For $h = \bigTheta{k^{\frac{\sqrt{17}-3}{4}}}$ and
  $h_0 = \bigTheta{ k^{\sqrt{17}-4}}$, this recursion is satisfied by
  \begin{align*}
    \Time{m,n,k} = \apxO{m k^{\frac{7 - \sqrt{17}}{4}} + k^{10 - 2 \sqrt{17}}},
  \end{align*}
  as desired.
\end{proof}

In the very sparse regime where $m \leq m_0$, the algorithm above
would give an $\apxO{n^{10 - 2 \sqrt{17}}}$ randomized running time,
where $10 - 2 \sqrt{17} \leq 1.75379$. The following algorithm, which
adds a preprocessing step to the previous algorithm, improves on this
bound for the very sparse regime.
\begin{theorem}
  \label{sparse-sssp}
  For $m \leq m_0$, single-source shortest paths with real-valued edge
  lengths can be computed with high probability in
  $\apxO{(m n)^{\parof{66-2\sqrt{17}}/67}}$ randomized time.
\end{theorem}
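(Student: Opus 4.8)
The plan is to wrap the dense-regime algorithm of \cref{dense-sssp} in a preprocessing layer built from layered sparsification (\reflemma{layered-sparsification}) and recursion --- the two techniques of \refsection{layered-sparsification} --- so that the bottleneck term $k^{10-2\sqrt{17}}$, which dominates when $m\le m_0$, gets replaced by the same quantity evaluated at a polynomially smaller number of negative edges, paying only a controlled amount of extra sparsification work.

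Concretely, I would run the algorithm in iterations, as in the proof of \cref{dense-sssp}, each neutralizing a polynomially large batch of negative vertices. In an iteration starting with $k$ negative edges, fix a fresh pair of hop parameters $h\ge h_0^2\ge\bigOmega{\log^2 n}$, distinct from those used inside the dense subroutine. Apply \reflemma{extract-sandwich} with parameters $h_0$ and $h$ to obtain either a negative cycle, a direct neutralization of $\bigOmega{\sqrt{k h_0}}$ vertices, or a valid potential together with a set $U$ of $\bigOmega{\sqrt{k h_0}}$ negative vertices whose $\varh$-hop negative reach has size $\bigO{n\varh/h}$ for all $\varh\ge h_0$. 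In the last case, instead of building a bootstrapped hop reducer, apply \reflemma{layered-sparsification} to $G_U$ with hop parameter $\bigTheta{\sqrt h}$; since the $\sqrt h$-hop negative reach of $U$ is then $\bigO{n/\sqrt h}$, this returns a graph $H$ with $\bigO{m}$ edges, $\bigO{n}$ vertices, and only $\apxO{\sizeof{U}/\sqrt h}=\apxO{\sqrt{kh_0/h}}$ negative edges, together with potentials and maps preserving $G_U$-distances. Neutralize $H$ recursively --- with the dense algorithm of \cref{dense-sssp} serving as the base case once the number of negative edges is small enough --- and then use the neutralized $H$ to compute Johnson potentials for $G_U$ and neutralize $U$ in $\bigO{m}$ further time.

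It then remains to set up and solve the running time recursion. Each iteration neutralizes $\bigOmega{\sqrt{k h_0}}$ negative edges in time $\apxO{h\mu+h^2\sqrt{k/h_0^3}+\Time{\bigO{m},\bigO{n},\apxO{\sqrt{kh_0/h}}}}$ (the first two terms from \reflemma{extract-sandwich}; the linear sparsification and final Dijkstra folded into $\apxO{h\mu}$), and there are $\bigO{\sqrt{k/h_0}}$ iterations, so $\Time{m,n,k}$ obeys a recursion whose leaves invoke the bound $\apxO{\mu k^{(7-\sqrt{17})/4}+k^{10-2\sqrt{17}}}$ of \cref{dense-sssp}. Choosing $h$ and $h_0$ as suitable powers of $k$ to balance the extraction cost, the recursive cost, and above all the superlinear $(\,\cdot\,)^{10-2\sqrt{17}}$ contribution accumulated over the $\bigO{\sqrt{k/h_0}}$ iterations against the target $\apxO{(mn)^{(66-2\sqrt{17})/67}}$ solves the recursion; specializing to $k=n$ and $m\le m_0=\apxTheta{n^{(33-7\sqrt{17})/4}}$, and finishing with Dijkstra in the neutralized graph, yields the stated running time.

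The crux, and the step I expect to be hardest, is exactly this last balancing. Several hop parameters now interact: the $h,h_0$ internal to the dense subroutine and the new $h,h_0$ of the preprocessing layer. Taming the superlinear term $k^{10-2\sqrt{17}}$ pushes toward aggressive sparsification (a large hop parameter $\sqrt h$), while extracting a correspondingly remote set via \reflemma{extract-sandwich} costs $\apxO{h\mu+h^2\sqrt{k/h_0^3}}$ --- polynomial in $h$ --- and is paid once per outer iteration, so the two forces must be reconciled carefully. Making them meet exactly at the exponent $(66-2\sqrt{17})/67$, and checking that the resulting bound crosses over the dense-regime bound precisely at $m=m_0$, is where the real work lies; the soundness, validity, and distance-preservation facts needed along the way are already supplied by \reflemma{layered-sparsification}, \reflemma{extract-sandwich}, and \cref{dense-sssp}, and require only routine verification.
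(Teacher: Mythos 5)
The paper's proof is a single, global preprocessing step, not an iterative scheme: it applies \reflemma{layered-sparsification} \emph{once to the whole graph $G$ with $r=1$} --- i.e., $h$ layers of $G$ itself, no remote set extraction --- yielding a graph $H$ with $m_H=\apxO{hm}$ edges and $k_H=\apxO{k/h}$ negative edges, and then calls the dense-regime algorithm of \cref{dense-sssp} \emph{once} on $H$. The parameter $h=\parof{k^{33-7\sqrt{17}}/m^4}^{1/(37-7\sqrt{17})}$ is chosen so that $(m_H,k_H)$ sits exactly on the crossover threshold of \cref{dense-sssp}'s two terms, giving $\apxO{m_H k_H^{(7-\sqrt{17})/4}}=\apxO{(mk)^{(66-2\sqrt{17})/67}}$ directly. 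Your proposal is quite different: you re-run the iterated remote-set-extraction-then-sparsify loop from \refsection{layered-sparsification}, keeping $m$ fixed and shrinking $k$, with \cref{dense-sssp} only as a leaf subroutine.

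The crucial point you are missing is that the paper's sparsification step deliberately \emph{increases} $m$ (to $hm$) in exchange for decreasing $k$ (to $k/h$). Since \cref{dense-sssp}'s bottleneck term $k^{10-2\sqrt{17}}$ is independent of $m$, paying a factor-$h$ increase in $m$ is well worth driving $k$ down by a factor $h$; the $r=1$ setting of \reflemma{layered-sparsification} is exactly what makes this global $m$-vs.-$k$ trade available without any remote set. Your scheme keeps $m$ fixed, extracts remote sets via \reflemma{extract-sandwich}, and sparsifies $G_U$; this is structurally the algorithm of \refsection{layered-sparsification}, which on its own solves to $\apxO{mk^{\sqrt{3}-1}}$. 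Replacing its recursive leaves with \cref{dense-sssp} and re-tuning $h,h_0$ might plausibly help, but you have not carried out the recursion-solving, and it is far from clear that the several per-iteration costs ($h\mu$, $h^2\sqrt{k/h_0^3}$, the leaf cost $\mu(\sqrt{kh_0/h})^{(7-\sqrt{17})/4}+(\sqrt{kh_0/h})^{10-2\sqrt{17}}$, all multiplied by $\sqrt{k/h_0}$ iterations) can be balanced to land exactly on $(mk)^{(66-2\sqrt{17})/67}$. You acknowledge this yourself (``the crux \dots is where the real work lies''). As written, the proposal has a genuine gap: the claimed exponent is never derived, and the underlying mechanism (keep $m$, shrink $k$ iteratively) is not the one the paper uses to obtain it. The intended argument is both simpler and different --- one global $r=1$ sparsification that intentionally inflates $m$, followed by a single invocation of \cref{dense-sssp}.
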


\begin{proof}
  Let $k$ denote the number of negative vertices in $G$. The proof of
  \cref{dense-sssp} actually gives an algorithm that runs in
  $\apxO{mk^{\frac{7 - \sqrt{17}}{4}} + k^{10 - 2\sqrt{17}}}$ time,
  for all $m$. When $m < k^{\frac{33 - 7\sqrt{17}}{5}}$, the second
  term becomes the bottleneck. In this regime, a slight increase in
  $m$ in exchange for a slight decrease in $k$ lowers the overall
  running time. We will construct a new graph $H$ preserving distances
  in $G$, that decreases the number of negative edges $k$ at the cost
  of increasing the total number of edges $m$. The ratio of edges to
  negative edges in $H$ will be exactly the threshold delineating the
  sparse regime.

  We construct such an $H$ with layered sparsification. For a
  parameter $h$ to be determined, we apply layered sparsification to
  $G$ with $h$ layers of $G$ itself (\reflemma{layered-sparsification}
  with $r = 1$) to get $H, \varphi, \pi_0$, and $\pi_1$, as described
  in \reflemma{layered-sparsification}. $H$ has $m_H = \apxO{hm}$
  edges and $k_H = \apxO{k/h}$ negative vertices. We then neutralize
  $H$ with Theorem \ref{dense-sssp} and subsequently neutralize $G$ by
  computing Johnson's potentials via the neutralized $H$.

  We choose
  \begin{math}
    h = \parof{\frac{k^{33 - 7\sqrt{17}}}{m^4}}^{\frac{1}{37 -
        7\sqrt{17}}}
  \end{math},
  decreasing $k$ and increasing $m$ just enough to return to the
  non-sparse regime of \cref{dense-sssp}. This gives a final runtime
  of
  \begin{align*}
    \apxO{m_H k_H^{\frac{7 - \sqrt{17}}{4}}} = \apxO{(mk)^{\frac{66 - 2\sqrt{17}}{67}}},
  \end{align*}
  as desired.
\end{proof}

\paragraph{Conclusion.}
Together, the running times from \cref{dense-sssp} for $m \geq m_0$
and \cref{sparse-sssp} for $m \leq m_0$ give
\reftheorem{sssp}.

We conclude with a few remarks reflecting on the improvement in
running time from \cite{HJQ26} to \reftheorem{best}. Both algorithms
ultimately use the bootstrapping construction and both algorithms are
bottlenecked by the $\apxO{m \sizeof{U}/h}$ time computing distance
estimates at each level. The subtle difference is that here we start
the bootstrapping from the $h_0$-hop negative reach, for some $h_0$
polynomial in $k$, instead of at the smallest, $\bigO{\log n}$-hop
negative reach. In turn, we do not need $U$ to be
$\bigO{\eta,\eta/h}$-remote for $\eta < h_0$, and we can increase the
size of the remote set returned by \reflemma{extract-sandwich}
substantially from $\sqrt{k \log n}$ to $\sqrt{k h_0}$. The layered
sparsification technique introduced in
\refsection{layered-sparsification} gives us a faster recursive
algorithm to neutralize the $h_0$-hop negative reach, allowing us to
extend the value of $h_0$ without this step becoming a bottleneck.

In summary, the algorithm here accelerates \cite{HJQ26} by
neutralizing more negative edges by a polynomial,
$\sqrt{h_0 / \log n}$ factor in each iteration. Starting the bootstrap
construction at a moderate $h_0$-hop negative reach allows the
sandwich size to increase by a $\sqrt{h_0}$ factor. The efficiency
gained by layered sparsification allows us to increase $h_0$. These
are the two driving factors in the faster running time.

At the moment, $\apxO{m \sizeof{U} / h}$ time spent computing distance
estimates feels like a natural cost in the bootstrapping approach, and
it remains an interesting challenge to overcome this bottleneck.

We also remark that the algorithm here goes beyond \cite{HJQ26} for
very sparse graphs. For sparse graphs, the $\apxO{m^{4/5} n}$ from
\cite{HJQ26} is obtained from the $\apxO{m n^{4/5}}$ algorithm for
sufficiently dense graphs by rebalancing internal parameters, but the
algorithmic steps are all the same. We go beyond simple
reparametrization in \cref{sparse-sssp} by using layered
sparsification --- increasing the total edges while decreasing the
negative edges --- before applying the algorithm from the non-sparse
regime.

\begin{remark}
  We did not try to optimize logarithmic factors in favor of a simpler
  exposition. If one chooses the parameters $m_0$, $h$, and $h_0$ to
  also account for logarithmic factors, then letting
  \begin{math}
    m_0 = n^{\parof{33-\sqrt{17}}/4} / \log^7 n,
  \end{math}
  one gets
  \begin{align*}
    \bigO{\mu n^{\frac{7-\sqrt{17}}{4}} \log^5 n}
  \end{align*}
  randomized time for $m \geq m_0$, and
  \begin{align*}
    \bigO{\parof{\mu n}^{\frac{66 - 2 \sqrt{17}}{67}} \log^{\frac{394 -
    16\sqrt{17}}{67}} n}
  \end{align*}
  randomized time for $m \leq m_0$.
\end{remark}

\printbibliography

\appendix
\crefalias{section}{appendix}
\section{Faster betweenness reduction and updated running times}

\label{recursive-betweenness-reduction}

This section integrates the improved betweenness reduction subroutine
of \cite{LLRZ25}, calculates the implied running times for the two
algorithms described above, and describes a third algorithm that
better leverages the new subroutine.

\subsection{Betweenness reduction}

``Betweenness reduction'' is a preprocessing step introduced by
\citet{Fineman24}. For three vertices $s,t,v \in V$ and
$h \in \nnintegers$, we say that $v$ is \defterm{(weakly) $h$-hop
  negatively between} $s$ and $t$ if
\begin{align*}
  \hopd{h}{s, v} + \hopd{h}{v, t} < 0.
\end{align*}
For parameters $b,h \in \naturalnumbers$, \cite{Fineman24} gave a
subroutine computing a valid potential $\poteA$ such that with high
probability, for any two vertices $s,t$, at most $n/h$ vertices are
negatively between $s$ and $t$ in the reweighted graph
$G_{\poteA}$. \cite{Fineman24} algorithm ran in $\apxO{mbh + b^2 n}$
randomized time and the $b^2 n$ term was essentially removed in follow
up work \cite{HJQ26}. Betweenness reduction is a critical step for
extracting a remote set.  This article, which is primarily focused on
how to neutralize a remote set of edges, and not on how to obtain a
remote set of edges, hides the betweenness reduction step in
\reflemma{extract-sandwich}. The running time in
\reflemma{extract-sandwich} directly reflects the running time of the
betweenness reduction step, and betweenness reduction is a bottleneck
for all the algorithms in this article.

\citet{LLRZ25} obtained the following improved bounds for betweenness
reduction.
\begin{lemma}[{\cite[Lemma 11]{LLRZ25}}]
  \labellemma{recursive-betweenness-reduction} Let
  $b,h \in \naturalnumbers$.  One can compute a valid potential
  $\poteA$ so that any two vertices have at most $n/b$ vertices
  $h$-hop negatively between them in $G_{\poteA}$, in running time
  bounded by $\bigO{\parof{m + n \log n}h}$ plus the time to
  neutralize a real-weighted graph with $\bigO{m h}$ edges and
  $\bigO{b \log n}$ negative edges.
\end{lemma}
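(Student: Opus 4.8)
The plan is to combine a random ``pivot'' set with a hitting-set argument -- reducing the task to controlling a few pivots -- and then to handle all the pivots at once by neutralizing a layered auxiliary graph whose negative-edge count has been driven down to the pivot count.

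\emph{Reduction to pivots.} Since the minimizations over the two endpoints decouple, a vertex $v$ is $h$-hop negatively between \emph{some} pair if and only if $\hopd{h}{V,v}+\hopd{h}{v,V}<0$. So the first step is to sample $R\subseteq V$ with $\sizeof R=\bigTheta{b\log n}$ vertices uniformly at random and then produce a valid potential $\poteA$ with $\hopd{h}{V,r}_{\poteA}+\hopd{h}{r,V}_{\poteA}\geq 0$ for every $r\in R$; this says exactly that no pivot is $h$-hop negatively between any pair in $G_{\poteA}$. A union bound over the $\bigO{n^2}$ pairs then gives that with high probability any pair with more than $n/b$ between-vertices would have contained a pivot among them, so in fact every pair has at most $n/b$ between-vertices. (The probabilistic step needs the usual care because $\poteA$ depends on $R$; this is handled as in \cite{Fineman24,HJQ26,LLRZ25}.)

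\emph{The auxiliary graph.} To produce $\poteA$, I would build a layered graph $H$ on roughly $2h$ copies of $G^+$: layers $0,\dots,h$ form a ``first half'' and layers $h,\dots,2h$ a ``second half'', with the middle layer $h$ split into a left copy (end of the first half) and a right copy (start of the second half). Consecutive copies are joined by length-$0$ self-arcs and by forward ``hop'' arcs carrying the lengths of the negative edges of $G$; the only arcs from the left to the right middle copy are length-$0$ ``choke'' arcs $(r^L_h,r^R_h)$ for $r\in R$; and for every vertex $v$ there is a length-$0$ ``wrap'' arc from its layer-$2h$ copy to its layer-$0$ copy. A $(\leq h)$-hop walk $s\leadsto r$ embeds in the first half ending at $r^L_h$, a $(\leq h)$-hop walk $r\leadsto t$ embeds in the second half starting at $r^R_h$, and the wrap arc returns to $t$'s layer-$0$ copy, so every $H$-walk between two layer-$0$ copies projects to a $(\leq h)$-hop walk into a pivot followed by a $(\leq h)$-hop walk out of it. Note $H$ has a negative cycle only if $G$ does, since any cycle must use wrap arcs and hence projects to a closed walk of $G$; so $H$ is neutralizable whenever $G$ has none. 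Now pre-reweight $H$ by the standard layered hop-distance potentials ($\hopd{i}{V,\cdot}$ on first-half layers and $-\hopd{2h-i}{\cdot,V}$ on second-half layers): as in the layered hop-reducer of \cite{Fineman24}, every arc becomes nonnegative except the at most $\sizeof R=\bigO{b\log n}$ choke arcs, whose reweighted lengths are exactly $\hopd{h}{V,r}+\hopd{h}{r,V}$. So the pre-reweighted $H$ has $\bigO{mh}$ edges and $\bigO{b\log n}$ negative edges.

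\emph{Recursion, extraction, and running time.} Neutralize the pre-reweighted $H$ by (recursively) invoking a real-weighted neutralization routine -- this is the ``plus the time to neutralize a graph with $\bigO{mh}$ edges and $\bigO{b\log n}$ negative edges'' term -- and add the two potentials to obtain $\Phi$ making every arc of $H$ nonnegative. Set $\poteA(v):=\Phi(v_0)$ on the layer-$0$ copy; it is valid for $G^+$ since layer $0$ is a nonnegatively-reweighted copy of $G^+$. Tracing the embedded walk $s_0\leadsto r^L_h\to r^R_h\leadsto t_{2h}\to t_0$ through the all-nonnegative $H_{\Phi}$ and telescoping yields $\hopd{h}{s,r}+\hopd{h}{r,t}\geq\Phi(t_0)-\Phi(s_0)=\poteA(t)-\poteA(s)$, i.e. $\hopd{h}{s,r}_{\poteA}+\hopd{h}{r,t}_{\poteA}\geq 0$, which is what we needed. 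The running time is $\bigO{\mu h}$ to compute all the required $\hopd{i}{V,\cdot}$ and $\hopd{i}{\cdot,V}$ by the hop-distance DP of \refequation{hop-distance}, plus $\bigO{mh}$ to assemble and reweight $H$, plus the recursive neutralization -- matching the stated bound. The step I expect to be the main obstacle is getting the auxiliary graph exactly right so that a \emph{single} valid potential on $V(G)$, read off from the neutralization of $H$, simultaneously handles both endpoints of every betweenness-witnessing walk: the wrap arcs must be length $0$ (so as not to distort distances) yet must not create spurious negative cycles, which is precisely why the choke is needed so that a wrap-around closes a genuine closed walk of $G$ rather than an open $s$-to-$t$ walk; coordinating the choke, the layer-$h$ split, the wrap arcs, and the layered reweighting is the delicate part, with the probabilistic hitting-set argument being the other place where care from prior work is required.
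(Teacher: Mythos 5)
Your proposal is correct and follows essentially the same approach as the paper: sample a pivot set $R$ of $\Theta(b\log n)$ vertices, build a $\Theta(h)$-layered auxiliary graph whose only negative arcs after reweighting by layered hop-distance potentials are the $|R|$ ``choke/reset'' arcs through the sampled vertices, recursively neutralize, and read off the potential on the base layer. The only cosmetic differences are indexing (the paper uses layers $-h,\dots,h$ with the single shared base layer $0$ serving as both the start of the ``out'' half and the end of the ``in'' half, whereas you use layers $0,\dots,2h$ with a split middle layer and length-$0$ wrap arcs $(v_{2h},v_0)$ to glue the end back to the base), and these are isomorphic up to relabeling and add only a redundant copy of $G^+$.
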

\begin{proof}[Proof sketch]
  We give a sketch of the proof that highlights the main ideas, as it
  is relatively simple. We refer to \cite{LLRZ25} for
  full details and analysis. The new subroutine can be interpreted as
  simulating the approach in \cite{Fineman24} with two key ideas that
  improve the efficiency.

  We first sketch the $\apxO{m bh + nb^2}$-time betweenness reduction
  algorithm from \cite{Fineman24}.  Let $X$ be a set of $\apxO{b}$
  vertices uniformly at random. We construct an auxiliary graph $H$,
  beginning with $G^+$.  For all $x \in X$ and $v \in V$, we add
  ``shortcut'' arcs $(x,v)$ and $(v,x)$ of length $d^h(x,v)$ and
  $d^h(v,x)$, respectively. (These edge lengths can be computed via an
  $h$-hop distance computation to and from $x$ in $G$, for each
  $x \in X$, in $\apxO{mbh}$ time.) The only negative arcs in $H$ are
  the shortcut arcs.

  \begin{center}
    \includegraphics{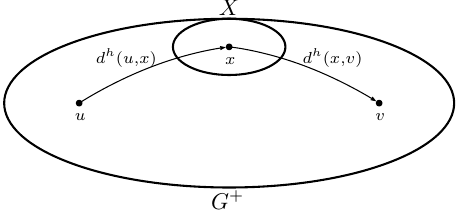}
  \end{center}

  Suppose we compute Johnson's potentials $\varphi$ neutralizing
  $H$. (Only $\bigO{\sizeof{X}}$ hops are needed to compute these
  potentials, because every negative edge is incident to a vertex in
  $X$.) Since $H$ contains $G^+$, the restriction of $\varphi$ to the
  vertices in $G^+$ is also a valid potential for $G$. Moreover,
  because $\varphi$ neutralizes the shortcut arcs in $H$, for all $s,t
  \in V$ and $x \in X$, we have
  \begin{align*}
    \hopd{h}{s,x}_{G,\pote} + \hopd{h}{x,t}_{G,\pote} = \len{s,x}_{H,\pote} +
    \len{x,t}_{H,\pote} \geq 0.
  \end{align*}
  As argued in \cite{Fineman24}, the inequality above implies that all
  but a $(1/b)$-fraction of the vertices are not negatively between
  $s$ and $t$ with high probability. The overall running time was
  $\apxO{mbh + b^2 n}$, where the $b^2n$ term accounts for the $bn$
  shortcut edges added to $H$. The primary bottleneck is the
  $\apxO{mbh}$ time computing the lengths of the shortcut edges.

  \cite{LLRZ25} accelerates Fineman's subroutine with two key
  ideas. First, the new subroutine uses $h$-layer auxiliary graphs to
  effectively simulate $h$-hop distance computations without having to
  compute the distances explicitly to and from every sampled
  vertex. This comes at a cost of increasing the total number of edges
  from $m$ to $mh$.  Second, the auxiliary graph is arranged so that,
  after adding potentials that neutralize the negative ``forward'' arcs
  between the layers, the remaining auxiliary graph only has
  $\sizeof{X}$ negative arcs. The key point is that the relatively few
  remaining negative arcs can be neutralized recursively. Thus the
  overall running time is that of neutralizing $\sizeof{X} = \apxO{b}$
  negative arcs in a graph of $mh$ edges.

  The construction is essentially as follows. We construct a graph that
  initially consists of $2h+1$ copies of $G^+$. We call these copies
  $G_{-h}^+, G_{1-h}^+,\dots,G_{-1}^+,G_0^+,G_1^+,...,G_{h-1}^+,G_h^+$.
  For each vertex $v$ and index $i$, let $v_i$ denote its copy in
  $G_i^+$.  For every negative arc $(u,v)$ and index $i < h$, we add a
  ``forward'' arc $(u_i,v_{i+1})$ of the same length. For each vertex
  $v$ and index $i < h$, we add a length-0 ``self'' arc
  $(v_i,v_{i+1})$.  Lastly, for every sampled negative vertex $x \in X$,
  we add an auxiliary ``reset'' arc $(x_h,x_{-h})$ of length $0$.

  \begin{center}
    \includegraphics{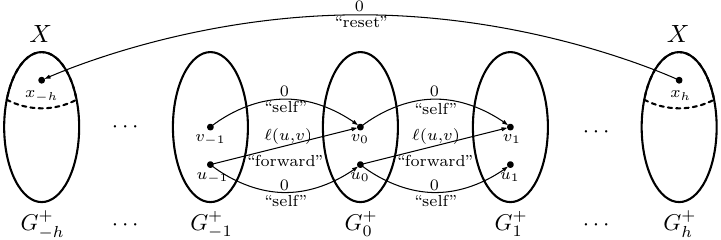}
  \end{center}

  Relative to the construction in \cite{Fineman24}, we can interpret the
  ``top'' layers $G_1^+,...,G_h^+$ as simulating the shortcut arcs
  $d^h(u,x)$ for $u \in V$ and $x \in X$, and the ``bottom'' layers
  $G_{-h}^+,...,G_{-1}^+$ as simulating the shortcut arcs $d^h(x,v)$ for
  $x \in X$ and $v \in V$. The ``reset'' arcs then allow us to
  concatenate shortcut arcs going through $X$.

  Consider the potentials $\pote$ over $V$ defined by
  \begin{align*}
    \pote{v_i} =
    \begin{cases}
      \hopd{i}{V,v_i} &\text{if } i \geq 0 \\
      -\hopd{-i}{v_i,V} &\text{if } i \leq 0.
    \end{cases}
  \end{align*}
  It is easy to see that $\pote$ is valid over $H$ and neutralizes all
  of the forward arcs. The only negative arcs in $H_{\varphi}$ are the
  reset arcs $(x_h,x_{-h})$ for $x \in S$.

  One recursively neutralizes $H_{\poteA}$ to produce another
  potential $\poteB$.  The combined potentials $\poteA+\poteB$,
  restricted to $G_0^+$, give valid potentials for $G$. For all
  $s,t \in V$ and $x \in X$, we have
  \begin{align*}
    \hopd{h}{s,x}_{G,\poteA+\poteB} + \hopd{h}{x,t}_{G,\poteA+\poteB}
    =
    \hopd{0}{s_0,x_h}_{H,\poteA + \poteB} + \len{x_h,x_{-h}}_{H,\poteA +
    \poteB} + \hopd{0}{x_{-h},t_0}_{H,\poteA+\poteB} \geq 0.
  \end{align*}
  It follows from \cite{Fineman24} that with high probability, all
  pairs of vertices have at most $n/b$ vertices $h$-hop negatively between
  them in $G_{\poteA+\poteB}$.
\end{proof}

With \reflemma{recursive-betweenness-reduction}, the running time from
\reflemma{extract-sandwich} improves to the following. If we only need
betweenness reduction for a single hop parameter (as in the recursive
sparsification algorithm), then the following bounds follow directly
from \reflemma{recursive-betweenness-reduction}, Lemma 4.1 in
\cite{HJQ25}, and Lemma 3.4 from \cite{HJQ26}.

\begin{lemma}
  \label{extract-sandwich-2}
  Let $h \geq \bigOmega{\log n}$. One can compute, with high
  probability and in the time to neutralize $\bigO{b \log n}$ negative
  edges in a graph of $\bigO{mh}$ edges, either:
  \begin{compactmathresults}
  \item A negative cycle.
  \item Valid potentials $\varphi$ neutralizing a set of
    $\bigOmega{\sqrt{\numN h}}$ negative vertices.
  \item Valid potentials $\varphi$ and a set $U$ of negative vertices
    such that:
    \begin{compactmathproperties}
    \item $\sizeof{U} \geq \bigOmega{\sqrt{\numN h}}$.
    \item $U$ has $h$-hop negative reach of at most $n/b$ vertices.
    \end{compactmathproperties}
  \end{compactmathresults}
\end{lemma}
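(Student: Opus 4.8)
The plan is to obtain this strengthening of \reflemma{extract-sandwich} by substituting the faster recursive betweenness-reduction subroutine of \reflemma{recursive-betweenness-reduction} for Fineman's, and then reusing the negative-sandwich extraction of \cite{HJQ25} and the sandwich-to-remote-set conversion of \cite{HJQ26} essentially as black boxes. The key simplification is that here we only ask for a single-scale guarantee --- an $h$-hop negative-reach bound at one hop parameter $h$, rather than the multi-scale guarantee of \reflemma{extract-sandwich} --- so we only need betweenness reduction at that single parameter, which is exactly what \reflemma{recursive-betweenness-reduction} provides.

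First I would invoke \reflemma{recursive-betweenness-reduction} with parameters $b$ and $h$ to compute a valid potential $\varphi_1$ so that, with high probability, in $G_{\varphi_1}$ every pair of vertices has at most $n/b$ vertices $h$-hop negatively between them. By \reflemma{recursive-betweenness-reduction} this costs $\bigO{\parof{m + n\log n}h}$ time plus the time to neutralize a real-weighted graph with $\bigO{mh}$ edges and $\bigO{b\log n}$ negative edges. (Should this neutralization report a negative cycle we are in case 1.) Henceforth replace $G$ by $G_{\varphi_1}$.

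Next I would run the negative-sandwich extraction of Lemma~4.1 in \cite{HJQ25} on $G_{\varphi_1}$. Using the betweenness bound just established, this performs $\bigO{h}$ single-source $h$-hop and proper-$h$-hop distance computations --- $\apxO{\parof{m+n\log n}h}$ time, the proper-hop computations being tractable because the hop counts involved are bounded as in \cite{HJQ25} --- and returns either a negative cycle (case 1), a valid potential neutralizing $\bigOmega{\sqrt{\numN h}}$ negative vertices (case 2), or a valid potential $\varphi_2$ together with a negative sandwich of $\bigOmega{\sqrt{\numN h}}$ negative vertices in $G_{\varphi_1,\varphi_2}$. In the last case I would then apply the sandwich-to-remote-set conversion of Lemma~3.4 in \cite{HJQ26}, which in this single-hop-parameter regime needs only the $h$-scale betweenness bound already in force and outputs a valid potential $\varphi_3$ together with a set $U$ of negative vertices with $\sizeof{U} = \bigOmega{\sqrt{\numN h}}$ whose collective $h$-hop negative reach in $G_{\varphi_1,\varphi_2,\varphi_3}$ has at most $n/b$ vertices. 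Setting $\varphi = \varphi_1 + \varphi_2 + \varphi_3$, which is valid because validity composes (\refsection{preliminaries}), gives case 3.

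For the running time, the two post-betweenness-reduction steps take only $\apxO{\parof{m+n\log n}h}$ time, which is subsumed by ``the time to neutralize $\bigO{b\log n}$ negative edges in a graph of $\bigO{mh}$ edges'': such a graph already has $\bigOmega{mh}$ edges and $\bigOmega{nh}$ vertices, so any neutralization of it takes $\bigOmega{\parof{m + n\log n}h}$ time if only for a final Dijkstra pass. Since every randomized step succeeds with high probability, so does the whole procedure. I expect the one genuinely non-mechanical point to be verifying the interfaces: that the sandwich extraction of \cite{HJQ25} and the sandwich-to-remote-set conversion of \cite{HJQ26}, originally presented alongside Fineman-style multi-scale betweenness reduction, still go through when fed only the single $h$-scale betweenness guarantee of \reflemma{recursive-betweenness-reduction}, and that the remote set they produce keeps size $\bigOmega{\sqrt{\numN h}}$ rather than degrading to $\bigOmega{\sqrt{\numN\log n}}$.
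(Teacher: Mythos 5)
Your proposal is correct and takes essentially the same approach as the paper: the paper simply asserts that \cref{extract-sandwich-2} ``follows directly from \reflemma{recursive-betweenness-reduction}, Lemma 4.1 in \cite{HJQ25}, and Lemma 3.4 from \cite{HJQ26},'' and your proof is a careful elaboration of exactly that composition — swap in the recursive betweenness-reduction subroutine, run the single-scale sandwich extraction of \cite{HJQ25}, convert to a remote set via \cite{HJQ26}, and add the potentials. You are also right that the one non-mechanical point is checking that the downstream lemmas only need the single $h$-scale betweenness guarantee and that the sandwich size stays at $\bigOmega{\sqrt{\numN h}}$; the paper leaves this implicit.
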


To obtain remoteness over a range of hop parameters, as in the
bootstrapping algorithm, one can extend
\reflemma{recursive-betweenness-reduction} and obtain the
following. (We refer the reader to \cite[Lemma 29]{LLRZ25} for the
extension of \reflemma{recursive-betweenness-reduction}.)

\begin{lemma}
  \labellemma{extract-sandwich-3}
  Let $h_0 = \bigOmega{\log n}$ and $h \geq h_0$. One can compute,
  with high probability, and in $\apxO{mh}$ time plus the time to
  neutralize $\apxO{b}$ negative edges in a graph of $\apxO{mh_0}$ edges,
  either:
  \begin{compactmathresults}
  \item A negative cycle.
  \item Valid potentials $\varphi$ neutralizing a set of
    $\bigOmega{\sqrt{\numN h_0}}$ negative vertices.
  \item Valid potentials $\varphi$ and a set $U$ of negative vertices
    such that:
    \begin{compactmathproperties}
    \item $\sizeof{U} \geq \bigOmega{\sqrt{\numN h_0}}$.
    \item $U$ has $h_0$-hop negative reach of size $n / b$.
    \item $U$ has $\varh$-hop negative reach of size $n\varh/h$ for
      all $\varh \geq h_0$.
    \end{compactmathproperties}
  \end{compactmathresults}
\end{lemma}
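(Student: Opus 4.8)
The plan is to prove \reflemma{extract-sandwich-3} in the same two stages as \reflemma{extract-sandwich}, which combines the multi-scale betweenness reduction of \cite[Lemma~6.1]{HJQ26} with the sandwich extraction of \cite[Lemma~3.4]{HJQ26}: the first stage produces a valid potential $\poteA$ under which, for every hop scale $\varh \geq h_0$, every pair of vertices has only $\apxO{n\varh/h}$ vertices $\varh$-hop negatively between them in $G_{\poteA}$; the second stage, applied to $G_{\poteA}$, returns either a negative cycle, a valid potential directly neutralizing $\bigOmega{\sqrt{\numN h_0}}$ negative vertices, or a set $U$ of $\bigOmega{\sqrt{\numN h_0}}$ negative vertices whose $\varh$-hop negative reach is $\apxO{n\varh/h}$ for all $\varh \geq h_0$. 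The only change here is to replace the first stage by the recursive betweenness reduction of \cite{LLRZ25}; the second stage (\cite[Lemma~3.4]{HJQ26}) is reused verbatim.

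First I would invoke the multi-scale extension of \reflemma{recursive-betweenness-reduction}, namely \cite[Lemma~29]{LLRZ25}, with parameters $b$, $h_0$, $h$, to produce a valid $\poteA$ under which, with high probability, in $G_{\poteA}$ every pair of vertices has at most $n/b$ vertices $h_0$-hop negatively between them, and more generally at most $n\varh/h$ vertices $\varh$-hop negatively between them for all $\varh \geq h_0$ (with $b \leq h/h_0$, so these bounds are mutually consistent at $\varh = h_0$). Exactly as in the single-scale \reflemma{recursive-betweenness-reduction}, this costs $\apxO{mh}$ time to build the layered auxiliary graph and compute its hop-distance potentials, plus one recursive call; the point I would pull from \cite{LLRZ25} is that running the layered gadget across the $\apxO{1}$ dyadic scales between $h_0$ and $h$ leaves a recursive subproblem of neutralizing only $\apxO{b}$ negative reset arcs in a graph with $\apxO{mh_0}$ edges, rather than the $\apxO{mh}$-edge graph of the single-scale version.

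Second I would run the extraction stage of \cite[Lemma~3.4]{HJQ26} on $G_{\poteA}$ without change: take a random negative vertex (or a small random sample of them), grow a negative sandwich using the multi-scale betweenness guarantee, and output one of the three advertised cases, with $\varphi$ set to $\poteA$ plus any further neutralizing potentials produced in this stage (validity composes additively). This stage performs a handful of single-source $\varh$-hop distance computations with $\varh \leq h$, each taking at most $\apxO{mh}$ time, together with lower-order bookkeeping, so it is absorbed into the first term of the claimed bound; in particular, the separate $\apxO{h^2\sqrt{\numN/h_0^3}}$ term present in the original \reflemma{extract-sandwich} does not reappear. A union bound over the polynomially many high-probability events of the two stages completes the proof.

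I expect the main obstacle to be bookkeeping rather than a genuinely new idea. One must confirm that \cite[Lemma~29]{LLRZ25} indeed supplies the smooth reach-growth profile ``$n\varh/h$ for all $\varh \geq h_0$'' that \cite[Lemma~3.4]{HJQ26} consumes, whereas the single-scale \reflemma{recursive-betweenness-reduction} controls only one scale; pin down the precise tie between the betweenness parameter $b$ and the $h_0$-hop reach $n/b$ of the extracted set; and verify that every distance computation and auxiliary graph in the extraction stays within the $\apxO{mh}$ budget, so that the total cost collapses to ``$\apxO{mh}$ plus the time to neutralize $\apxO{b}$ negative edges in an $\apxO{mh_0}$-edge graph'', exactly as stated.
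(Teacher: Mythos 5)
Your two-stage plan---replace the multi-scale betweenness reduction of [HJQ26, Lemma~6.1] with the recursive version from [LLRZ25, Lemma~29], then reuse the extraction step of [HJQ26, Lemma~3.4] verbatim on the reweighted graph---is exactly what the paper is gesturing at: the paper itself offers no proof of \reflemma{extract-sandwich-3} beyond ``one can extend \reflemma{recursive-betweenness-reduction}'' and a pointer to [LLRZ25, Lemma~29], so your proposal fills in the same skeleton and is structured the same way.

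One place where your explanation is looser than it should be: you justify the $\apxO{mh_0}$-edge recursive subproblem by saying that ``running the layered gadget across the $\apxO{1}$ dyadic scales between $h_0$ and $h$'' leaves only $\apxO{b}$ reset arcs in a graph with $\apxO{mh_0}$ edges. If one literally runs the layered construction at each dyadic scale $h_0, 2h_0, \dots, h$, the largest auxiliary graph already has $\apxO{mh}$ edges (and the union of all of them does too), so this would naively give an $\apxO{mh}$-edge recursive subproblem, the same as the single-scale \reflemma{recursive-betweenness-reduction}. The improvement to $\apxO{mh_0}$ edges has to come from a more careful argument in [LLRZ25, Lemma~29]---presumably that a recursive neutralization call is needed only at the smallest scale $h_0$, with the larger scales' betweenness control following from the $h_0$-scale potentials and the $\apxO{mh}$-time hop-distance preprocessing rather than from additional recursive calls. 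Your write-up should either cite this as a black box (as the paper does) or state the mechanism precisely; as phrased, a reader would reasonably object that your reasoning predicts an $\apxO{mh}$-edge recursion, contradicting the claim you set out to prove.
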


\subsection{Plugging into the recursive algorithm}

We now update the running time of the simple recursive algorithm based
on the improved running time of \cref{extract-sandwich-2}.

Recall the recursive algorithm from
\refsection{layered-sparsification}. For a parameter $h$, the
algorithm repeatedly extracts an $h$-remote set $U$ of size
$\sqrt{k h}$, creates an $h$-layered graph of $\bigO{m}$ edges,
sparsifies the number of negative edges to
$\bigO{\sqrt{k/h} \log{n}}$, recursively neutralizes the sparsified
layered graph, and uses the neutralized graph to compute Johnson's
potentials for $G_U$.  Letting $\Time{m,k}$ denote the running time to
neutralize $k$ negative edges in a graph of size $m$,
\cref{extract-sandwich-2} improves the running time of the first step
to $\Time{mh,h}$. Overall we have a recurrence of the form
\begin{align*}
  \Time{m,k} = \apxO{\sqrt{k/ h}\parof{\Time{mh,h} + \Time{m,\sqrt{k /
  h}}}}.
\end{align*}
This recurrence minimized by $h = k^{\frac{\sqrt{2} - 1}{\sqrt{2} + 1}}$, giving
\begin{align*}
  \Time{m,k} = \apxO{m k^{1/\sqrt{2}}}.
\end{align*}

\subsection{Updating the bootstrapping running time}

One can similarly revise the running time of the bootstrapping
algorithm that obtained the bounds in \reftheorem{sssp}.  Here
balancing parameters is not as straightforward. Some (natural)
parameterizations lead to the same $\apxO{mn^{1/\sqrt{2}}}$ running
time (for sufficiently dense graphs) as above. One can do slightly
better by reparameterizing the bootstrapping algorithm as follows. Let
$\alpha \approx 0.7044$ and consider an iteration with $k$ negative
vertices.  We set $h = k^{1 - \alpha}$ and
$h_0 = k^{2/\alpha - 2 - \alpha}$, and use \reflemma{extract-sandwich-3} to
extract a set of negative vertices $U$ with
$\sqrt{kh^{1 - \alpha}h_0^\alpha}$ negative vertices that is
$(\varh, h/\varh)$-remote for all $\varh \in [h_0, h]$. The rest of
the algorithm is identical.  All together, this achieves a running
time of $\apxO{mk^{\alpha}} \leq \bigO{m n^{.7044}}$ when $m$ is above
some density threshold. We omit additional details because this
algorithm appears to be slower than the following algorithm, which is
also simpler. We also believe that a tweak of the bootstrapping
algorithm can bring the running time very slightly, but would still be
still slower than the ensuing algorithm.

\subsection{The twice- (or thrice-) recursive algorithm}

We now describe a third approach, that lies somewhere between the
recursive sparsification algorithm and the bootstrapping algorithm in
complexity.  It uses the layered sparsification technique from the
simple recursive sparsification algorithm, and the shortcut edges from
the bootstrapping algorithm, but does not iteratively bootstrap hop
reducers from a small negative reach to the entire graph as in the
actual boostrapping algorithm.  We originally called it the
``twice-recursive algorithm'' because it recurses on two graphs to
neutralize a remote set, although perhaps it ought to be called the
``thrice-recursive algorithm'' now that \cref{extract-sandwich-2} adds
a third recursive call via the betweenness reduction step.

Let $\alpha \approx 0.69562077$ be the unique real root of the
polynomial $p(x) = x^3 + 2 x^2 + x - 2$. Let
$\gamma = 1+ \frac{\alpha^2\parof{1-\alpha}}{2(2+\alpha)} \approx
1.0273194$. Let $m_0 = n^{\gamma}$.

\begin{lemma}
  \labellemma{twice-recursive-dense}
  For $m \geq m_0$, single-source shortest paths with
  real-weighted edge weights
  can be solved with high probability in $\apxO{m n^{\alpha}}$
  randomized time.
\end{lemma}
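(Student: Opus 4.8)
First I would establish the slightly stronger statement that a preprocessed graph with $k$ negative vertices can be neutralized with high probability in $\apxO{\mu k^{\alpha} + k^{\alpha+\gamma}}$ randomized time, for \emph{all} $m$; the lemma then follows by taking $k = n$, running Dijkstra in the neutralized graph, and noting that for $m \geq m_0 = n^{\gamma}$ the term $\mu k^{\alpha} = m n^{\alpha}$ dominates $n^{\alpha+\gamma}$. The bound on the neutralization time $T(m,n,k)$ is proved by strong induction on $k$ through a running-time recurrence, following the structure of the proof of \cref{dense-sssp} but replacing the full bootstrap of \reflemma{full-reducer-construction} by a single round of the shortcut-gadget construction of \reflemma{hop-reducer-construction}.

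One iteration, which neutralizes $\bigOmega{\sqrt{k h_0}}$ negative vertices, would proceed with parameters $h,h_0$ powers of two, $\sqrt k \geq h \geq h_0^2$, $h_0 \geq \bigOmega{\log n}$, to be fixed later. (i) Apply \reflemma{extract-sandwich-3} to either directly neutralize $\bigOmega{\sqrt{k h_0}}$ negative vertices, report a negative cycle, or obtain a valid potential and a set $U$ of $\bigOmega{\sqrt{k h_0}}$ negative vertices that is $(\varh,\varh/h)$-remote for all $\varh \geq h_0$. By \reflemma{extract-sandwich-3}, which rests on the recursive betweenness reduction of \reflemma{recursive-betweenness-reduction}, this costs $\apxO{\mu h}$ time plus a \emph{recursive} neutralization of $\apxO{h/h_0}$ negative edges in a graph with $\apxO{m h_0}$ edges. (ii) Writing $i_0 = \logup[2]{h_0}$, $i_1 = 2 i_0$, and $G_j$ for the subgraph induced by the $2^j$-hop negative reach of $U$, neutralize $G_{i_1}$ — which has $\bigO{m h_0^2/h}$ edges, $\bigO{n h_0^2/h}$ vertices, and $\sizeof{U}$ negative vertices — by applying layered sparsification (\reflemma{layered-sparsification}) with $h_0$ layers of $G_{i_0}$ and then \emph{recursing} on the resulting graph, which has $\bigO{m h_0^2/h}$ edges and $\apxO{\sqrt{k/h_0}}$ negative edges; exactly as in the proof of \cref{dense-sssp}, the returned potentials neutralize every $G_j$ with $j \leq i_1$, so sparse distance estimates $(X_j,\delta_j)$ for all $j\in\setof{i_0+1,\dots,i_1}$ can be read off via \reflemma{reducer->distance-estimates}. (iii) Invoke \reflemma{hop-reducer-construction} with base subgraph $G_U^+$, shortcut gadgets for levels $i_0+1,\dots,i_1$, and the layered gadget for $G_{i_0}$ — but \emph{without} continuing the bootstrap to larger negative reaches — obtaining a $\bigOmega{h_0^2}$-hop reducer $H$ for $G_U$ with $\apxO{\mu}$ edges whose only negative arcs are reset arcs. (iv) \emph{Recurse} on $H$ (or, when profitable, on a further layered sparsification of its reset arcs), and then compute Johnson's potentials $d_U(V,\cdot)$ for $G_U$ by one shortest-path computation in the neutralized $H$, neutralizing $U$. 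Each randomized step succeeds with high probability, and there are $\apxO{\sqrt{k/h_0}}$ iterations.

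Collecting these costs, $T(m,n,k)$ obeys a recurrence whose shape is
\begin{align*}
  T(m,n,k) \leq \apxO{\sqrt{k/h_0}\parof{\mu h + k^{3/2}/h_0^{3/2} + T(m h_0,\, n h_0,\, h/h_0) + T(m h_0^2/h,\; n h_0^2/h,\; \sqrt{k/h_0}) + T(\mu,\, n,\, k_H)}},
\end{align*}
where $k_H \ll k$ counts the reset arcs of $H$. The plan is to verify that the ansatz $T(m,n,k) = \apxO{\mu k^{\alpha} + k^{\alpha+\gamma}}$ closes this recurrence once the two free exponents $\log_k h$ and $\log_k h_0$ are chosen to balance the three recursive branches against the leading $\mu k^{\alpha}$ term; I expect this parameter optimization to be the main obstacle, since with two free exponents and three recursive sub-instances (plus the purely-in-$k$ term) to equilibrate simultaneously, the cubic $p(x) = x^3 + 2x^2 + x - 2$, equivalently $\alpha(\alpha+1)^2 = 2$, emerges only after the right substitution. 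One must then separately propagate the $k^{3/2}/h_0^{3/2}$ term to see that it contributes exactly $k^{\alpha+\gamma}$ with $\gamma = 1 + \frac{\alpha^2(1-\alpha)}{2(2+\alpha)}$ — which is precisely the density at which $\mu k^{\alpha}$ overtakes it — and check that $h$ and $h_0$ stay in the admissible range $\sqrt k \geq h \geq h_0^2 \geq \bigOmega{\log n}$ at every level of the recursion, so that the nested subgraphs $G_{i_0} \subseteq G_{i_1} \subseteq G_U$ are genuine proper subgraphs and \reflemma{extract-sandwich-3} applies. A secondary subtlety, inherited from \cref{dense-sssp}, is ensuring that the potentials neutralizing $G_{i_1}$, and then those neutralizing $H$, extend to valid potentials on all of $G$ without reintroducing negative arcs along the cuts leaving $V_{i_1}$ or leaving the various negative reaches.
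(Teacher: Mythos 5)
Your approach diverges from the paper's in a fundamental way that I think creates a gap, not just a difference in route. The paper's twice-recursive algorithm does \emph{not} reuse the bootstrap machinery of \refsection{sparse-bootstrap}: it explicitly avoids \reflemma{hop-reducer-construction} and \reflemma{full-reducer-construction}, uses \reflemma{extract-sandwich-2} (single hop parameter $h$, remoteness parameter $b$) rather than \reflemma{extract-sandwich-3}, and builds an auxiliary graph $H$ from scratch. That $H$ consists of a $b$-layer copy of $G_h^+$ with a \emph{single} shortcut gadget through a \emph{single} sample $X$ of size $\apxO{|U|/b}$, plus reset arcs already sparsified to a sample $U_0$ of size $\apxO{|U|/h}$, and exit arcs. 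The shortcut-arc lengths are obtained by one recursive neutralization of $G_h$ (which has $m/b$ edges). The resulting $H$ is a genuine distance-preserving sparsifier with $\apxO{\sqrt{k/h}}$ negative arcs, and the iteration costs $\apxO{mh + m\sqrt{kh}/b^2 + \Time{mh,b} + \Time{m/b,\sqrt{kh}} + \Time{m,\sqrt{k/h}}}$, repeated $\sqrt{k/h}$ times. The cubic $p(\alpha)=0$ falls out of balancing term 3 ($\Time{m,\sqrt{k/h}}$) with independent choices of $h$ and $b$; there is no hierarchy of levels $i_0,\dots,i_1$ and no $k^{3/2}/h_0^{3/2}$ term in the recursion at all.

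The concrete gap in your version is step (iii)--(iv). Applying \reflemma{hop-reducer-construction} with gadgets only for levels $i_0,\dots,i_1$ and base graph $G_U^+$ does not give you something you can recurse on: that construction's only negative arcs are reset arcs, but there are $\bigO{|U|(i_1-i_0)}$ of them from the shortcut gadgets plus $\bigO{h_0 \cdot |V_{i_0}|}$ of them from the layered gadget (the paper adds reset arcs $(v_{i_0,\eta},v_i)$ for \emph{every} $v\in V_{i_0}$ and $\eta\in[h_0]$, not just for $u\in U$, because in Case 1 of that proof the exit vertex $y$ need not be a negative vertex). That is far more negative arcs than $\sqrt{k/h_0}$, so recursing on $H$ gives a problem roughly as large as the original. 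Your parenthetical remedy --- layered sparsification of these reset arcs --- is not a plug-and-play fix: the correctness argument of \reflemma{layered-sparsification} requires the walk embedding to visit the sampled reset vertices once every $h$ hops of the \emph{original} walk, whereas the reset arcs in \reflemma{hop-reducer-construction} are used at exit vertices $y$ that depend on the prefix length and need not be negative vertices at all. You would in effect have to redesign the construction, and the redesign you'd arrive at is essentially the paper's simpler one-shot shortcut graph. The right move is to scrap \reflemma{hop-reducer-construction} entirely and instead: take $b$ layers of $G_h^+$, one sample $X$ of size $\apxO{|U|/b}$ with two-sided shortcut arcs (clamped as $\max{d_{G_h}(u,x), d^{h/2}(V,x)}$ and $\max{d_{G_h}(x,v), d^{h}(V,v)-d^{h/2}(V,x)}$), and one sample $U_0$ of size $\apxO{|U|/h}$ for reset arcs --- this is what makes the recursion on $H$ legitimate. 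A second, smaller issue: the $k^{3/2}/h_0^{3/2}$ term you carry over from \cref{dense-sssp} comes from \reflemma{full-reducer-construction}'s $\sizeof{U}^3\log^2 n/h_0^4$ term, which only arises when you chain distance-estimate computations across all levels up to $L$; once you drop the bootstrap, that cost does not appear, and your recurrence is not a correct model of your own algorithm.
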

\begin{proof}
  More precisely, we prove that a graph with $m$ edges and $k$
  negative edges, and $m \geq k^{\gamma}$, can be neutralized with
  high probability in $\apxO{m k^{\alpha}}$ randomized time. The
  overall claim follows then follows from substituting $k = n$ and
  running Dijkstra's algorithm in the neutralized graph.

  As with all other algorithms we've discussed, we neutralize the negative arcs
  iteratively. Let $k$ be the number of negative arcs/vertices at the
  beginning of an iteration. Let
  $h = \apxTheta{k^{\alpha^2 / (2 + \alpha)}}$ and
  $b = \apxTheta{k^{\parof{1-\alpha}}}$. ($h$ and $b$ are balanced in
  hindsight.)  We first invoke \cref{extract-sandwich-2} to either
  directly neutralize $\apxTheta{\sqrt{kh}}$ vertices (and repeat),
  identify a negative cycle (and exit), or obtain an $(h,b)$-remote
  set $U$ of size $\sizeof{U} = \apxTheta{\sqrt{kh}}$.

  We continue in the third case. For ease of notation let $G = G_U$
  for the rest of the iteration.  Similar to the recursive
  sparsification algorithm, we construct a weighted graph
  $H = (V_H, E_H)$, potentials $\pote: V_H \to \reals$, and maps
  $\mapstart, \mapend: V \to V_H$ such that with high probability, (a)
  $H_{\pote}$ has $\apxO{m}$ edges and $\apxO{\sqrt{k/h}}$ negative
  edges, and (b)
  $\distance{\mapstart{s}, \mapend{t}}_H = \distance{s,t}$ for all
  $s,t \in V$.  We neutralize $H_{\pote}$ recursively, use the
  neutralized graph to compute Johnson potentials neutralizing $G$,
  and conclude the iteration.

  Next we describe and analyze the construction of the auxiliary graph
  $H$. The main point of $H$ is that it effectively preserves
  distances in $G$ while sparsifying the number of negative edges. One
  can interpret the following construction as extending the layered
  graph in \refsection{layered-sparsification} with ``shortcut arcs''
  from the ``shortcut gadgets'' in \refsection{sparse-bootstrap}.

  Let $V_h$ be the $h$-hop negative reach of $U$, and let $G_h$ be the
  subgraph induced by $V_h$. Let $X$ sample
  $\bigO{\sizeof{U}\log{n}/b}$ negative vertices uniformly at random.
  To construct $H$, we start with two disjoint copies of $G^+$,
  denoted $G_0$ and $G_b$; $b-1$ copies of $G_h^+$, denoted
  $G_1,G_2,\dots,G_{b-1}$; and a disjoint copy of $X$, denoted
  $X'$. For a vertex $v$ and index $i$, we let $v_i$ denote the copy of
  $v$ in $G_i$ (where applicable). For a sampled vertex $x \in X$, we
  let $x'$ denote the auxiliary vertex in $X'$.

  At a high level, $H$ is a layered graph from $G_0$ to $G_b$,
  with additional shortcuts from $G_0$ to $G_b$ via $X$, and
  sparsified reset arcs from $G_b$ back to $G_0$. Specifically, for
  each negative arc $(u,v)$ and index $i \in [b]$, we have a
  ``forward'' arc $(u_{i-1},v_i)$ of the same length. For each vertex
  $v \in V_h$ and index $i \in [b]$, we have length-$0$ ``self'' arcs
  $(v_{i-1},v_i)$. For each arc $(x,y) \in \outcut{V_h}$ leaving the
  $h$-hop negative reach, we add an ``exit'' arc $(x_b, y_0)$ of
  length $\len{x,y}$. Next we add the ``shortcut'' arcs. For $u \in U$
  and $x \in X$, we add an arc $(u,x)$ of length
  $\max{d_{G_h}(u,x), d^{h/2}(V,x)}$. For $x \in X$ and
  $v \in \Heads$, we add an arc $(x,v)$ of length
  $\max{d_{G_h}(x,v), d^{h}(V,v) - d^{h/2}(V,x)}$.  Lastly, let
  $U_0 \subseteq U$ sample $\bigO{\sizeof{U} \log{n} / h}$ negative
  vertices uniformly at random. For each $u \in U_0$, we add a
  length-$0$ ``reset'' arc $(u_b,u_0)$.

  \begin{center}
    \includegraphics{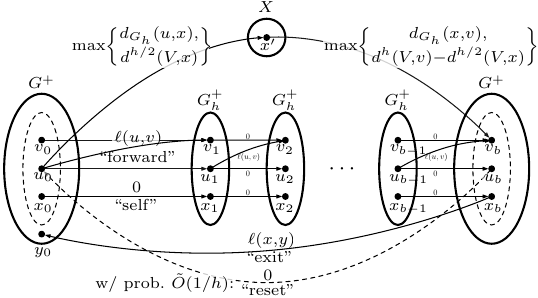}
  \end{center}

  $H$ has $\bigO{m + kh\log{n}/b} \leq \bigO{m}$ edges, noting that
  the second term is at most $m_0$ for our choices of $h$, $b$, and
  $m_0$.  Every arc $(u_i,v_j)$ in $H$ is either a copy of an arc
  $(u,v)$ in $G$, a self-arc (with $u = v$), or a shortcut
  arc. Obviously, a shortcut arc $(u_0,x')$ or $(x',v_b)$ has length
  at least $\distance{u,x}_{G_h} \geq \distance{u,x}$ and
  $\distance{x,v}_{G_h} \geq \distance{x,v}$.  Meanwhile, exactly as
  argued in \cref{lemma:reducer->distance-estimates}, we have
  \begin{math}
    \len{u_0,x'}_{H} \leq \hopd{h/2}{u,x}_{G_h}
  \end{math}
  and
  \begin{math}
    \len{x',v_b}_{H} \leq \hopd{h/2}{x,v}_{G_h}
  \end{math}
  for all $u \in U$, $v \in \Heads$, and $x \in X$.

  Define potentials $\pote$ over $H$ by
  \begin{math}
    \pote{v_0} = 0
  \end{math}
  for $v \in V$, $\pote{v_i} = \hopd{i}{V,v}$ for $v \in V_h$ and
  $i \in [b-1]$, $\pote{v_b} = \hopd{h}{V,v}$ for $v \in V$, and
  $\pote{x'} = \hopd{h/2}{V,x'}$. It
  is straightforward to verify that $\pote$ is valid over $H$ and
  neutralizes all the negative arcs except the $\sizeof{U_0}$ reset
  arcs. As in \reflemma{hop-reducer-construction}, the shortcut arcs
  are neutralized because $\len{u_0,x'}_{H} \geq \hopd{h/2}{V,x}$ and
  $\len{x',v_b}_{H} \geq \hopd{h}{V,v} - \hopd{h/2}{V,x}$ for
  $u \in U$, $v \in \Heads$, and $x \in X$.

  We first analyze the distance-preserving correctness of $H$, and
  discuss the running time to construct $H$ after.  We claim that with
  high probability, $\distance{s_0,t_b}_{H} = \distance{s,t}_{G}$ for
  all $s, t \in V$. We have
  $\distance{s_0,t_b}_{H} \geq \distance{s,t}$ for all $s,t \in V$
  because every arc from an auxiliary copy of a vertex $u$ to an
  auxiliary copy a vertex $v$ has length at least $d(u,v)$.

  The reverse inequality follows from essentially the same reasons as
  \cref{lemma:hop-reducer-construction,lemma:layered-sparsification}.
  First, observe that $H$ contains the layered sparsification graph
  from \cref{lemma:layered-sparsification}, with $b-1$ intermediate
  layers of the $h$-hop negative reach, as a subgraph. The same
  argument as in the proof of \cref{lemma:layered-sparsification}
  proves that
  \begin{align*}
    \distance{s_0,t_b}_{H} \leq \hopd{b}{s,t}
  \end{align*}
  for all $s,t \in V$.

  Next we extend the inequality above to
  \begin{math}
    d_{H}(s_0,t_b) \leq \hopd{h/2}{s,t}
  \end{math}
  for all $s,t \in V$.  To this end, we prove that
  \begin{math}
    d_{H}(s_0,t_b) \leq \hopd{h/2}{s,t}
  \end{math}
  for all $s, t \in V$ and $\eta \leq h/2$, by induction on $\eta$.

  Let $W: s \leadsto t$ be a proper $\eta$-hop walk of length
  $\hopd{\eta}{s,t}$ for $b < \eta \leq h/2$.  Here we have two cases
  depending on whether or not $W$ stays in $G_h$ between its first and
  last hop.

  In the first case, suppose $W$ stays in $G_h$ between the first and
  last hop. Here the argument is the same as case 2.b of the proof of
  \cref{lemma:hop-reducer-construction}, observing that $X$ hits at
  least one hop in $W$ with high probability, using the shortcut arcs
  to route the subwalk between the first and last hops of $W$.

  In the second case, suppose $W$ leaves $G_h$ between its first and
  last hop. Here the argument is the same as the argument from case 2
  of the proof of \cref{lemma:layered-sparsification}, splitting $W$
  at the arc $(x,y)$ leaving $G_h$, embedding the prefix and suffix by
  induction on $\eta$, and concatenating them with the exit arc
  $(x_b,y_0)$.

  Thus
  \begin{math}
    \distance{s_0,t_b}_{H} \leq \hopd{h/2}{s,t}
  \end{math}
  for all $s,t \in V$. The inequality extends to all hop lengths by
  the exact same argument as in the proof of
  \reflemma{layered-sparsification}: with high probability, $U_0$ hits
  any $(s,t)$-shortest walk once every $h/2$ hops with high
  probability. We embed each $h/2$-hop segment through $H$, and
  connect the embeddings with length-$0$ reset arcs.

  It remains to analyze the running time of the algorithm.  Consider
  the construction of $H$. It takes $\bigO{h \mu}$ to compute the
  distances $\hopd{i}{V,v}$, $i \leq h$, used to compute $\pote$ and
  to label the shortcut arcs. The other nontrivial step is computing
  the distances $\distance{u,x}_{G_h}$ and $\distance{x,v}_{G_h}$ for
  $x \in X$, $u \in U$, and $v \in \Heads$. To this end, we
  recursively neutralize $G_h$, and then compute shortest paths to and
  from each $x \in X$ in the neutralized graph. This takes one
  recursive call to a graph with $m/b$ edges and $\sizeof{U}$ negative
  edges, and $\sizeof{X}$ nonnegative shortest path computations in a
  graph of $m/b$ edges.

  Let $\Time{m,k}$ denote the running time to neutralize a graph with
  $m$ edges and $k$ negative edges. $\Time{m,k}$ is modeled
  recursively by
  \begin{align*}
    \Time{m,k} =                %
    \apxO{                      %
    \sqrt{\frac{k}{h}}\parof{   %
    \Time{mh, b} + \Time{m/b, \sqrt{kh}} + %
    \Time{m, \sqrt{\frac{k}{h}}} + \frac{m \sqrt{kh}}{b^2}}
    },                           %
  \end{align*}
  where we note that the recursive calls will also be to subproblems
  in the dense regime.  The recursion is solved by
  $\Time{m,k} \leq \apxO{m k^{\alpha}}$.
\end{proof}

For very sparse graphs with $m < n^{1+\gamma}$, similar to
\cref{sparse-sssp} for the bootstrapping algorithm, we used layered
sparsification to reduce to the dense case and then apply
\reflemma{twice-recursive-dense}.
\begin{lemma}
  \labellemma{twice-recursive-sparse} For $m \leq n^{\gamma}$,
  single-source shortest paths with real-valued edge lengths can be
  computed with
  $\apxO{\parof{mn}^{\frac{\alpha + \gamma}{1 + \gamma}}}$ randomized
  time, where $\frac{\alpha+\gamma}{1 + \gamma} \approx 0.849861$.
\end{lemma}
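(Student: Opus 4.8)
The plan is to reduce to the dense case \reflemma{twice-recursive-dense}, exactly as \cref{sparse-sssp} reduces to \cref{dense-sssp}: when the input is too sparse, use layered sparsification to trade a modest increase in the number of edges for a proportional decrease in the number of negative edges, landing in the dense regime, and then lift the resulting potentials back. More precisely, I would show that a preprocessed graph with $m$ edges and $k$ negative vertices can be neutralized with high probability in $\apxO{m k^{\alpha} + (mk)^{(\alpha+\gamma)/(1+\gamma)}}$ randomized time for all $m,k$: when $m \ge k^{\gamma}$ this is immediate from \reflemma{twice-recursive-dense}, and when $m < k^{\gamma}$ it will follow from the construction below. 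Substituting $k = n$, running Dijkstra in the neutralized graph, and noting that the second term dominates precisely when $m \le n^{\gamma}$, then yields the stated bound.

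So assume $m < k^{\gamma}$. First I would apply \reflemma{layered-sparsification} with $r = 1$, i.e.\ using $h$ layers of $G$ itself (so that $U = N$ and $G_U = G$), for a parameter $h$ to be chosen. This produces a graph $H$ with $m_H = \apxO{hm}$ edges and $n_H = \apxO{hn}$ vertices, valid potentials $\varphi : V_H \to \reals$ under which $H_{\varphi}$ has only $k_H = \apxO{k\log n / h}$ negative edges, and maps $\pi_0,\pi_1 : V \to V_H$ with $d_G(s,t) = d_H(\pi_0(s),\pi_1(t))$ for all $s,t \in V$.

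Next I would choose $h$ as the smallest value that pushes $H$ into the dense regime of \reflemma{twice-recursive-dense}, i.e.\ so that $m_H \ge k_H^{\gamma}$. Since $m_H = \apxTheta{hm}$ and $k_H = \apxTheta{k/h}$, this condition amounts to $h^{1+\gamma} \ge k^{\gamma}/m$ up to polylogarithmic factors, so I would take $h = \apxTheta{(k^{\gamma}/m)^{1/(1+\gamma)}}$, which is at least $1$ precisely because $m < k^{\gamma}$. Then I would neutralize $H_{\varphi}$ by \reflemma{twice-recursive-dense} to obtain potentials $\psi$ neutralizing $H_{\varphi}$, and finally neutralize $G$ by computing Johnson's potentials $d_G(V,v) = d_H(\pi_0(V),\pi_1(v))$ with a single Dijkstra computation in the nonnegatively reweighted graph $H_{\varphi,\psi}$ (attaching a super-source to $\pi_0(s)$ for every $s \in V$), in $\bigO{m_H + n_H\log n_H}$ time.

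For the running time, neutralizing $H$ costs $\apxO{m_H k_H^{\alpha}} = \apxO{hm\,(k/h)^{\alpha}} = \apxO{m k^{\alpha} h^{1-\alpha}}$; plugging in $h = \apxTheta{(k^{\gamma}/m)^{1/(1+\gamma)}}$ and using the identity $1 - \tfrac{1-\alpha}{1+\gamma} = \tfrac{\alpha+\gamma}{1+\gamma}$ gives $\apxO{(mk)^{(\alpha+\gamma)/(1+\gamma)}}$, with the sparsification and Johnson steps of strictly lower order. Each randomized subroutine succeeds with high probability and there are only polynomially many, so the whole algorithm does as well. There is no deep obstacle here: the only point needing care is the parameter balancing, namely confirming that the chosen $h$ leaves $H$ exactly on the dense side of the threshold $m_H \ge k_H^{\gamma}$ (so that \reflemma{twice-recursive-dense} applies) and that $h \ge 1$, both of which follow from $m < k^{\gamma}$. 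In this sense the lemma is essentially a corollary of \reflemma{twice-recursive-dense} together with \reflemma{layered-sparsification}, mirroring how \cref{sparse-sssp} follows from \cref{dense-sssp}.
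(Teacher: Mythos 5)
Your proposal is correct and follows essentially the same route as the paper: apply \reflemma{layered-sparsification} with $r=1$ and $h = \apxTheta{(k^\gamma/m)^{1/(1+\gamma)}}$ layers to land in the dense regime, then invoke \reflemma{twice-recursive-dense} and lift back via Johnson's potentials. Your exposition is somewhat more explicit (noting $h \geq 1$ iff $m < k^\gamma$, spelling out the super-source Dijkstra step), but the decomposition and parameter choice are identical.
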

\begin{proof}
  Suppose there are $k \leq n$ negative edges; as with
  \reflemma{twice-recursive-dense}, we directly analyze the time to
  neutralize a graph with $m$ edges and $k$ negative edges.  Let
  $h = \apxO{(k^{\gamma}/m)^{1/1+\gamma}}$. As described in the proof
  of \cref{sparse-sssp}, let $H$ be the graph obtained by layering $G$
  $h$ times and sparsifying the negative edges by a factor of
  $\bigO{h / \log n}$.  $H$ has
  $m_H \defeq \bigO{mh} = \apxO{\parof{mk}^{\gamma/ 1 + \gamma}}$
  edges and $k_H = \apxO{k/h} = \bigO{\parof{mk}^{1/(1+\gamma)}}$
  negative edges.  In particular, $m_H \geq k_H^{\gamma}$. By
  \reflemma{twice-recursive-dense}, we neutralize $H$ with high
  probability in
  $\apxO{m_H k_H^{\alpha}} = \apxO{\parof{m n}^{\parof{\alpha +
        \gamma}/\parof{1 + \gamma}}}$ randomized time, and use the
  neutralized $H$ to compute distances for $G$.
\end{proof}

Together, \reflemma{twice-recursive-dense} and
\reflemma{twice-recursive-sparse} give the running times in
\reftheorem{sssp-updated}.

\begin{remark}
  We were aware of the ``twice recursive algorithm'' when preparing
  the initial version of this paper, but it was almost completely
  superseded by the bootstrapping algorithm.  With the previous
  running time of \reflemma{extract-sandwich} for extracting a remote
  set, it only eked out a better running time for $m < n^{1.0235}$, by
  a factor of at most $n^{0.000107}$ --- a comically small margin of
  improvement.  We felt at the time that it wasn't worth the added
  complexity to the exposition.  Evidently, the approach becomes more
  compelling under the new bounds of \cref{extract-sandwich-2}. We
  found it interesting to see the twice-recursive approach become
  relatively stronger, while the bootstrapping approach apparently
  loses its edge, with the new betweenness-reduction
  subroutine.
\end{remark}

\end{document}

%%% Local Variables:
%%% mode: latex
%%% TeX-master: t
%%% End: